\newcommand{\Figwidth}{\columnwidth}%
\def\twocolbreak{\nonumber\\ &}%
\def\twocolnewline{\nonumber\\}%
\def\twocolnewlineonly{\\}%
\def\twocolAlignMarker{&}%
\newcommand{\Figwidth}{4.5in}%
\def\twocolbreak{}%
\def\twocolnewline{}%
\def\twocolnewlineonly{}%
\def\twocolAlignMarker{}%
\def\hfeat{h_{m_{k}}^{k}}
\begin{document}
\title{Community Detection with Side Information: Exact Recovery under the Stochastic Block Model}

\author{Hussein Saad, {\em Student Member, IEEE} and Aria Nosratinia, {\em Fellow, IEEE}
\thanks{This work was supported in part by the NSF grant 1711689.}
\thanks{The authors are with the Department of Electrical Engineering, University of Texas at
    Dallas, Richardson, TX 75083-0688 USA, E-mail:
    hussein.saad@utdallas.edu; aria@utdallas.edu.}
\thanks{Some of the results of this paper appeared in the Allerton Conference on Communications, Control, and Computing 2017.}
}
\maketitle

\newtheorem{theorem}{Theorem}
\newtheorem{lemma}{Lemma}
\newtheorem{remark}{Remark}
\newtheorem{definition}{Definition}

\def\A{A}
\def\B{\mathbf B}
\def\D{\mathbf D}
\def\Dset{\mathbb D}
\def\E{E}
\def\d{d}
\def\G{\mathbf G}
\def\g{\mathbf g}
\def\Gset{\mathcal G}
\def\H{\mathbf H}
\def\h{\mathbf h}
\def\Hset{\mathbb H}
\def\I{\mathbf I}
\def\J{J}
\def\Jset{\mathbb J}
\def\K{K}
\def\l{\ell}
\def\N{N}
\def\M{M}
\def\P{\mathbf P}
\def\Q{Q}
\def\Qset{\mathbb Q}
\def\R{R}
\def\q{q}
\def\r{r}
\def\S{\mathcal{S}}
\def\T{T}
\def\u{\mathbf u}
\def\U{\mathbf U}
\def\Uset{\mathbb U}
\def\v{\mathbf v}
\def\V{\mathbf V}
\def\W{\mathbf W}
\def\w{\mathbf w}
\def\X{\mathbf X}
\def\x{\mathbf x}
\def\Y{\mathbf Y}
\def\y{\mathbf y}
\def\Z{\mathbf Z}
\def\z{\mathbf z}
\def\bigO{O}
\def\littleO{o}
\def\ZeroMat{\mathbf{0}}


\begin{abstract}
The community detection problem involves making inferences about node labels in a graph, based on observing the graph edges. This paper studies the effect of additional, non-graphical side information on the phase transition of exact recovery in the binary stochastic block model (SBM) with $n$ nodes. When side information consists of noisy labels with error probability $\alpha$, it is shown that phase transition is improved if and only if $\log(\frac{1-\alpha}{\alpha})=\Omega(\log(n))$. When side information consists of revealing a fraction $1-\epsilon$ of the labels, it is shown that phase transition is improved if and only if $\log(1/\epsilon)=\Omega(\log(n))$. For a more general side information consisting of $K$ features, two scenarios are studied: (1)~$K$ is fixed while the likelihood of each feature with respect to corresponding node label evolves with $n$, and (2)~The number of features $K$ varies with $n$ but the likelihood of each feature is fixed. In each case, we find when side information improves the exact recovery phase transition and by how much. 
In the process of deriving inner bounds, a variation of an efficient algorithm is proposed for community detection with side information that uses a partial recovery algorithm combined with a local improvement procedure. 
\end{abstract}

\begin{keywords}
Community detection, Stochastic block model, Side information, Exact recovery.
\end{keywords}

\IEEEpeerreviewmaketitle

\section{Introduction}

The problem of learning or detecting community structures in random graphs has been studied in statistics~\cite{SBM_first,Min_max_SBM,stat1,stat2,stat3}, computer science~\cite{SCT,CS1,CS2,CS3,CS4} and theoretical statistical physics~\cite{Phys,Phys2}. Detection of communities on graphs is motivated by applications including finding like-minded people in social networks~\cite{social}, improving recommendation systems~\cite{recommendation}, and detecting protein complexes~\cite{protein}. Among the different random graph models~\cite{survey,survey1}, the stochastic block model (SBM) is widely used in the context of community detection\cite{exact_general_sbm}. This extension of the Erd\"{o}s-Renyi model consists of $n$ nodes that belong to two communities, each pair of nodes connected with probability $p$ if the pair belongs to the same community, and with probability $q$ otherwise. The prior distribution of the node labels is identical and independent, and often uniform (labels are equi-probable). The goal of community detection is to recover/detect the labels upon observing the graph edges. 

Random graphs experience measure concentration in the recovery of labels~\cite{exact_general_sbm}, i.e., for some underlying graph distributions, recovered labels will become reliable as the size of data set increases, and for others they do not. The boundary of this phenomenon is often described as a phase transition~\cite{exact_general_sbm}. The location of this phase transition and the set of graphs that fall inside the region described by it, is an important indicator of the broad class of graph-based problems that are reliably solvable in the context of community detection. Much of the theoretical work on community detection~\cite{weak_sbm1,weak_sbm,partial_alg,weak_sbm2,correlated,partial_sbm1,density,our,exact_sbm,Consistency,exact_general_sbm} concentrates on characterizing this phase transition and understanding its properties.

The literature on community detection has, for the most part, concentrated on purely graphical observations. However, in many practical applications, non-graphical relevant information is available that can aid the inference. For example, social networks such as Facebook and Twitter have access to much information other than the graph edges. A citation network has the authors' names, keywords, and abstracts of papers, and therefore may provide significant additional information beyond the co-authoring relationships. Figure~\ref{example2} illustrates standard community detection as well as community detection with side  information.  This paper presents new results on the utility of side information in community detection, in particular shedding light on the conditions under which side information can improve the phase transition of community detection, and the magnitude of the improvement.

\begin{figure}
\centering
\begin{minipage}{0.38\columnwidth}
\centering
\includegraphics[width=1.25in]{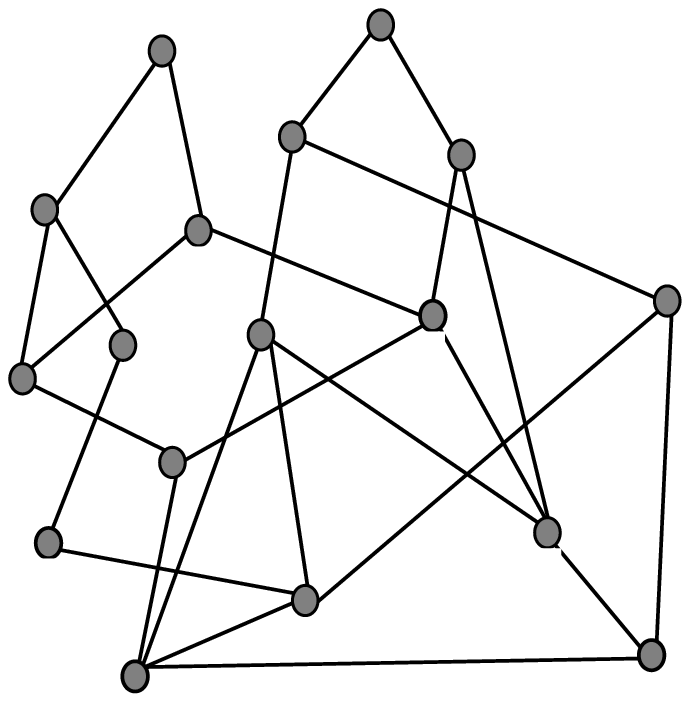}\\
\footnotesize{\em Observed Graph} 
\end{minipage}
 {$\displaystyle \xrightarrow[\text{detection}]{\text{community}}$}
\begin{minipage}{0.38\columnwidth}
\centering 
\includegraphics[width=1.4in]{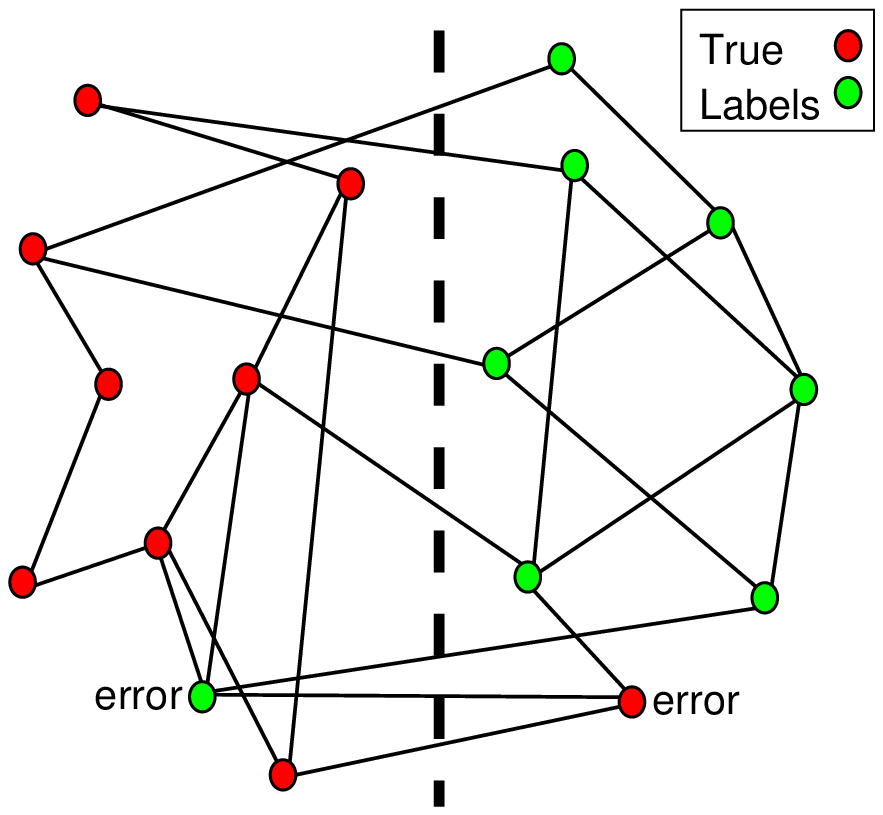}\\
\footnotesize{\em Detected communities}
\end{minipage}
\\[0.2in]
\centering
\begin{minipage}{0.38\columnwidth}
\centering
\includegraphics[width=1.25in]{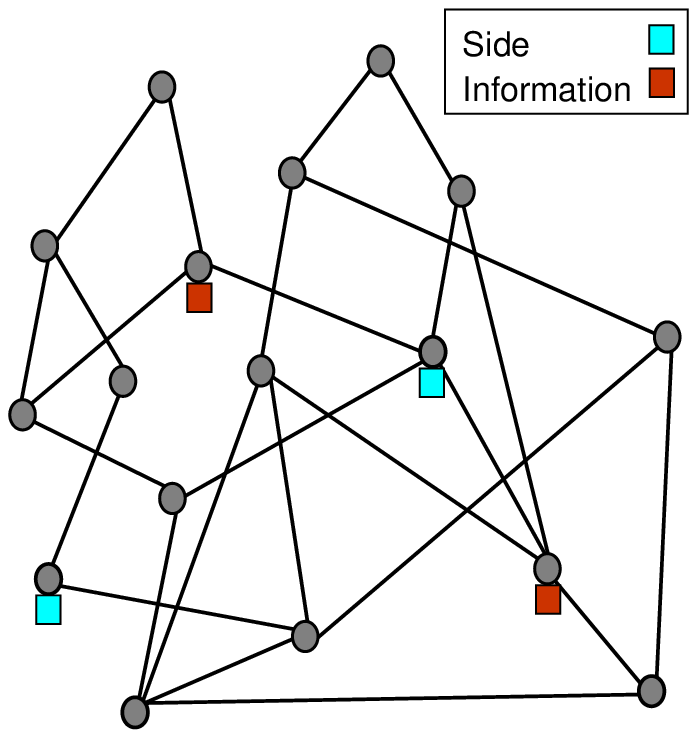}\\
\footnotesize{\em Graph + side information}
\end{minipage}
{$\displaystyle \xrightarrow[\text{detection}]{\text{community}}$}
\begin{minipage}{0.38\columnwidth}  
\centering 
\includegraphics[width=1.4in]{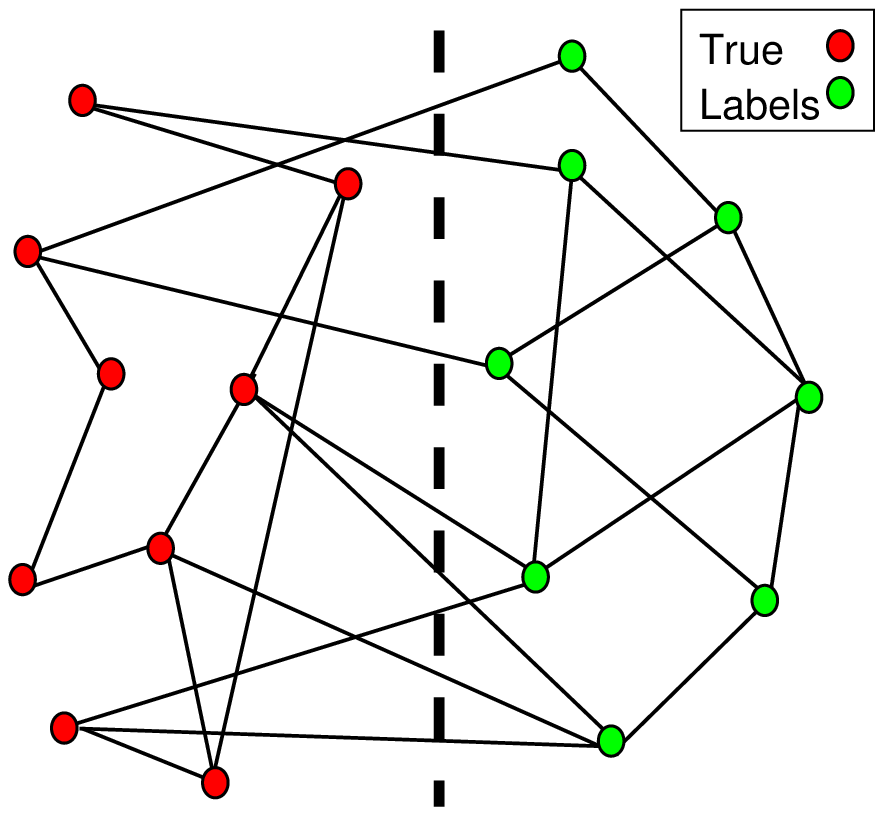}\\
\footnotesize{\em Enhanced detection}
\end{minipage}
\caption{(top) standard community detection (bottom) Community detection with side information} 
\label{example2}
\end{figure}

Community detection outcomes fall into several broad categories in terms of residual error as the size of the graph $n$ grows, enumerated here in increasing order of strength: Correlated recovery refers to community detection that performs better than random guessing~\cite{weak_sbm1,weak_sbm,partial_alg,weak_sbm2,correlated}. Weak recovery means the fraction of misclassified labels in the graph vanishes with probability converging to one~\cite{partial_sbm1,density,our}. Exact recovery means correct recovery of all nodes with probability converging to one~\cite{exact_sbm,Consistency,exact_general_sbm}. This paper concentrates on the exact recovery metric.\footnote{Formally, let $e_n$ denote the number of misclassified nodes. Then, correlated recovery means  $\lim_{n\rightarrow\infty}\mathbb{P}(\frac{e_n}{n} < 0.5)=1$. Weak recovery means $\lim_{n\rightarrow\infty}\mathbb{P}(\frac{e_n}{n} <\varepsilon)=1$ for any positive $\varepsilon$. Exact recovery means $\lim_{n \rightarrow\infty}\mathbb{P}(e_n= 0) =1$.}

A few results have recently appeared in the literature on the broader community detection problem in the presence of additional (non-graphical) information. Mossel and Xu~\cite{MoselXu:ACM16} studied the behavior of belief propagation detector in the presence of noisy label information. Cai {\em et al.}~\cite{Cai:BP-BEC} studied the effect of knowing a growing fraction of labels on correlated and weak recovery. Neither of~\cite{MoselXu:ACM16,Cai:BP-BEC} includes a converse, so they do not establish phase transition. Kadavankandy {\em et al.}~\cite{Kadavankandy:SingleCommunity} studied the single-community problem with noisy label observations, showing weak recovery in the sparse regime. Kanade {\em et al.}~\cite{Kanade:GlobalLocal} showed that partial observation of labels is unhelpful to the correlated recovery phase transition if a vanishing portion of labels are available. The exact recovery metric is not addressed in these works, and they do not establish a phase transition under side information.\footnote{Arguably the closest result in the literature to our work can be found in~\protect\cite[Theorem 4]{Abbe_1}, which is discussed in Section~\ref{sec:varying}.}

In the interest of completeness, we also mention the following works even though they have a very different perspective. In statistics, several works have appeared on model-matching to real data consisting of both graphical and non-graphical observations, where additional information such as ``annotation''~\cite{Newman:annotated}, ``attributes''~\cite{Yang:attribute}, or ``features''~\cite{Zhang:features} has been considered. These works aim at model matching to real (finite) data sets, and propose a parametric model that expresses the joint probability distribution of the graphical and non-graphical (attribute/feature) observations. Although the focus of these papers is very different from the present paper, they nevertheless show the interest of the broader community in modeling side-information for graph-based inference.

The following observations further motivate this work. For the exact recovery metric, the effect of side information has not been comprehensively studied. Even for correlated recovery and weak recovery, the effect of side information has only been studied for belief propagation, which is not enough to establish phase transition. In the context of binary labels, only binary side information (possibly with erasures) has been studied. Practical scenarios motivate the study of more general side information whose alphabet does not match the number/identity of communities. Also of interest is side information consisting of several (potentially non-binary) features, which has not been thoroughly investigated either in the context of belief propagation or maximum likelihood, although~\protect\cite[Theorem 4]{Abbe_1} opened the subject in a special setting. 

\section{System Model and Contributions}\label{sys.}

We consider the binary symmetric stochastic block model, with community labels denoted $1$ and $-1$. The number of nodes in the graph is denoted with $n$. The node labels are independent and identically distributed across $n$, with $1$ and $-1$ labels having equal probability. If two nodes belong to the same community, there is an edge between them with probability $p=a\frac{\log(n)}{n}$, and if they are from different communities, there is an edge between them with probability $q=b\frac{\log(n)}{n}$. Finally, for each node one or more scalar random variables are observed containing side information. Conditioned on node labels, the side information of different nodes are assumed to be independent of each other and of the graph edges. Three models for this side information are considered.

In the first model, for each node, a scalar side information is observed which is the true label with probability $(1-\alpha)$ and its complement (false) with probability $\alpha$, where $\alpha \in (0,0.5)$. In the second model, for each node, a scalar side information is observed which is the true label with probability $1-\epsilon$ or $0$ (erased) with probability $\epsilon$, where $\epsilon \in (0,1)$. In the third model, we consider side information consisting of $K$ random variables (features) with finite cardinalities $M_{k}$, $k \in \{1,\cdots,K\}$.

The observed graph is denoted by $G$, the vector of nodes' true assignment by $\boldsymbol{x}^{*}$, and the nodes' side information by vector $\boldsymbol{y}$ when each node has a scalar side information, or with collection of length-$n$ vectors $\boldsymbol{y_{k}}, \; k=1,\ldots,K$ when side information for each node consists of $K$ features. The goal is to recover the node assignment $\boldsymbol{x}^{*}$ from the observation of the graph $G$ and side information.

In this paper, exact recovery is considered in the dense regime, i.e., when $p = a \frac{\log n}{n}$ and $q = b \frac{\log n}{n}$ with constants $a \geq b > 0$. In this regime the exact recovery phase transition without side information is $(\sqrt{a} - \sqrt{b})^{2}> 2$~\cite{exact_sbm}. We investigate the question: when and by how much can side information affect the phase transition threshold of exact recovery?  The contributions of this paper are as follows:

\begin{itemize}

\item When side information consists of observing node labels with erasure probability $\epsilon \in (0,1)$, we show that if $\log(\epsilon)=o(\log(n))$, the phase transition is not improved by side information. On the other hand, if $\log(\epsilon) = -\beta\log(n) +o(\log(n))$ for some $\beta>0$, i.e., $O(\log(n))$, a necessary and sufficient condition for exact recovery is $(\sqrt{a} - \sqrt{b})^{2} + 2\beta> 2$. 
  
\item When side information consists of observing node labels with error probability $\alpha \in (0,0.5)$, if $c = \log(\frac{1-\alpha}{\alpha})$ is $o(\log(n))$, then the phase transition is not improved by side information. On the other hand, if $c = \beta\log(n) +o(\log(n)), \beta>0$, i.e., $O(\log(n))$, necessary and sufficient conditions for exact recovery are derived as follows:
  \[
  \begin{cases}
   \eta(a,b,\beta) > 2 &\text{when }   \beta < \frac{T(a-b)}{2} \\
\beta > 1 & \text{when } \beta > \frac{T(a-b)}{2}
    \end{cases}
  \]
  with the following parameters defined for convenience:
\begin{align}
  \eta(a,b,\beta) &\triangleq a+b+\beta-\frac{2\gamma}{T}+\frac{\beta}{T}\log(\frac{\gamma + \beta}{\gamma - \beta}) \label{eta}\\
  T&\triangleq\log(\frac{a}{b}), \quad \gamma \triangleq \sqrt{\beta^{2} + abT^{2}} \label{gamma}
\end{align}
An early version of this result appeared in~\protect\cite{our2}.

\item When side information consists of $K$ features each with finite and fixed cardinality, two scenarios are considered: (1) $K$ is fixed while the conditional distribution of each feature varies with $n$. In this scenario, we study how the quality of each feature must evolve as the size of the graph grows, so that phase transition can be improved. (2) $K$ varies with $n$ while the conditional distribution of features is fixed. In this scenario, the quality of the features is independent of $n$, and we study how many features are needed in addition to the graphical information, so that the phase transition can be improved.

\item
Sufficient conditions are provided via an efficient algorithm employing partial recovery and a local improvement using both the graph and the side information. The two-step recovery algorithm without side information appeared in~\cite{exact_sbm,exact_general_sbm,NIPS2016_6196}. In this paper, it is refined and generalized in the presence of side information.
\end{itemize}

\begin{remark}
In earlier community detection problems~\cite{exact_sbm,exact_general_sbm}, LLRs do not depend on $n$ even though individual likelihoods (obviously) do. This was very fortunate for calculating asymptotics. In the presence of side information, this convenience disappears and LLRs will now depend on $n$, creating complications in bounding error event probabilities en route to finding the threshold in the asymptote of large $n$. Overcoming this technical difficulty is part of the contributions of this paper.
\end{remark}

To illustrate the results of this paper, Figures~\ref{fig.1},~\ref{fig.2} show the error exponent for the side information consisting of partially revealed labels or noisy label observation, as a function of $\beta$. It is observed that the value of $\beta$ needed for recovery depends on $a, b$. For the partially revealed labels, when $(\sqrt{a} - \sqrt{b})^{2} < 2$, the critical $\beta$ is $1-\frac{1}{2}(\sqrt{a} - \sqrt{b})^{2}$. For noisy label observations, when  $(\sqrt{a} - \sqrt{b})^{2} < 2$, the value of critical $\beta$ can be determined as follows: if $\eta(a,b,\frac{T(a-b)}{2}) > 2$, then the critical $\beta$ is the solution to $\eta = 2$. On the other hand, if $\eta(a,b,\frac{T(a-b)}{2}) < 2$, then the critical $\beta$ is one.

\begin{figure}[t]
\centering
\includegraphics[width=\Figwidth]{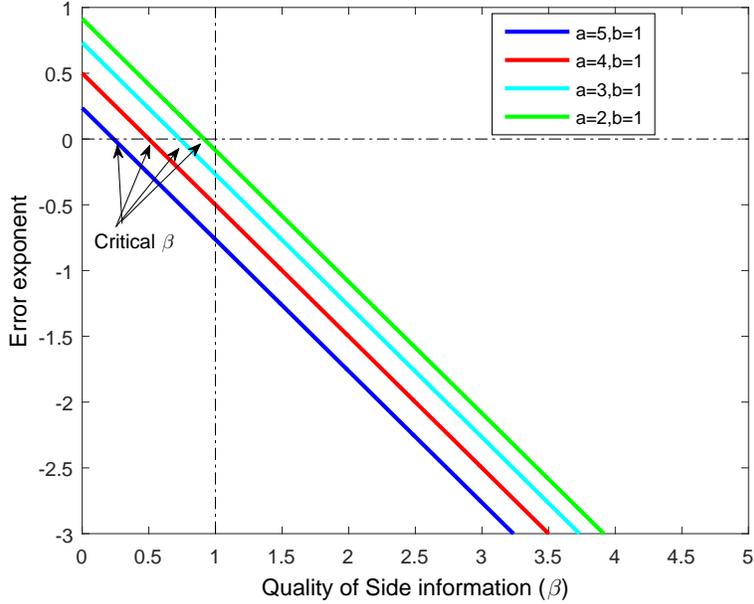}
\caption{Error exponent for noisy label observations as a function of $\beta$.}
\label{fig.1}
\end{figure}
\begin{figure}
\centering
\includegraphics[width=\Figwidth]{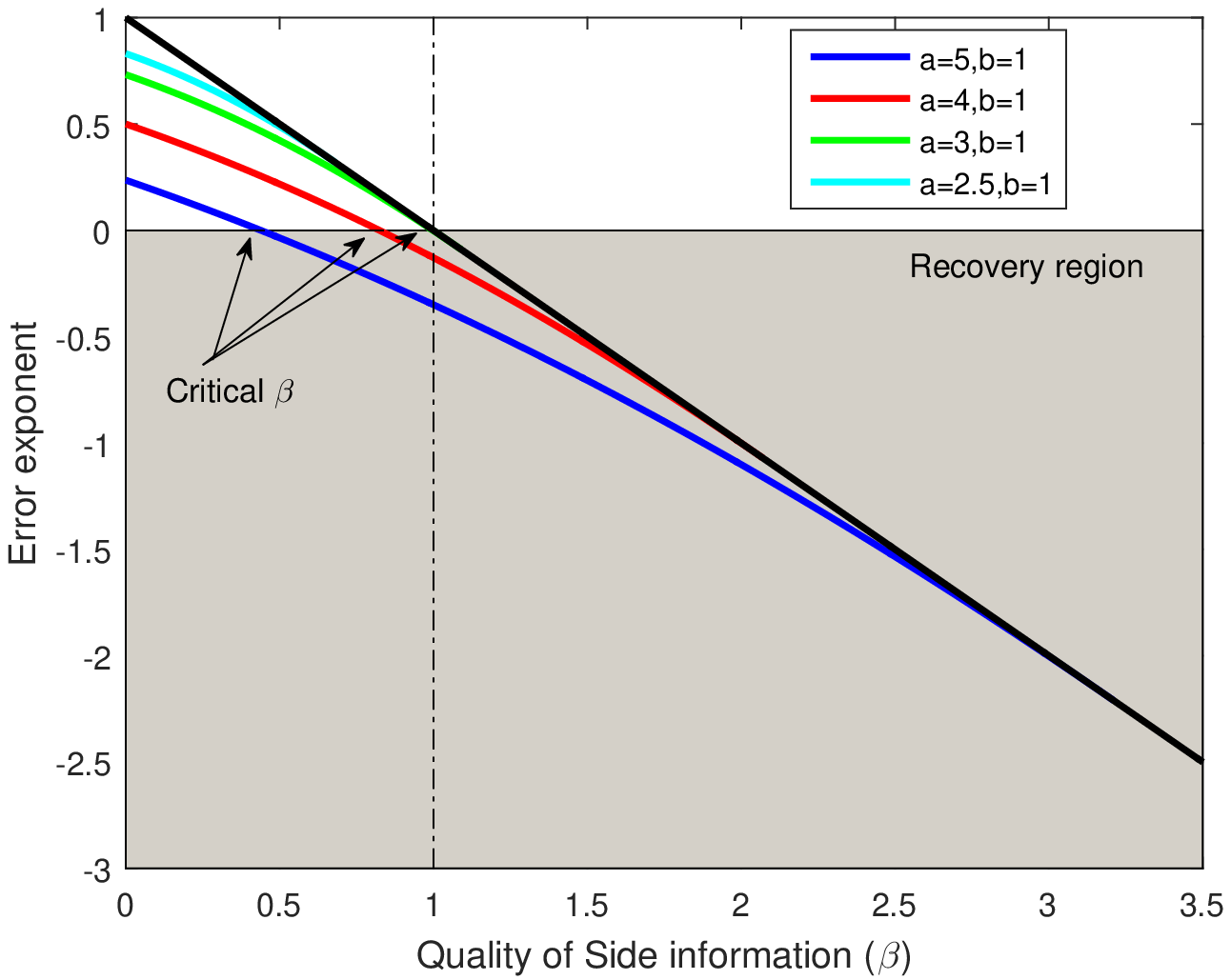}
\caption{Error exponent of partial label observation as a function of $\beta$.}
\label{fig.2}
\end{figure}

\section{Noisy Label Side Information}\label{sec1}

In this section, side information consists of a noisy version of the label that with probability $\alpha \in (0,0.5)$ fails to match the true label. 

We begin by calculating the maximum likelihood rule for detecting the communities under side information. The maximum likelihood detector {\em without} side information~\cite{exact_sbm} is the minimizer of the number of edges between two detected communities, subject to both detected communities having size $\frac{n}{2}$. The set of nodes belonging to the two communities are denoted with $A$ and $B$, i.e., $A \triangleq \{i : x_i=1\}$ and $B \triangleq \{i : x_i=-1\}$. $E(A)$ denotes the number of edges whose two vertices belong to community $A$, and $E(B)$ the number of edges whose two vertices belong to community $B$. The total number of edges in the graph is denoted $E_{t}$. Also, define:
\begin{align*}
J_+(A) &\triangleq \big | \{ i \in A : y_i=1 \}\big |\\
J_-(B) &\triangleq \big | \{ i \in B : y_i=-1 \}\big |
\end{align*}
Then, the log-likelihood function can be written as:
\begin{align}\label{main.rule.Full.noisy}
& \log\big(\mathbb{P}(G,\boldsymbol{y}|\boldsymbol{x})\big) \overset{(a)}{=} \log\big(\mathbb{P}(G|\boldsymbol{x})\big) + \log\big(\mathbb{P}(\boldsymbol{y}|\boldsymbol{x})\big)  \nonumber\\
= & \log\big( p^{E(A)+E(B)} q^{E_{t}-E(A)-E(B)} (1-p)^{2 {{\frac{n}{2}}\choose{2}} - E(A)-E(B)} \nonumber\\
& (1-q)^{\frac{n^{2}}{4} - E_{t}+E(A)+E(B)}  \big) + \log\big( (1-\alpha)^{J_{+}(A)+J_{-}(B)} \twocolbreak
\alpha^{n-J_{+}(A)-J_{-}(B)}  \big) \nonumber \\ 
 \overset{(b)}{=} & R + T\big(E(A)+E(B)\big)(1+o(1)) + c\big(J_{+}(A)+J_{-}(B)\big)
\end{align}
where $(a)$ holds because $G,\boldsymbol{y}$ are independent given $\boldsymbol{x}$. In $(b)$, all terms that are independent of $\boldsymbol{x}$ have been collected into a constant $R$, and  $\log(\frac{p(1-q)}{q(1-p)})$ has been approximated by $(1+o(1))T$, which is made possible because $(1-p), (1-q)$ both approach $1$ as $n\to\infty$. The difference between Eq.~\eqref{main.rule.Full.noisy} and the likelihood function without side information is the term $c\big(J_{+}(A)+J_{-}(B)\big)$ and a constant $n\log \alpha$ that is hidden inside $R$.

The following lemma characterizes a lower bound on the probability of failure of the maximum likelihood detector. Let $E[\cdot,\cdot]$ denote the number of edges between two sets of nodes.\footnote{For economy of notation, in the arguments of $E[\cdot,\cdot]$ we represent singleton sets by their single member.}
\begin{lemma}
  \label{Le.1}
  Let $A$ and $B$ denote the true communities. Define the following events:
  \begin{align}\nonumber
F & \triangleq \{ \text{Maximum Likelihood Detector fails} \} \\ \nonumber
F_{A} & \triangleq \{ \exists i \in A: T(E[i,B] - E[i,A]) - cy_{i} \geq T \} \\ 
F_{B} & \triangleq \{ \exists j \in B: T(E[j,A] - E[j,B]) + cy_{j} \geq T \} 
\end{align}
Then, $F_A \cap F_B \Rightarrow F$.
\end{lemma}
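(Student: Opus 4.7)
The plan is a local swap argument. Granted $F_A \cap F_B$, pick any $i \in A$ witnessing the condition in $F_A$ and any $j \in B$ witnessing the condition in $F_B$ (these are distinct since $A \cap B = \emptyset$). Form the alternative assignment $\boldsymbol{x}'$ obtained from $\boldsymbol{x}^*$ by flipping only the labels of $i$ and $j$, so the induced communities are $A' = (A \setminus \{i\}) \cup \{j\}$ and $B' = (B \setminus \{j\}) \cup \{i\}$. Exchanging a single node between the two communities preserves the balance $|A'| = |B'| = n/2$, so $\boldsymbol{x}'$ is a feasible ML candidate. The target is to show $\log \mathbb{P}(G, \boldsymbol{y} | \boldsymbol{x}') \geq \log \mathbb{P}(G, \boldsymbol{y} | \boldsymbol{x}^*)$, from which $F$ follows because $\boldsymbol{x}^*$ is then not the unique maximizer.

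The core computation is to evaluate the difference of the $\boldsymbol{x}$-dependent terms in \eqref{main.rule.Full.noisy} under this swap. Tracking only the edges incident to $i$ or $j$, and noting that the edge $(i,j)$, if present, lies across the cut in both the original and the swapped partition, one obtains
\begin{equation*}
\Delta[E(A) + E(B)] = (E[i,B] - E[i,A]) + (E[j,A] - E[j,B]) - 2\,\mathbb{1}[(i,j) \in G].
\end{equation*}
For the side-information term, because $y_k \in \{-1,+1\}$ in this model one has $\mathbb{1}[y_k = 1] - \mathbb{1}[y_k = -1] = y_k$, and a direct count yields $\Delta[J_+(A) + J_-(B)] = y_j - y_i$. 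Substituting both expressions in \eqref{main.rule.Full.noisy} and invoking the defining inequalities of $F_A$ and $F_B$, the cross terms $cy_i$ and $cy_j$ cancel and the log-likelihood difference is lower-bounded by $2T - 2T\,\mathbb{1}[(i,j) \in G] \geq 0$, which proves the claim.

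The one technical nuisance is the $(1+o(1))$ factor multiplying $T(E(A)+E(B))$ in \eqref{main.rule.Full.noisy}. Since $\Delta[E(A)+E(B)]$ is controlled by the degrees of $i$ and $j$, which are $O(\log n)$ with high probability, while $\log\frac{p(1-q)}{q(1-p)} - T = O(\log n / n)$, the induced slack in the inequality is $o(1)$ uniformly. Consequently, if $(i,j) \notin G$ the strict bound $\Delta \geq 2T - o(1) > 0$ holds for large $n$, while if $(i,j) \in G$ the swapped assignment matches $\boldsymbol{x}^*$ up to a vanishing gap; either way $\boldsymbol{x}^*$ fails to be the unique ML maximizer and $F$ occurs. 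Beyond this bookkeeping, the lemma is essentially a one-line consequence of the swap identity.
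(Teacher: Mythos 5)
Your proof takes the same swap approach as the paper: form $\hat A = A\setminus\{i\}\cup\{j\}$, $\hat B = B\setminus\{j\}\cup\{i\}$, compute the change in the log-likelihood under the swap, and lower-bound it by $2T(1-A_{ij})\ge 0$ using the defining inequalities of $F_A$ and $F_B$. Your edge and side-information bookkeeping — the identity for the change in $E(A)+E(B)$ (losing $E[i,A]$, $E[j,B]$ and gaining $E[j,A]$, $E[i,B]$, with the $(i,j)$ edge subtracted twice) and the identity $y_j - y_i$ for the change in $J_+(A)+J_-(B)$ — match the paper's computation line by line.

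The one place you diverge is the closing paragraph on the $(1+o(1))$ factor, and that part is not sound as written. You are right that the paper glosses over this factor (its Lemma~\ref{Le.1} computation drops the multiplier and treats the per-edge LLR as exactly $T$), but a ``vanishing gap'' argument does not close it. In the case $A_{ij}=1$ you claim the swapped assignment ``matches $\boldsymbol{x}^*$ up to a vanishing gap; either way $\boldsymbol{x}^*$ fails to be the unique ML maximizer.'' A gap of $o(1)$ in the \emph{wrong} sign still leaves $\boldsymbol{x}^*$ strictly ahead of $\hat A,\hat B$, and hence does not produce an ML failure. Writing $S=(E[i,B]-E[i,A])+(E[j,A]-E[j,B])$ and $T'=\log\frac{p(1-q)}{q(1-p)}>T$, the exact log-likelihood change when $A_{ij}=1$ is bounded below only by $(T'-T)(S-2)$, and $S<2$ is genuinely possible on $F_A\cap F_B$ (the side-information term $c(y_j-y_i)$ can absorb the slack), so this lower bound can be negative. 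Moreover, invoking a with-high-probability degree bound on $i,j$ is out of place here: the lemma asserts a pointwise implication between events, not a probabilistic one, so it must hold on the entirety of $F_A\cap F_B$. The clean fix is simply to read $T$ in $F_A,F_B$ (and in the lemma's proof) as the exact per-edge LLR $T'=\log\frac{p(1-q)}{q(1-p)}$ rather than the asymptotic surrogate $\log(a/b)$. With that convention the swap identity is an exact equality and the implication $F_A\cap F_B\Rightarrow F$ is deterministic; since $T'=T(1+o(1))$, nothing in the downstream asymptotics of Lemmas~\ref{Le.5}--\ref{Le.7} changes, and the extra bookkeeping you attempted becomes unnecessary.
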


\begin{proof}
Define two new communities $\hat{A} = A\backslash \{i\}\cup\{j\}$ and $\hat{B} = B\backslash \{j\}\cup\{i\}$. If $\log\big(\mathbb{P}(G,\boldsymbol{y}|\hat{A},\hat{B})\big) \geq \log\big(\mathbb{P}(G,\boldsymbol{y}|A,B)\big)$ it means maximum likelihood chooses incorrectly and therefore fails. We show that this happens under $F_A\cap F_B$.

Let $A_{ij}\sim Bern(q)$ be a random variable representing the existence of the edge between nodes $i$ and $j$. Then, using~\eqref{main.rule.Full.noisy}:
\begin{align}
\twocolAlignMarker \log\big(\mathbb{P}(G,\boldsymbol{y}|\hat{A},\hat{B})\big) \twocolnewline
&= R + T\big(E(\hat{A})+E(\hat{B})\big) + c\big(J_{+}(\hat{A})+J_{-}(\hat{B})\big) \nonumber \\ 
&= R + T\big(E(A)+E(B)\big) +c\big(J_{+}(A)+J_{-}(B)\big) -2TA_{ij} \nonumber \\ 
&  + T\big(E[j,A] - E[j,B] +  E[i,B]- E[i,A]\big) + c(y_{j}-y_{i}) \nonumber \\
&\overset{(a)}{\geq}  \log\big(\mathbb{P}(G,\boldsymbol{y}|A,B)\big) + 2T (1-A_{ij}) \nonumber\\
&\overset{(b)}{\geq} \log\big(\mathbb{P}(G,\boldsymbol{y}|A,B)\big)
\end{align}
where $(a)$ holds by the assumption that $F_{A} \cap F_{B}$ happened and $(b)$ holds because $(1-A_{ij}) \geq 0$ and $T \geq 0$. The inequality $(b)$ implies the failure of maximum likelihood.
\end{proof}


\subsection{Necessary Conditions}\label{LB.1}

\begin{theorem}\label{Th.1}
Define $c\triangleq \log(\frac{1-\alpha}{\alpha})$. The maximum likelihood failure probability is bounded away from zero if:
\[
  \begin{cases}
(\sqrt{a} - \sqrt{b})^{2}  < 2 \text{ when } c = o(\log(n)) \\
\eta(a,b,\beta) < 2 \text{ when }  c = (\beta+o(1))\log(n), 0<\beta <\frac{T(a-b)}{2} \\
   \beta<1 \qquad\quad \text{ when }  c = (\beta+o(1))\log(n), \beta >\frac{T(a-b)}{2} 
    \end{cases}
  \]
\end{theorem}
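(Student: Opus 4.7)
The plan is to use Lemma~\ref{Le.1} to reduce the converse to showing that $\mathbb{P}(F_A\cap F_B)$ is bounded away from zero, and to achieve this by a ``many bad pairs'' argument: with constant probability there is at least one $i\in A$ and one $j\in B$ such that swapping their labels strictly increases the likelihood. The approach proceeds by (i) isolating per-node indicator events, (ii) using a Paley--Zygmund second-moment bound to show each of $F_A$ and $F_B$ has probability $\Omega(1)$, and (iii) decoupling the two events via the fact that the edges each truly depends on are almost disjoint.

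Concretely, for each $i\in A$ let $X_i\triangleq\mathbf{1}\{T(E[i,B]-E[i,A])-cy_i\geq T\}$, and for $j\in B$ define $Y_j$ analogously. The $X_i$'s are identically distributed, with $X_i$ depending on the $n-1$ edges incident to $i$ and on the independent side information $y_i$. Two distinct $X_i$, $X_{i'}$ share at most the single edge between $i$ and $i'$, so $\mathrm{Cov}(X_i,X_{i'})=o\!\left(\mathbb{E}[X_i]^2\right)$ uniformly, and Paley--Zygmund gives $\mathbb{P}(F_A)\geq(1-o(1))\cdot\min\{1,\tfrac{n}{2}\mathbb{P}(X_1=1)\}$. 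Thus the converse reduces to the single-node estimate $\mathbb{P}(X_1=1)=\omega(1/n)$. To then combine with $F_B$, I would repeat the argument on a sub-collection $A'\subset A$, $B'\subset B$ with $|A'|=|B'|=\lceil n/\log n\rceil$ chosen so that the associated edge sets can be partitioned into pieces that are conditionally independent (after conditioning on the small number of edges between $A'$ and $B'$, which contribute vanishingly to the log-likelihood); this forces $F_{A'}$ and $F_{B'}$ to be asymptotically independent events each with probability $\Omega(1)$, yielding $\mathbb{P}(F_A\cap F_B)=\Omega(1)$.

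The core computation is thus the tight asymptotic of $\mathbb{P}(X_1=1)$. Write
\[
\mathbb{P}(X_1=1)=(1-\alpha)\,\mathbb{P}\!\left(U\geq 1+\tfrac{c}{T}\right)+\alpha\,\mathbb{P}\!\left(U\geq 1-\tfrac{c}{T}\right),
\]
where $U\triangleq E[1,B]-E[1,A]$ is, up to a $1\pm o(1)$ factor, the difference of two independent Poisson variables with means $\tfrac{b}{2}\log n$ and $\tfrac{a}{2}\log n$. For each tail I would apply a Chernoff upper bound together with a matching lower bound by the standard tilting/saddle-point argument for Poisson-difference large deviations. For a target level $k\log n$, optimizing $-\theta k+\tfrac{b}{2}(e^{\theta}-1)+\tfrac{a}{2}(e^{-\theta}-1)$ over $\theta$ yields $e^{\theta^\star}=(k+\sqrt{k^2+ab})/b$, and substituting back produces an exponent that, with $k=\beta/T$ and $c=\beta\log n$, is exactly $\eta(a,b,\beta)/2$ with $\eta,T,\gamma$ as defined in \eqref{eta}--\eqref{gamma}. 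In parallel, $\alpha=(1+e^{c})^{-1}=n^{-\beta+o(1)}$ contributes a term of exponent $\beta$, which dominates precisely when the target $1-c/T=-\tfrac{\beta}{T}\log n$ lies below the Poisson-difference mean $\tfrac{b-a}{2}\log n$, i.e.\ when $\beta>T(a-b)/2$; in that regime the second tail is $\Theta(1)$, so $\mathbb{P}(X_1=1)=n^{-\beta+o(1)}$. The three cases of the theorem then follow by checking when the resulting exponent is strictly less than $1$: $c=o(\log n)$ makes the side-information terms vanishing and recovers the bare SBM exponent $(\sqrt{a}-\sqrt{b})^2/2$; in the middle regime the exponent is $\eta(a,b,\beta)/2$; and in the large-$\beta$ regime it is simply $\beta$.

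The main obstacle is the precision required in the single-node estimate: as pointed out in the remark preceding the theorem, once side information is present the per-sample log-likelihood ratios scale with $n$, so the usual reduction to an $n$-independent Cram\'er exponent is unavailable and one must carry the $\log n$ scaling inside the Chernoff optimization and verify a matching lower bound uniformly in the regime $\beta\to T(a-b)/2$ where the two tails in $\mathbb{P}(X_1=1)$ are of comparable order. A secondary technical point is handling the $o(1)$ slack in the approximation $\log\bigl(p(1-q)/q(1-p)\bigr)=T(1+o(1))$ within both the Chernoff upper bound and the tilted lower bound, so that the resulting exponents are sharp. Once these are in hand, assembling the three cases is routine.
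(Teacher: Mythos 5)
Your proposal is correct and reaches the same conclusion as the paper, but via a genuinely different decoupling device. The paper removes the weak dependence among the per-node error events by construction: it restricts attention to a subset $H\subset A$ of size $n/\log^3 n$, redefines $F_i^H$ to count only edges from $i$ to $A\setminus H$ (so the $F_i^H$, $i\in H$, are \emph{exactly} independent), and introduces the auxiliary event $\Delta$ (bounding $E[i,H]$ by $\log n/\log\log n$) to transfer the conclusion back to the original $F_A$. You instead keep the full family of per-node indicators $X_i$ and run a second-moment (Paley--Zygmund) bound directly, observing that $X_i$ and $X_{i'}$ share only the single edge $\{i,i'\}$, which gives $\mathrm{Cov}(X_i,X_{i'})=q(1-q)\bigl(\mathbb{P}(X_i=1\mid A_{ii'}=0)-\mathbb{P}(X_i=1\mid A_{ii'}=1)\bigr)^2=O(q)\,\mathbb{E}[X_i]^2=o(\mathbb{E}[X_i]^2)$. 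That is a valid alternative and arguably cleaner, since it avoids the two-level construction ($H$, $\Delta$) entirely. Both routes reduce the converse to the identical single-node tail estimate $\mathbb{P}(X_1=1)=\omega(1/n)$, and your tilting computation --- optimizing $-\theta k+\tfrac b2(e^\theta-1)+\tfrac a2(e^{-\theta}-1)$, finding $e^{\theta^\star}=(k+\sqrt{k^2+ab})/b$ with $k=\beta/T$, and recovering $\eta(a,b,\beta)/2$ after the algebraic identity $\tfrac{\gamma-\beta}{aT}=\tfrac{bT}{\gamma+\beta}$ --- matches the paper's Lemma~\ref{Le.7}/Appendix~\ref{sec:app1-1} exactly (the paper works with a Chernoff/Cram\'er lower bound for the Bernoulli sum, Lemma~\ref{Le.10}, rather than a Poisson approximation, but the resulting exponent is the same). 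Your treatment of the third regime is in fact slightly more direct than the paper's: you note the second tail $\mathbb{P}(U\geq 1-c/T)$ is $\Theta(1)$ (indeed $\to1$) once the threshold drops below the mean, giving $\mathbb{P}(X_1=1)\geq\alpha(1-o(1))=n^{-\beta+o(1)}$ outright, whereas the paper bounds $1$ minus two upper tails and then has to invoke Lemma~\ref{Le.11} to close the case.

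One piece of your plan is unnecessary: the sub-collection $A',B'$ of size $n/\log n$ and the conditional-independence argument to show $F_A$ and $F_B$ are asymptotically independent. Since both $\mathbb{P}(F_A)$ and $\mathbb{P}(F_B)$ are $1-o(1)$ under your Paley--Zygmund bound (the denominator $1+\tfrac{2}{np_1}+\tfrac{\mathrm{Cov}}{p_1^2}\to 1$ when $np_1\to\infty$), the elementary union bound $\mathbb{P}(F_A\cap F_B)\geq\mathbb{P}(F_A)+\mathbb{P}(F_B)-1\to 1$ already finishes the argument, which is precisely what the paper uses in Lemma~\ref{Le.4}. You also correctly flag the two places that need care when writing this out in full: the $n$-dependence of the LLR (which forces you to carry $\log n$ inside the rate function rather than invoking an $n$-free Cram\'er exponent) and the $T(1+o(1))$ slack in $\log\bigl(p(1-q)/(q(1-p))\bigr)$. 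Both are handled in the paper by the $\delta=\log^{2/3}(n)/l$ tolerance in Lemma~\ref{Le.10}, and your proposal would need the analogous uniformity in the tilted lower bound.
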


\begin{proof}
Since $\boldsymbol{x^{*}}$ is generated uniformly, the ML detector is optimal in error probability. Hence, if ML fails with nonzero probability, every other detector must fail with nonzero probability. So it suffices to establish the error probability of ML. The main difficulty in bounding the error probability of ML is the dependency between the graph edges. To overcome this dependency, we follow steps that are broadly similar to~\cite{exact_sbm}, but our bounding techniques involve Chernoff type arguments and Cramer and Sanov large deviation principles that are more compact than combinatorial techniques of~\cite{exact_sbm}. 
\begin{definition}
\label{Def.1}
Let $H$ be a subset of $A$ with $|H| = \frac{n}{\log^{3}(n)}$ and define the following events for each node $i\in H$:
\begin{align*}
\Delta_{i} =& \big\{E[i,H] \leq \frac{\log(n)}{\log\log(n)}\big\} \nonumber\\ 
F_{i}^{H} =& \bigg\{ TE[i,A\backslash H] + cy_{i} + T + T\frac{\log(n)}{\log\log(n)}  \leq TE[i,B] \bigg\} 
\end{align*}
and the following events defined on $H$:
\begin{align*}
\Delta =&  \cap_{i\in H} \Delta_i \nonumber\\ 
F^{H} =& \cup_{i \in H} F_{i}^{H}\nonumber
\end{align*}
\end{definition}
\begin{lemma}\label{Le.4}
If $\mathbb{P}(F^{H}) \ge 1-\delta$ and $\mathbb{P}(\Delta) \geq 1-\delta$ for $\delta<\frac{1}{4}$, then there exists a positive $\delta'$ so that  $\mathbb{P}(F) \geq \delta'$.
\end{lemma}
\begin{proof}
Clearly $\Delta \cap F^{H} \Rightarrow F_{A}$. Hence,
\begin{equation}\nonumber
\mathbb{P}(F_{A}) \geq \mathbb{P}(F^{H}) + \mathbb{P}(\Delta) - 1 \geq 1-2\delta
\end{equation}
By the symmetry of the graph and the side information, $\mathbb{P}(F_{B}) \geq 1-2\delta$ as well. Also, by Lemma~\ref{Le.1} $F_{A} \cap F_{B} \Rightarrow F$. Then:
\begin{equation}\nonumber
\mathbb{P}(F) \geq \mathbb{P}(F_{A}) + \mathbb{P}(F_{B}) - 1 \geq 1-4 \delta
\end{equation}
For $\delta< \frac{1}{4}$, $\mathbb{P}(F)$ is bounded away from zero.
\end{proof}

\begin{lemma}\label{Le.5}
$\lim_{n\rightarrow\infty}\mathbb{P}(\Delta) = 1$
\end{lemma}

\begin{proof}
Let $W_{i} \sim Bern(p)$. Then:
\begin{align}
\mathbb{P}(\Delta_{i}^{c}) & = \mathbb{P}\bigg(\sum_{j=1}^{i-1} W_{j} + \sum_{j=i+1}^{\frac{n}{\log^{3}(n)}} W_{j} \geq \frac{\log(n)}{\log(\log(n))}\bigg) \nonumber\\ 
& \leq  \mathbb{P}\bigg(\sum_{j=1}^{\frac{n}{\log^{3}(n)}} W_{j} \geq \frac{\log(n)}{\log(\log(n))}\bigg) \nonumber\\ 
& \leq \Big(\frac{1}{e} \frac{\log^{3}(n)}{a\log(\log(n))}\Big)^{\frac{-\log(n)}{\log(\log(n))}} \nonumber
\end{align}
via a multiplicative form of Chernoff bound, stating that a sequence of $n$ i.i.d random variables $X_{i}$, $\mathbb{P}(\sum_{i=1}^{n} X_{i} \geq t\mu) \leq (\frac{t}{e})^{-t\mu}$, where $\mu = n\mathbb{E}[X]$. Thus, by union bound:
\begin{align}
&\mathbb{P}(\Delta)  \geq 1- \frac{n}{\log^{3}(n)} \Big(\frac{1}{e} \frac{\log^{3}(n)}{a\log(\log(n))}\Big)^{\frac{-\log(n)}{\log(\log(n))}}  \nonumber\\
= & 1- e^{\log(n) - 3\log(\log(n))} \twocolbreak e^{\bigg[\frac{\log(n) \log(ae)}{\log(\log(n))} - \frac{\log(n)}{\log(\log(n))}\big(3\log(\log(n)) - \log(\log(\log(n))) \big) \bigg]}  \nonumber\\
= & 1- e^{-2\log(n) + o(\log(n))}  \nonumber
\end{align}
\end{proof}

\begin{lemma}\label{Le.6}
For any $\delta\in(0,1)$ and for sufficiently large $n$, if $\mathbb{P}(F_{i}^{H}) > \frac{\log^{3}(n)}{n} \log(\frac{1}{\delta})$, then $\mathbb{P}(F^{H})  \geq 1-\delta$.
\end{lemma}

\begin{proof}
Because $F_{i}^{H}$ are i.i.d.:
\begin{align}\nonumber
\mathbb{P}(F^{H}) & = \mathbb{P}(\cup_{i \in H }F_{i}^{H}) = 1 - \mathbb{P}(\cap_{i \in H} (F_{i}^{H})^{c}) \\ 
&= 1-\big[\big(1-\mathbb{P}(F_{i}^{H})\big)^{\frac{1}{\mathbb{P}(F_{i}^{H})}}\big]^{(\frac{n\mathbb{P}(F_{i}^{H})}{\log^{3}(n)})}  \label{Eq:NotO1}\\
&>1-\big[\big(1-\mathbb{P}(F_{i}^{H})\big)^{\frac{1}{\mathbb{P}(F_{i}^{H})}}\big]^{-\log\delta} \nonumber
\end{align}
where the last inequality holds by the statement of the Lemma. If $\mathbb{P}(F_{i}^{H})$ is $o(1)$, then the quantity inside the bracket tends to $e^{-1}$ and the result follows.  If $\mathbb{P}(F_{i}^{H})$ is not $o(1)$, then from Eq.~\eqref{Eq:NotO1} it follows that $\mathbb{P}(F^{H})\rightarrow 1$ and again the result of the Lemma holds.
\end{proof}

\makeatletter%
\if@twocolumn%
\begin{table*}
\caption{Two-step community detection algorithm}\label{Alg.1}
\centering
\begin{tabular}{|p{6in}|}
\hline
1: Start with graph $G$ and side information $\boldsymbol{y}$\\

2: Generate an Erd\"{o}s-Renyi graph $H_{1}$ with edge probability $\frac{D}{\log(n)}$. Use it to partition $G$ into
$G_{1} = G \cap H_{1}$ and $G_{2} = G\cap H_1^c$.\\

3: Apply weak recovery algorithm~\cite{partial_alg} on $G_{1}$, calling the resulting communities $A'/B'$.\\

4:  Initialize $\tilde{A}\leftarrow A'$ and $\tilde{B}\leftarrow B'$. \\

5: For every node $i$ modify $\tilde{A}$ and $\tilde{B}$ as follows:\\

 \quad Flip membership if $i \in \tilde{A}$ and  $E_{G_2}[i,\tilde{B}] \geq E_{G_2}[i,\tilde{A}] + \frac{c}{T}y_{i}$\\
 \quad Flip membership if $i \in \tilde{B}$ and $E_{G_2}[i,\tilde{A}] \geq E_{G_2}[i,\tilde{B}] - \frac{c}{T}y_{i}$\\

6: Check size of communities. If $|A'|\neq|\tilde{A}|$ or equivalently $|B'|\neq|\tilde{B}|$, discard changes via $\tilde{A}\leftarrow A'$ and $\tilde{B}\leftarrow B'$.\\
\hline
\end{tabular}
\end{table*}
\fi%
\makeatother%

The following lemma completes the proof of Theorem~\ref{Th.1}.
\begin{lemma}\label{Le.7}
For sufficiently large $n$, $\mathbb{P}(F_{i}^{H}) > \frac{\log^{3}(n)}{n} \log(\frac{1}{\delta})$ for $\delta \in(0,1)$, if one of the following is satisfied:
$$
\begin{cases}
(\sqrt{a} - \sqrt{b})^{2} < 2 \text{ when } c = o(\log(n)) \nonumber    \\ 
\eta(a,b,\beta)  < 2 \text{ when } c= (\beta+o(1))\log(n),  0<\beta <\frac{T(a-b)}{2} \nonumber \\ 
\beta <1 \qquad\quad \text{ when } c= (\beta+o(1))\log(n), \beta > \frac{T(a-b)}{2}  \nonumber
\end{cases}
$$
\end{lemma}

\begin{proof}
See Appendix~\ref{sec:app1-1}.
\end{proof}
Combining Lemmas~\ref{Le.4},~\ref{Le.5},~\ref{Le.6},~\ref{Le.7} concludes the proof of the theorem.
\end{proof}

\subsection{Sufficient Conditions}
\label{Eff.1}

Sufficient conditions are derived via a two-step algorithm whose first step uses a component from~\cite{partial_alg}, a method based on spectral properties of the graph that achieves weak recovery. 
%

We start with an independently generated random graph $H_1$ built on the same $n$ nodes where each candidate edge has probability $\frac{D}{\log(n)}$. The complement of $H_1$ is denoted $H_2$. Then $G$ is partitioned as follows: $G_{1} = G \cap H_{1}$ and $G_{2} = G \cap H_{2}$. $G_1$ will be used for the weak recovery step, $G_2$ for local modification. The partitioning of $G$ allows the two steps to remain independent.

We perform a weak recovery algorithm~\cite{partial_alg} on $G_{1}$. Since $G_{1}$ is a graph  with connectivity parameters $(\frac{Da}{n},\frac{Db}{n})$, the weak recovery algorithm is guaranteed to return two communities  $A^{'}$, $B^{'}$ that agree with the true communities $A$, $B$ on at least $(1-\delta(D))n$ nodes so that $\lim_{D\rightarrow\infty}\delta(D)= 0$ (i.e., weak recovery). A sufficient condition for that to happen~\cite{partial_alg}, e.g., is $D=O(\log\log n)$. 

The community assignments are locally modified as follows: for a node $i \in A^{'}$, flip its membership if the number of $G_2$ edges between $i$ and $B^{'}$ is greater than or equal the number of $G_2$ edges between $i$ and $A^{'}$ plus $\frac{c}{T}y_{i}$. For node $j \in B^{'}$, flip its membership if the number of $G_2$ edges between $j$ and $A^{'}$ is greater than or equal the number of $G_2$ edges between $j$ and $B^{'}$ minus $\frac{c}{T}y_{j}$. If the number of flips in the two clusters are not the same, keep the clusters unchanged. The detailed algorithm is shown in Table~\ref{Alg.1}.

\makeatletter%
\if@twocolumn%
\else%
\begin{table*}
\caption{Algorithm for exact recovery.}\label{Alg.1}
\centering
\begin{tabular}{|p{6in}|}
\hline
Algorithm 1\\
\hline
1: Start with graph $G$ and side information $\boldmath y$\\

2: Generate an Erd\"{o}s-Renyi graph $H_{1}$ with edge probability $\frac{D}{\log(n)}$. Use it to partition $G$ into
$G_{1} = G \cap H_{1}$ and $G_{2} = G\cap H_1^c$.\\

3: Apply weak recovery algorithm~\cite{partial_alg} on $G_{1}$, calling the resulting communities $A'/B'$.\\

4:  Initialize $\tilde{A}\leftarrow A'$ and $\tilde{B}\leftarrow B'$. \\

5: For each node $i$ modify $\tilde{A}$ and $\tilde{B}$ as follows:\\

 \quad Flip membership if $i \in \tilde{A}$ and  $E_{G_2}[i,\tilde{B}] \geq E_{G_2}[i,\tilde{A}] + \frac{c}{T}y_{i}$\\
 \quad Flip membership if $i \in \tilde{B}$ and $E_{G_2}[i,\tilde{A}] \geq E_{G_2}[i,\tilde{B}] - \frac{c}{T}y_{i}$\\

6: Check size of communities. If $|A'|\neq|\tilde{A}|$ or equivalently $|B'|\neq|\tilde{B}|$, discard changes via $\tilde{A}\leftarrow A'$ and $\tilde{B}\leftarrow B'$.\\
\hline
\end{tabular}
\end{table*}
\fi%
\makeatother%

\begin{theorem}\label{Le.Eff.1}
With probability approaching one as $n$ grows, the algorithm above successfully recovers the communities if:
\[
  \begin{cases}
(\sqrt{a} - \sqrt{b})^{2}  > 2, \text{ when } c = o(\log(n)) \\
\eta(a,b,\beta) > 2 \text{ when }  c = (\beta+o(1))\log(n), 0<\beta <\frac{T(a-b)}{2} \\
   \beta>1 \text{ when }  c = (\beta+o(1))\log(n), \beta >\frac{T(a-b)}{2} 
    \end{cases}
  \]
\end{theorem}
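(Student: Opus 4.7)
The plan is to exploit the deliberate independence between the two phases of the algorithm. Because the auxiliary graph $H_{1}$ is sampled independently of $G$ and $\boldsymbol{y}$, the split $G_{1}=G\cap H_{1}$ and $G_{2}=G\cap H_{1}^{c}$ gives two independent SBMs whose edge parameters are scaled by $D/\log n$ and $1-D/\log n$ respectively. Weak recovery on $G_{1}$ is handled by invoking~\cite{partial_alg} with $D=D(n)\to\infty$ slowly (for example $D=O(\log\log n)$): with probability $1-o(1)$ it returns $A',B'$ that agree with the truth on all but $\delta(D)n$ nodes, where $\delta(D)\to 0$. All subsequent analysis of the local improvement treats $(G_{2},\boldsymbol{y})$ as the only source of randomness.

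The next step is to reduce exact recovery to a per-node bound via union bound, so the goal becomes showing that every node's misclassification probability is $o(1/n)$. For $i\in A$ the failure event after local improvement is
\[
  \bigl\{\, T(E_{G_{2}}[i,\tilde B] - E_{G_{2}}[i,\tilde A]) \;-\; c\,y_{i} \;\geq\; 0 \,\bigr\},
\]
where $\tilde A,\tilde B$ differ from the truth on at most $\delta(D)n$ nodes. Each edge count is therefore a sum of two independent pieces: the ``clean'' count with respect to the true partition plus a Binomial perturbation with mean $O(\delta(D)\log n)$, which can be absorbed into any fixed additive slack by choosing $D$ large. It suffices to bound the same event with $\tilde A,\tilde B$ replaced by $A,B$, at which point the failure event coincides structurally with $F_{i}^{H}$ of the converse; the size-balance rule in Step~6 of Table~\ref{Alg.1} is controlled separately by a Hoeffding-type argument showing that the number of flips in each cluster concentrates around a common mean.

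The heart of the proof is the Chernoff estimate of this clean failure probability, where my approach diverges from the naive one. If one tilts the full random variable including $y_{i}$, the resulting bound is loose because $y_{i}=-1$ (probability $\alpha\approx n^{-\beta}$) and $y_{i}=+1$ live on very different scales. Instead I would condition on $y_{i}$ and, writing $\Delta E = E_{G_{2}}[i,B]-E_{G_{2}}[i,A]$, bound
\[
  \mathbb{P}(\mathrm{fail}) \;\le\; (1-\alpha)\,\mathbb{P}(T\Delta E \ge c) \;+\; \alpha\,\mathbb{P}(T\Delta E \ge -c).
\]
For the first term, tilting by $t>0$ and optimizing reduces to the quadratic $bTu^{2}-2\beta u-aT=0$ in $u=e^{tT}$, whose positive root is $u^{*}=(\beta+\gamma)/(bT)$; substituting back and using $(\gamma-\beta)(\gamma+\beta)=abT^{2}$ collapses the exponent to $-\tfrac{1}{2}\eta(a,b,\beta)\log n$. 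The second term is treated by an analogous tilt giving the quadratic $bTu^{2}+2\beta u-aT=0$ with positive root $(\gamma-\beta)/(bT)$; this root exceeds $1$ (and yields a valid $t>0$) exactly when $\beta<T(a-b)/2$, in which case the same algebraic simplification gives $n^{-(\eta-2\beta)/2}$, and after multiplying by $\alpha=n^{-\beta+o(1)}$ the second term is also $n^{-\eta/2+o(1)}$, so both are controlled by the single condition $\eta>2$. In the other regime $\beta>T(a-b)/2$, the event $T\Delta E\ge -c$ is typical (the threshold sits above the mean of $T\Delta E$), the Chernoff bound degenerates to $1$, and the second term is $\Theta(n^{-\beta})$, which forces $\beta>1$; a direct derivative computation shows $\eta>2\beta$ in this regime so the first term is automatically negligible. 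The case $c=o(\log n)$ is the easy degeneration in which the side-information contribution is subleading and the classical threshold $(\sqrt{a}-\sqrt{b})^{2}>2$ reappears.

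The main obstacle, as I see it, is the bookkeeping of this two-branch Chernoff calculation: without side information the textbook tilt $t=1/2$ is optimal, but once $\beta>0$ the optimum drifts with $\beta$ and the algebra must be organized around $\gamma$ so that the identity $(\gamma-\beta)(\gamma+\beta)=abT^{2}$ closes the expressions into the form $\eta$. The secondary (but technical) items are making the perturbation $\tilde A,\tilde B\to A,B$ quantitative enough to cost only $o(1)$ in the exponent, absorbing the $1-D/\log n$ correction to the $G_{2}$ edge parameters into the same $o(1)$, and confirming that the size-balance rejection rule does not kick in under the high-probability event, all of which are amenable to standard Chernoff or Hoeffding tail bounds layered on top of the core calculation above.
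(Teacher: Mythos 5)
Your proposal is correct and takes essentially the same route as the paper: split $G$ into $G_1$ for weak recovery and $G_2$ for local improvement, union-bound over nodes, condition on $y_i$ to decompose the per-node failure probability into the $(1-\alpha)$ and $\alpha$ branches, optimize the Chernoff tilt via the quadratic whose root involves $\gamma=\sqrt{\beta^2+abT^2}$, and invoke $\eta\geq 2\beta$ to collapse the third regime's condition to $\beta>1$ (the paper's Lemma~\ref{Le.11}). The only place you are looser than the paper is the perturbation bookkeeping: the paper introduces an explicit slack $\psi\log(n)$ with $\psi=1/\sqrt{-\log\delta}\to 0$ as $D\to\infty$, and shows via a multiplicative Chernoff bound (and the bound $\frac{2Dn}{\log n}$ on $H_1$ degrees from Lemma~\ref{Le.7.1}) that the probability of the Binomial perturbation exceeding $\psi\log(n)$ is $n^{-(1+\Omega(1))}$, whereas you describe this as absorbing a ``fixed additive slack'' even though the perturbation actually scales as $\Theta(\delta \log n)$ and the slack must itself be taken $\psi\log n=o(\log n)$, not $O(1)$, for the argument to close.
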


\begin{proof}
We first upper bound the misclassification probability of a node assuming $H_{2}$ is a complete graph, then adjust the bound to account for the departure of $H_2$ from a complete graph.

\begin{figure}
\centering
\includegraphics[width=2.0in]{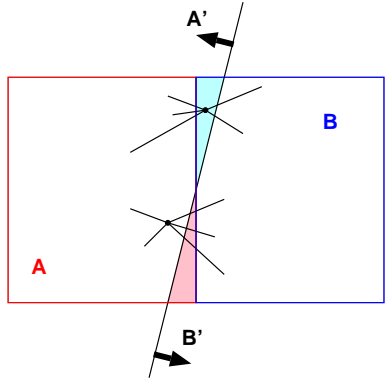}
\caption{Two types of error events for the two-stage algorithm. The node in the top half of the figure is misclassified in weak recovery, and remains uncorrected via local modification. The node at the bottom half is correctly classified in weak recovery, but is mistakenly flipped by local modification.}
\label{fig:mis-class}
\end{figure}

Fig.~\ref{fig:mis-class} shows the mis-classification conditions: an error happens either when the weak recovery was correct and is overturned by the local modification, or when the weak recovery is incorrect and is {\em not} corrected by local modification. Let $W \sim Bern(p)$ and $Z \sim Bern(q)$ represent edges inside a community and across communities, respectively. Let $y_{i} \in \{1,-1\}$ with probabilities $(1-\alpha),\alpha$, respectively. For simplicity, we will write $\delta$ instead of $\delta(D)$. Then, the mis-classification probability is:
\begin{align}
P_{e} & = \mathbb{P}\big(\text{node $i$ is mislabeled}\big) \nonumber\\ 
& = \mathbb{P}\bigg(\sum_{k=1}^{(1-\delta)\frac{n}{2}} Z_{k} + \sum_{k=1}^{\delta\frac{n}{2}} W_{k} \geq \sum_{j=1}^{(1-\delta)\frac{n}{2}} W_{j} + \sum_{j=1}^{\delta\frac{n}{2}} Z_{j} + \frac{c}{T}y_{i}\bigg)
\label{eq.Eff.1}
\end{align}
To adjust for the fact that $H_2$ is not complete, the following Lemma is used, noting that $H_2=H_1^c$.
\begin{lemma}\label{Le.7.1}
With high probability, the degree of any node in $H_{1}$ is at most $\frac{2Dn}{\log(n)}$.
\end{lemma}

\begin{proof}
Let $\{Y_{i}\}_{i=1,\cdots,n}$ be a sequence of i.i.d. Bernoulli random variables with parameter $\frac{D}{\log(n)}$. Define $Y = \sum_{i=1}^{n-1} Y_{i}$. Then, $\mathbb{E}[Y] = \frac{Dn}{\log(n)}$ and hence, by Chernoff bound:
\begin{align}
\mathbb{P}(Y \geq \frac{2Dn}{\log(n)}) & \leq e^{-\frac{1}{4} \frac{D}{\log(n)} n}
\end{align}
Thus, by using a union bound:
\begin{align}
\mathbb{P}\bigg(\exists  \text{ a node degree} > \frac{2Dn}{\log(n)} \bigg) & \; \leq \; n \mathbb{P}\bigg(Y \geq \frac{2Dn}{\log(n)}\bigg) \nonumber\\
&\leq e^{-\frac{1}{4} \frac{D}{\log(n)} n + \log(n)} \underset{n\rightarrow\infty}{\longrightarrow} 0 \nonumber
\end{align}
\end{proof}

Having bounded from below the degree of $H_2$, the correct error probability (for the incomplete $H_2$) can be arrived at by removing no more than $\frac{2D}{\log(n)}n$ terms from the summations on the right hand side of \eqref{eq.Eff.1}. If we remove exactly $\frac{2D}{\log(n)}n$ terms, the following upper bound on error probability is obtained:
\begin{align}\label{eq.Eff.2}
P_{e} &\leq \mathbb{P}\bigg(\sum_{k=1}^{(1-\delta)\frac{n}{2}} Z_{k} + \sum_{k=1}^{\delta\frac{n}{2}} W_{k} \geq \twocolbreak
\sum_{j=1}^{(1-\delta)\frac{n}{2}-\frac{2D}{\log(n)}n} W_{j} + \sum_{j=1}^{\delta\frac{n}{2}-\frac{2D}{\log(n)}n} Z_{j} + \frac{c}{T}y_{i}\bigg )
\end{align}
The following lemma shows an upper bound on $P_{e}$.

\begin{lemma}\label{Le.13}
\begin{align*}
P_{e} \leq
\begin{cases}  n^{-\frac{1}{2} (\sqrt{a}-\sqrt{b})^{2} + o(1)} + n^{-(1+\Omega(1))}\twocolnewline
\hspace{0.5cm} \text{ when } c= o(\log(n))\\
 n^{-\frac{1}{2} \eta(a,b,\beta) + o(1)} + n^{-(1+\Omega(1))} \twocolnewline
\hspace{0.5cm} \text{ when } c = (\beta+o(1))\log(n) \;,\; 0<\beta < \frac{T(a-b)}{2}\\
 n^{-\frac{1}{2} \eta(a,b,\beta) + o(1)} + n^{-\beta} + n^{-(1+\Omega(1))} \twocolnewline
\hspace{0.5cm}\text{ when } c = (\beta+o(1))\log(n) \;,\; \beta > \frac{T(a-b)}{2}
\end{cases}
\end{align*}
\end{lemma}

\begin{proof}
See Appendix~\ref{app.3}.
\end{proof}

A simple union bound yields:
\begin{align}
{\mathbb P}(\text{failure}) \leq
\begin{cases}  
n^{1-\frac{1}{2} (\sqrt{a}-\sqrt{b})^{2} + o(1)},  \twocolnewlineonly
  \hspace{0.8cm} \text{ when } c= o(\log(n)) \\
 n^{1-\frac{1}{2} \eta(a,b,\beta) + o(1)},  \twocolnewlineonly
  \hspace{0.8cm} \text{ when } c = \beta\log(n) \;,\; 0<\beta < \frac{T(a-b)}{2}\\
  n^{1-\beta + o(1)}, \hspace{0.14cm} \text{ when } c = \beta\log(n) \;,\; \beta > \frac{T(a-b)}{2} 
\end{cases}
\label{eq.232}
\end{align}
For the last case, $\beta>1$ remains sufficient because of the following lemma.
\begin{lemma}\label{Le.11}
$\beta > 1 \Rightarrow \eta > 2$.
\end{lemma}

\begin{proof}
Let $a + b - \beta - 2\frac{\gamma}{T} + \frac{\beta}{T}\log(\frac{\gamma+\beta}{\gamma-\beta}) = \psi(a,b,\beta)$. Then, from the definition of $\eta$:
\begin{equation}\label{lem.ap3}
\eta(a,b,\beta) - 2\beta = \psi(a,b,\beta)
\end{equation}
Since $\psi(a,b,\beta)$ is convex in $\beta$, it can be shown that at the optimal $\beta^{*}$, $\log(\frac{\gamma^{*}+\beta^{*}}{\gamma^{*}-\beta^{*}}) = T$. Using this fact and substituting in~\eqref{lem.ap3}:
\begin{equation}\label{lem.ap4}
\eta(a,b,\beta) - 2\beta \geq a + b - 2\frac{\gamma^{*}}{T}
\end{equation}
By the definition of $\gamma$: $\gamma + \beta = \frac{abT^{2}}{\gamma - \beta}$. Using the fact that $\frac{\gamma^{*} + \beta^{*}}{\gamma^{*} - \beta^{*}} = \frac{a}{b}$ leads to $\frac{a}{b} = \frac{abT^{2}}{(\gamma^{*} - \beta^{*})^{2}}$, which implies that $\gamma^{*} = bT + \beta^{*}$. Hence, by substituting in~\eqref{lem.ap4}:
\begin{equation}\label{eq.550}
\eta(a,b,\beta) - 2\beta \geq a - b - 2\frac{\beta^{*}}{T} 
\end{equation}
Also, it can be shown that at $\beta^{*}$, $\gamma^{*}= \beta^{*}(\frac{a+b}{a-b})$. This implies that $\beta^{*} = \frac{T(a-b)}{2}$. Substituting in~\eqref{eq.550} leads to: $\eta(a,b,\beta) - 2\beta \geq 0$, which implies that $\eta > 2$ when $\beta > 1$.
\end{proof}

Combining the last lemma with~\eqref{eq.232} concludes the proof.
\end{proof}


\section{Partially Revealed Labels}\label{sec2}
In this section, we consider side information consisting of partially revealed labels, where $\epsilon \in (0,1)$ is the proportion of labels that remains unknown despite the side information. Tight necessary and sufficient conditions are presented for exact recovery under this type of side information. 
Similar to the noisy label side information, we begin by expressing the log-likelihood function. For a given side information vector $\boldsymbol{y}$, $\mathbb{P}(\boldsymbol{y}|\boldsymbol{x})=0$ if a label contradicts the side information.\footnote{We say a label contradicts the side information if the side information is not an erasure and it disagrees with the label.} All label vectors $\boldsymbol{x}$ that do not contradict side information and satisfy the balanced prior, have the same conditional probability. Thus, for all $\boldsymbol{x}$ that have non-zero conditional probability, the log-likelihood function can be written as:
\begin{align}\label{main.rule.partial.noisy}
\nonumber \log\big(\mathbb{P}(G,\boldsymbol{y}|\boldsymbol{x})\big)  \overset{(a)}{=} & \log\big(\mathbb{P}(G|\boldsymbol{x})\big) + \log\big(\mathbb{P}(\boldsymbol{y}|\boldsymbol{x})\big) \\ 
\overset{(b)}{=} & R + T\big(E(A)+E(B)\big)(1+o(1))
\end{align}
where $(a)$ holds because $G,\boldsymbol{y}$ are independent given $\boldsymbol{x}$. In $(b)$, all terms that are independent of $\boldsymbol{x}$ have been collected into a constant $R$, and  $\log(\frac{p(1-q)}{q(1-p)})$ has been approximated by $(1+o(1))T$, which is made possible because $(1-p), (1-q)$ both approach $1$ as $n\to\infty$. 

The following lemma shows that if the graph includes at least one pair of nodes that have more connections to the opposite-labels than similar-labels {\em and} if their side information is an erasure, the maximum likelihood detector will fail.

\begin{lemma}\label{Le.s2.1}
Define the following events:
\begin{align}\nonumber
F_{A} &= \{ \exists i \in A: (E[i,B] - E[i,A]) \geq 1 \text{ and } y_{i} = 0\} \\ \nonumber
F_{B} &= \{ \exists j \in B: (E[j,A] - E[j,B]) \geq 1 \text{ and } y_{j} = 0\} 
\end{align}
 Then, $F_{A}\cap F_{B} \Rightarrow F$.
\end{lemma}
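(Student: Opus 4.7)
The plan is to mimic the swap-argument used in Lemma~\ref{Le.1}, but adapted to the partial-labels log-likelihood in~\eqref{main.rule.partial.noisy}, which has no side-information term. Fix a sample point in $F_A\cap F_B$ and let $i\in A$, $j\in B$ be two witnessing nodes, so that $E[i,B]-E[i,A]\ge 1$, $E[j,A]-E[j,B]\ge 1$, and $y_i=y_j=0$. Define the candidate communities $\hat A = (A\setminus\{i\})\cup\{j\}$ and $\hat B = (B\setminus\{j\})\cup\{i\}$. The first thing to verify is that $(\hat A,\hat B)$ has strictly positive likelihood, i.e.\ does not contradict $\boldsymbol y$; this is exactly where the erasure hypothesis is needed, since every label other than those at $i$ and $j$ is unchanged, and at $i,j$ the side information is an erasure and therefore compatible with either label. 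The sizes of $\hat A,\hat B$ are balanced by construction, so $(\hat A,\hat B)$ is a feasible assignment for ML.

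Next I would invoke~\eqref{main.rule.partial.noisy} on both partitions and subtract. Writing $A_{ij}\sim\mathrm{Bern}(q)$ for the indicator of the edge between $i$ and $j$, a direct count gives
\begin{align*}
E(\hat A)+E(\hat B) \;-\; \big(E(A)+E(B)\big)
 \;=\; \big(E[j,A]-E[j,B]\big) + \big(E[i,B]-E[i,A]\big) - 2A_{ij}.
\end{align*}
Since $T=\log(a/b)\ge 0$, combining this with~\eqref{main.rule.partial.noisy} yields
\begin{align*}
\log\mathbb P(G,\boldsymbol y\mid \hat A,\hat B)-\log\mathbb P(G,\boldsymbol y\mid A,B)
\;\ge\; T\bigl(2-2A_{ij}\bigr)(1+o(1))\;\ge\;0,
\end{align*}
where the first inequality uses the two witnessing inequalities from $F_A\cap F_B$ and the last uses $A_{ij}\in\{0,1\}$. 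Thus $(\hat A,\hat B)$ is at least as likely as the true $(A,B)$, and since $(\hat A,\hat B)\neq(A,B)$ this forces the ML detector to fail (either by choosing $(\hat A,\hat B)$ outright, or by breaking a tie against the truth with positive probability; in either reading the lemma's conclusion $F$ holds).

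The only non-routine point is the feasibility check of $(\hat A,\hat B)$: in the noisy-label setting of Lemma~\ref{Le.1} every $\boldsymbol x$ had positive likelihood so the swap was automatically valid, whereas here one must ensure that swapping does not send the likelihood to zero. That is precisely the role of the erasure condition $y_i=y_j=0$ in the definitions of $F_A$ and $F_B$. Once feasibility is in hand, the edge-count inequality above is a clean, mechanical adaptation of the calculation in Lemma~\ref{Le.1}, with the $c(y_j-y_i)$ side-information term simply absent.
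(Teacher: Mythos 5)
Your proof is correct and follows essentially the same node-swap argument as the paper: construct $\hat A=(A\setminus\{i\})\cup\{j\}$, $\hat B=(B\setminus\{j\})\cup\{i\}$, expand the edge-count difference to get $E[j,A]-E[j,B]+E[i,B]-E[i,A]-2A_{ij}\ge 2-2A_{ij}\ge 0$, and conclude ML fails. Your explicit observation that the erasures $y_i=y_j=0$ are what keep $\mathbb{P}(\boldsymbol y\mid\hat A,\hat B)$ nonzero (so that \eqref{main.rule.partial.noisy} applies to the swapped partition) is a point the paper leaves implicit, but it does not change the structure of the argument.
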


\begin{proof}

From the sets $A,B$, we swap the nodes $i,j$, producing $\hat{A} = A\backslash \{i\}\cup\{j\}$ and $\hat{B} = B\backslash \{j\}\cup\{i\}$. We intend to show that subject to observing the graph $G$ and the side information $\boldsymbol y$, the likelihood of $\hat{A}, \hat{B}$ is larger than the likelihood of $A, B$, therefore under the condition $F_A\cap F_B$, maximum likelihood will fail.

Let $A_{ij}\sim Bern(q)$ be a random variable representing the existence of the edge between nodes $i$ and $j$. Then, from~\eqref{main.rule.partial.noisy}:
\begin{align} 
\twocolAlignMarker \log\big(\mathbb{P}(G,\boldsymbol{y}|\hat{A},\hat{B})\big) \twocolnewline
&=  R + T(1+o(1))\big(E(\hat{A})+E(\hat{B})\big) \nonumber \\ 
&=  R + T(1+o(1))\big(E(A)+E(B)\big) + T(1+o(1)) \nonumber \\ 
& \times \big(E[j,A] - E[i,A] -  E[j,B] + E[i,B] -2A_{ij}\big) \nonumber\\
\overset{(a)}{\geq} &  \log\big(\mathbb{P}(G,\boldsymbol{y}|A,B)\big) + 2T(1+o(1))(1-A_{ij}) \nonumber \\
\overset{(b)}{\geq} & \log\big(\mathbb{P}(G,\boldsymbol{y}|A,B)\big)
\end{align}
where $(a)$ holds by the assumption that $F_{A} \cap F_{B}$ happened and $(b)$ holds because $(1-A_{ij}) \geq 0$ and $T \geq 0$. The inequality $(b)$ implies the failure of maximum likelihood.
\end{proof}


\subsection{Necessary Conditions}\label{LB.2}

\begin{theorem}\label{Th.4}
The maximum likelihood failure probability is bounded away from zero if:
\begin{itemize}
\item
$\log(\epsilon) = o(\log(n))$ and $(\sqrt{a} - \sqrt{b})^{2}  < 2$
\item
$\log(\epsilon) = -(\beta+o(1))\log(n)$, $\beta>0$, and $\frac{1}{2}(\sqrt{a} - \sqrt{b})^{2} + \beta < 1$
\end{itemize}
\end{theorem}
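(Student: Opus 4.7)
The plan is to follow the same necessary-conditions architecture developed for Theorem~\ref{Th.1}, but adapted to the erasure structure exposed by Lemma~\ref{Le.s2.1}. Since $\boldsymbol{x}^{*}$ is uniform it suffices to lower-bound the ML error probability, and by Lemma~\ref{Le.s2.1} it suffices in turn to show that $\mathbb{P}(F_A)$ and $\mathbb{P}(F_B)$ are both large. By symmetry I focus on $F_A$. Fix a subset $H\subset A$ with $|H|=n/\log^{3}(n)$ and, mimicking Definition~\ref{Def.1}, introduce
\[
\tilde{\Delta}_{i}=\big\{E[i,H]\le \tfrac{\log n}{\log\log n}\big\},\qquad
\tilde F_{i}^{H}=\big\{E[i,A\setminus H]+1+\tfrac{\log n}{\log\log n}\le E[i,B]\big\}\cap\{y_i=0\},
\]
with $\tilde\Delta=\cap_i\tilde\Delta_i$ and $\tilde F^{H}=\cup_i \tilde F_i^H$. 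The extra slack of $1+\log(n)/\log\log(n)$ absorbs the self-edge inside $H$ via $\tilde\Delta$; on $\tilde\Delta\cap \tilde F^H$ some node $i\in H$ simultaneously has $E[i,B]-E[i,A]\ge 1$ and $y_i=0$, triggering $F_A$ per Lemma~\ref{Le.s2.1}. Lemmas~\ref{Le.4} and \ref{Le.5} then carry over verbatim, and Lemma~\ref{Le.6} applies to the i.i.d.\ family $\{\tilde F_i^{H}\}$, reducing the task to showing $\mathbb{P}(\tilde F_i^{H}) > \log^{3}(n)\log(1/\delta)/n$ for large $n$.

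The key observation is that the side information is independent of the graph given the labels, so $\tilde F_i^H$ factors:
\begin{equation}\nonumber
\mathbb{P}(\tilde F_{i}^{H})=\epsilon\cdot \mathbb{P}\Big(E[i,B]-E[i,A\setminus H]\ge 1+\tfrac{\log n}{\log\log n}\Big).
\end{equation}
The right-hand probability is the classical one-sided SBM deviation on roughly $n/2$ Bernoulli$(p)$ versus $n/2$ Bernoulli$(q)$ trials with $p=a\log(n)/n$, $q=b\log(n)/n$; a standard Chernoff / Cramer computation (the same one used to derive the $c=o(\log n)$ case of Lemma~\ref{Le.7}) gives
\begin{equation}\nonumber
\mathbb{P}\Big(E[i,B]-E[i,A\setminus H]\ge 1+\tfrac{\log n}{\log\log n}\Big)=n^{-\frac12(\sqrt a-\sqrt b)^{2}+o(1)}.
\end{equation}
Therefore $\mathbb{P}(\tilde F_i^H)=\epsilon\, n^{-\frac12(\sqrt a-\sqrt b)^{2}+o(1)}$, and comparing this with $\log^{3}(n)/n$ reduces to checking $\log\epsilon-\tfrac12(\sqrt a-\sqrt b)^{2}\log n+o(\log n)>-\log n$, i.e.\ $\tfrac{\log\epsilon}{\log n}+1-\tfrac12(\sqrt a-\sqrt b)^{2}>o(1)$.

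Two regimes now fall out immediately. When $\log\epsilon=o(\log n)$, the first term vanishes and the condition collapses to $(\sqrt a-\sqrt b)^{2}<2$, matching case~(i). When $\log\epsilon=-(\beta+o(1))\log n$, the condition becomes $\beta+\tfrac12(\sqrt a-\sqrt b)^{2}<1$, matching case~(ii). Chaining Lemmas~\ref{Le.4}, \ref{Le.5}, \ref{Le.6} with this verification and Lemma~\ref{Le.s2.1} then gives $\mathbb{P}(F)$ bounded away from zero, completing the argument. The main technical obstacle is isolating the correct exponent of the graph-only deviation with the $+1+\log(n)/\log\log(n)$ offset; this is the same hurdle handled in the proof of Lemma~\ref{Le.7}, and the offset contributes only an $o(1)$ correction in the exponent because it is $o(\log n)$, so no new ideas are required beyond the Chernoff/Cramer machinery already deployed in Section~\ref{sec1}.
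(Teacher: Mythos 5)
Your proposal reproduces the paper's proof essentially verbatim: you make the same modification to $F_i^H$ (intersecting with $\{y_i=0\}$ and dropping the $c y_i$ term), invoke Lemmas~\ref{Le.4}--\ref{Le.6} unchanged, factor $\mathbb{P}(F_i^H)=\epsilon\cdot\mathbb{P}(\text{graph deviation})$ by conditional independence, and apply the same Cram\'er-type lower bound (Lemma~\ref{Le.10}) to obtain the exponent $\tfrac12(\sqrt a-\sqrt b)^2$, which is precisely Lemma~\ref{Le.8} in the paper's Appendix~\ref{sec:app1-2}. The only cosmetic slip is writing the graph-deviation probability as an equality rather than the one-sided lower bound that the argument actually needs (and that the paper supplies), but this does not affect the conclusion.
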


\begin{proof}
Let $H$ be a subset of $A$ with $|H| = \frac{n}{\log^{3}(n)}$. Consider the 
following modification to Definition~\ref{Def.1}:
\[
F_{i}^{H} = \big\{  y_{i} = 0 \big\} \cap \big\{  E[i,A\backslash H] + 1 + \frac{\log(n)}{\log\log(n)} \leq E[i,B] \big\}
\]
It is not difficult to show that Lemmas~\ref{Le.4},~\ref{Le.5},~\ref{Le.6} remain valid under this modification. To complete the proof, it is sufficient to find conditions under which $\mathbb{P}(F_{i}^{H}) > \frac{\log^{3}(n)}{n} \log(\frac{1}{\delta})$ asymptotically (in $n$) for all $\delta \in (0,1)$.
\begin{lemma}\label{Le.8}
For sufficiently large $n$, $\mathbb{P}(F_{i}^{H}) > \frac{\log^{3}(n)}{n} \log(\frac{1}{\delta})$ for $\delta \in(0,1)$, if one of the following is satisfied:
\begin{equation*}
\begin{cases}
(\sqrt{a} - \sqrt{b})^{2}  < 2, \hspace{0.1cm} \text{ when } \log(\epsilon) = o(\log(n)) \nonumber  \\
(\sqrt{a} - \sqrt{b})^{2} + 2\beta < 2, \twocolnewline
\hspace{1cm} \text{ when } \log(\epsilon)= -(\beta+o(1))\log(n), \beta>0  
\end{cases}
\end{equation*}
\end{lemma}

\begin{proof}
See Appendix~\ref{sec:app1-2}.
\end{proof}
Combining Lemma~\ref{Le.8} with the modified form of Lemmas~\ref{Le.4},~\ref{Le.5}, and~\ref{Le.6}, concludes the proof of the theorem.
\end{proof}

\subsection{Sufficient Conditions}\label{UB.2}


This section shows sufficient conditions for exact recovery by introducing an algorithm whose exact recovery conditions are identical to Section~\ref{LB.2}. The first stage of the algorithm is the same as Section~\ref{Eff.1}. The second stage involving local modification is new and is described below.

The community assignments are locally modified for each node $i$ as follows: (a) if $A'/B'$ membership contradicts side information $y_i$, flip node membership or (b) if $y_{i}=0$, re-assign membership of $i$ to the community $A'/B'$ to which it is connected with more edges. After going through all nodes, if the the number of flips in two communities $A', B'$ are not the same, void all local modifications.

\begin{theorem}\label{Le.Eff.2}
The algorithm described above successfully recovers the communities with high probability if:
$$
\begin{cases}
(\sqrt{a} - \sqrt{b})^{2}  > 2, \hspace{0.1cm} \text{when } \log(\epsilon) = o(\log(n)) \nonumber  \\ 
(\sqrt{a} -\sqrt{b})^{2} + 2\beta > 2, \twocolnewline
\hspace{1cm} \text{when } \log(\epsilon) = -(\beta+o(1))\log(n), \beta>0  
\end{cases}
$$ 
\end{theorem}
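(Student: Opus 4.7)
The plan is to mirror the two-step analysis from Theorem~\ref{Le.Eff.1}: first bound the probability that a fixed node is misclassified assuming $H_2$ is a complete graph, then account for the missing incident edges via Lemma~\ref{Le.7.1}, and finally take a union bound over the $n$ nodes. As in Section~\ref{Eff.1}, the weak-recovery stage on $G_1$ returns communities $A',B'$ that agree with the truth on at least $(1-\delta(D))n$ nodes, with $\delta(D)$ arbitrarily small upon choosing $D$ large, and $G_1$ and $G_2$ are independent by construction.

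The essential simplification in the noiseless-with-erasure model is that an unerased side information is never wrong. Step~(a) of the local modification forces every node with $y_i\in\{-1,+1\}$ into the community that matches $y_i$, so such a node is automatically correctly classified after the second stage. Consequently, the only nodes that can be misclassified by the algorithm are those with $y_i=0$, an event of probability $\epsilon$. Conditioned on $y_i=0$, step~(b) is exactly the graph-only majority vote analyzed in Section~\ref{Eff.1} with the side-information coefficient formally set to zero, and the error event reduces to the picture of Fig.~\ref{fig:mis-class} (weak recovery was correct but $G_2$ edges overturn it, or weak recovery was wrong and $G_2$ fails to correct it). Recycling the Chernoff/Cram\'er estimate of Lemma~\ref{Le.13} specialized to $c=0$ gives
\[
\mathbb{P}(\text{node $i$ mislabeled}\mid y_i=0)\le n^{-\frac{1}{2}(\sqrt{a}-\sqrt{b})^2+o(1)}+n^{-(1+\Omega(1))}.
\]

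Multiplying by $\mathbb{P}(y_i=0)=\epsilon$ and summing over the $n$ nodes yields
\[
\mathbb{P}(\text{failure})\le n\epsilon\big(n^{-\frac{1}{2}(\sqrt{a}-\sqrt{b})^2+o(1)}+n^{-(1+\Omega(1))}\big).
\]
When $\log\epsilon=o(\log n)$ the $\epsilon$ factor is absorbed into the $o(1)$ of the exponent and the leading term is $n^{1-\frac{1}{2}(\sqrt{a}-\sqrt{b})^2+o(1)}$, which vanishes once $(\sqrt{a}-\sqrt{b})^2>2$. When $\log\epsilon=-(\beta+o(1))\log n$ the leading exponent becomes $1-\beta-\frac{1}{2}(\sqrt{a}-\sqrt{b})^2+o(1)$, negative precisely under $(\sqrt{a}-\sqrt{b})^2+2\beta>2$. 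The balance check in step~6 of the algorithm does not degrade the bound: on the high-probability event that no node is mislabeled, the output equals the true balanced partition, so $|\tilde A|=|A|=n/2=|A'|$ and the check passes.

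The main technical obstacle is verifying that the per-node argument behind Lemma~\ref{Le.13} carries over cleanly when $c=0$, together with the adjustment for the $O(n/\log n)$ edges per vertex that are absent from $G_2$ because of the partition by $H_1$, which is exactly what Lemma~\ref{Le.7.1} is designed to handle. Everything else is a union bound combined with the observation that unerased side information precludes errors at those nodes.
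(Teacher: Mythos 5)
Your proof is correct and follows essentially the same route as the paper: you observe that step~(a) guarantees correctness at unerased nodes so only the probability-$\epsilon$ erased nodes can err, reduce the conditional error to the graph-only Chernoff estimate of Lemma~\ref{Le.13} with $c=0$ (which is exactly how the paper obtains Eq.~\eqref{eq.Effs2.2} and~\eqref{eq.Effs2.3}), and conclude via a union bound. The extra remark about the balance check in step~6 is a sensible clarification that the paper leaves implicit, but it does not change the argument.
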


\begin{proof}
Let $P_{e} = \mathbb{P}(\text{node $i$ to be misclassified})$. Following the same analysis as in the proof of Theorem~\ref{Le.Eff.1}:
\begin{equation}\label{eq.Effs2.2}
P_{e} \leq \epsilon \mathbb{P}\bigg(\sum_{k=1}^{(1-\delta)\frac{n}{2}} Z_{k} + \sum_{k=1}^{\delta\frac{n}{2}} W_{k} \geq \!\!\!\!\!\! \sum_{j=1}^{(1-\delta)\frac{n}{2}-\frac{2D}{\log(n)}n} \!\!\!\!\!\! W_{j} + \!\!\!\!\!\! \sum_{j=1}^{\delta\frac{n}{2}-\frac{2D}{\log(n)}n} \!\!\!\!\!\! Z_{j}  \bigg)
\end{equation}

Using Lemma~\ref{Le.13} and strengthening $c= o(\log(n))$ to $c=0$, equation~\eqref{eq.Effs2.2} can be upper bounded as follows:
\begin{align}\label{eq.Effs2.3}
P_{e} & \leq \epsilon n^{-\frac{1}{2} (\sqrt{a}-\sqrt{b})^{2}} + n^{-(1+\Omega(1))}
\end{align}
Thus, according to asymptotic behavior of $\epsilon$:
\begin{align*}
P_{e} \leq
\begin{cases}  n^{-\frac{1}{2} (\sqrt{a}-\sqrt{b})^{2} + o(1)} + n^{-(1+\Omega(1))}, \nonumber \twocolnewline
\qquad\qquad \text{ when } \log(\epsilon)= o(\log(n)) \nonumber \\
 n^{-\frac{1}{2} (\sqrt{a}-\sqrt{b})^{2} - \beta} + n^{-(1+\Omega(1))}, \nonumber \twocolnewline
\qquad\qquad \text{ when } \log(\epsilon) = -(\beta+o(1))\log(n) \;,\; \beta > 0
\end{cases}
\end{align*}
A simple union bound yields:
\begin{align*}
{\mathbb P}(\text{failure}) \leq
\begin{cases}  n^{1-\frac{1}{2} (\sqrt{a}-\sqrt{b})^{2} + o(1)}, \hspace{0.06cm} \text{ when } \log(\epsilon)= o(\log(n)) \nonumber \\
 n^{1-\frac{1}{2} (\sqrt{a}-\sqrt{b})^{2} - \beta + o(1)} \twocolnewline
\hspace{0.3cm} \text{ when } \log(\epsilon) = -(\beta+o(1))\log(n) \;,\; \beta > 0
\end{cases}
\end{align*}
\end{proof}

\section{More General Side Information}
\label{sec:general}

We now generalize the side information random variable such that each node observes $K$ features (side information) each has arbitrary fixed and finite cardinality $M_{k}, k \in \{1,\cdots, K\}$. The alphabet for each feature $k$ is denoted with $\{u_{1}^{k}, u_{2}^{k}, \cdots, u_{M_{k}}^{k}\}$. Denote, for each node $i$ and feature $k$, $\mathbb{P}(y_{i,k} = u_{m_{k}}^{k} | x_{i} = 1) = \alpha_{+,m_{k}}^{k}$ and $\mathbb{P}(y_{i,k} = u_{m_{k}}^{k} | x_{i} = -1) = \alpha_{-,m_{k}}^{k}, m_{k}\in\{1,\cdots,M_{k}\}$, where $\alpha_{+,m_{k}}^{k}\geq 0$, $\alpha_{-,m_{k}}^{k}\geq 0$ and $\sum_{m_{k}=1}^{M_{k}} \alpha_{+,m_{k}}^{k} = \sum_{m_{k}=1}^{M_{k}} \alpha_{-,m_{k}}^{k} = 1$ for all $k \in\{1,\cdots,K\}$. All features are assumed to be independent conditioned on the labels.
We first consider the case where $K$ is fixed while $\alpha_{+,m_{k}}^{k}$ and $\alpha_{-,m_{k}}^{k}$ are varying with $n$ for $m_{k} \in \{1,\cdots, M_{k}\}$ and $k \in \{1,\cdots, K\}$. To ensure that the quality of the side information is increasing with $n$, assume that $\alpha_{+,m_{k}}^{k}$ and $\alpha_{-,m_{k}}^{k}$ for $m_{k} \in \{1,\cdots, M_{k}\}$ and $k \in \{1,\cdots, K\}$ are constant or monotonic in $n$. Second, we consider the case where $K$ is varying with $n$ while $\alpha_{+,m_{k}}^{k}$ and $\alpha_{-,m_{k}}^{k}$ are fixed for $m_{k} \in \{1,\cdots, M_{k}\}$ and $k \in \{1,\cdots, K\}$. To ensure that the quality of the side information is increasing with $n$, assume that $K$ is non-decreasing with $n$. Necessary and sufficient conditions for exact recovery that are tight except for one special case are provided. 

First the log-likelihood function is presented. For feature $k$, let the number of $\{i\in A: y_{i,k}= u_{m_{k}}^{k}\}$ and $\{i\in B: y_{i,k}= u_{m_{k}}^{k}\}$ be $J_{u_{m_{k}}^{k}}(A)$ and $J_{u_{m_{k}}^{k}}(B)$, respectively. Then, by using similar ideas as in~\eqref{main.rule.Full.noisy}:
\begin{align}
& \log\big(\mathbb{P}(G,\boldsymbol{y_{1}},\boldsymbol{y_{2}},\cdots,\boldsymbol{y_{K}}|\boldsymbol{x})\big)= \twocolbreak
 R + T\big(E(A)+ E(B)\big)(1+ o(1)) + \nonumber \\
& \sum_{k=1}^{K} \sum_{m_{k}=1}^{M_{k}} J_{u_{m_{k}}^{k}}(A) \log\big( \alpha_{+,m_{k}}^{k}\big) + J_{u_{m_{k}}^{k}}(B) \log\big(\alpha_{-,m_{k}}^{k}\big)
\end{align}
\begin{definition}
The side information LLR for outcome $m_k$ of feature $k$ is denoted:
\[
\hfeat \triangleq \log(\frac{\alpha_{+,m_{k}}^{k}}{\alpha_{-,m_{k}}^{k}})
\]
The LLR produced by the side information for each node $i$ is a random variable which we denote with $\hbar_i$ where
$\hbar_i =\sum_k \hbar_{ik}$,
and $\hbar_{ik}$ is the LLR of feature $k$ for node $i$. 
\end{definition}

\begin{lemma}\label{Le.s3.1}
Define the following events:
\begin{align}\nonumber
F_{A} &= \{ \exists i \in A: T(E[i,B] - E[i,A]) - \hbar_i \geq T\} \\ \nonumber
F_{B} &= \{ \exists j \in B: T(E[j,A] - E[j,B]) + \hbar_j \geq T\}  \nonumber
\end{align}
 Then, $F_{A} \cap F_{B} \Rightarrow F$.
\end{lemma}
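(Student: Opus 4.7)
The plan is to replay the single-swap argument used for Lemma~\ref{Le.1}, replacing the scalar noisy-label LLR $cy_i$ with the aggregated feature LLR $\hbar_i$. I would assume $F_A\cap F_B$ holds and fix witnesses $i\in A$ and $j\in B$ satisfying the respective defining inequalities, define the swapped communities $\hat A = A\setminus\{i\}\cup\{j\}$ and $\hat B = B\setminus\{j\}\cup\{i\}$, and show $\log\mathbb P(G,\boldsymbol y_1,\dots,\boldsymbol y_K\mid \hat A,\hat B)\geq \log\mathbb P(G,\boldsymbol y_1,\dots,\boldsymbol y_K\mid A,B)$. This inequality prevents $(A,B)$ from being the unique maximum-likelihood partition, which is exactly what $F$ records.

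To carry out the comparison I would plug $\hat A,\hat B$ into the log-likelihood formula derived immediately above the lemma. The graph-edge contribution behaves exactly as in Lemma~\ref{Le.1}: letting $A_{ij}\sim\text{Bern}(q)$ indicate the edge between $i$ and $j$, the swap changes $E(A)+E(B)$ by $E[i,B]-E[i,A]+E[j,A]-E[j,B]-2A_{ij}$. For the feature term, moving node $i$ from $A$ to $B$ shifts each of its feature outcomes $y_{i,k}$ from the count $J_{u_{m_k}^k}(A)$ to $J_{u_{m_k}^k}(B)$; by the definitions $\hbar_{i,k}=\log(\alpha_{+,y_{i,k}}^k/\alpha_{-,y_{i,k}}^k)$ and $\hbar_i=\sum_k\hbar_{i,k}$, this contributes exactly $-\hbar_i$ to the log-likelihood. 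Symmetrically, moving $j$ from $B$ to $A$ contributes $+\hbar_j$. Thus the total log-likelihood change is
\[
T(1+o(1))\bigl(E[i,B]-E[i,A]+E[j,A]-E[j,B]-2A_{ij}\bigr)+\hbar_j-\hbar_i.
\]

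Adding the two defining inequalities of $F_A$ (at $i$) and $F_B$ (at $j$) gives $T(E[i,B]-E[i,A]+E[j,A]-E[j,B])+\hbar_j-\hbar_i\geq 2T$, so the above difference is at least $2T(1-A_{ij})\geq 0$, which proves the claim. The one mildly delicate point, which is really the main thing to get right, is the sign bookkeeping of the feature LLR: one must verify that because $\hbar_{i,k}$ is defined relative to label $+1$, moving a node from $A$ (label $+1$) into $B$ (label $-1$) decreases the log-likelihood by exactly $\hbar_i$ and not by $-\hbar_i$, and conversely for $j$. Once these signs are pinned down, the rest of the argument is a direct transcription of the proof of Lemma~\ref{Le.1} with $cy_i$ replaced by $\hbar_i$ and $-cy_j$ replaced by $-\hbar_j$.
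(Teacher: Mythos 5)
Your proof is correct and follows exactly the route the paper intends: it replays the single-swap argument of Lemma~\ref{Le.1}, replacing the scalar LLR $cy_i$ by the aggregated feature LLR $\hbar_i$, and the sign bookkeeping of the feature term is handled correctly. The paper itself only says the proof is ``similar to Lemmas~\ref{Le.1} and~\ref{Le.s2.1},'' so your write-up is simply a fully spelled-out version of that same argument.
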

\begin{proof}
The proof is similar to Lemmas~\ref{Le.1} and ~\ref{Le.s2.1}.
\end{proof}
\subsection{Fixed Number of Features, Variable Quality}
In this section, the number of features $K$ is assumed to be fixed and we show how noisy the outcomes of the features should be so that side information changes the phase transition threshold of exact recovery. 
We begin with $K=1$, i.e. one feature with $M$ outcomes. For each side information outcome $m \in \{1,\cdots,M\}$, two quantities affect the phase transition: the log-likelihood ratio $h_{m} = \log(\frac{\alpha_{+,m}}{\alpha_{-,m}})$ and the conditional probability $\alpha_{\pm,m}$. An outcome is called \textit{informative} if $h_{m} = O(\log(n))$\footnote{We say $h_m = O(\log n)$ when there exists a strictly positive constant $C$ such that $h_m < C \log(n)$ for all sufficiently large $n$.} and \textit{non-informative} if $h_{m} = o(\log(n))$. Also, an outcome is called \textit{rare} if $\log(\alpha_{\pm,m}) = O(\log(n))$ and \textit{not rare} if $\log(\alpha_{\pm,m}) = o(\log(n))$. Hence, four different combinations are possible. The \textit{worst} case is when the outcome is both \textit{non-informative} and \textit{not rare} for both communities, e.g. noisy labels with $\alpha = \frac{1}{\log(n)}$. We will show that if such an outcome exists, then side information will not improve the phase transition threshold. The \textit{best} case is when the outcome is \textit{informative}, and \textit{rare} for one community but \textit{not rare} for the other. This happens, e.g., under noisy label side information with $\alpha = n^{-\beta + o(1)}$.  We have two cases in between: (1) an outcome that is \textit{non-informative} and \textit{rare} for both communities, e.g. partial label reveal side information with $\epsilon = n^{-\beta+o(1)}$ and (2) an outcome that is \textit{informative} and \textit{not rare} for both communities. The last three cases can affect the phase transition threshold under certain conditions. As shown by Theorem~\ref{Th.3}, phase transition is characterized by (the evolution of) the following functions of the statistics of side information.
\begin{align}
f_1(n) &\triangleq \sum_{k=1}^K \hfeat, \\
f_2(n) &\triangleq\sum_{k=1}^K \log(\alpha_{+,m_{k}}^{k}), \\
f_3(n) &\triangleq\sum_{k=1}^K \log(\alpha_{-,m_{k}}^{k})
\end{align}
In the following, the side information outcomes $[u_{m_1}^1, \ldots,u_{m_K}^K]$ are represented by their index $[m_1,\ldots,m_K]$ without loss of generality. Throughout, dependence on $n$ of outcomes and their likelihood is implicit.
\begin{theorem}\label{Th.3}
Assume $\alpha_{+,m_{k}}^{k}$ and $\alpha_{-,m_{k}}^{k}$  are either constant or monotonically increasing or decreasing in $n$. Then, necessary and sufficient conditions for exact recovery depend on side information statistics 
in the following manner:
\begin{enumerate}
\item If there exists any sequence (over $n$) of  side information outcomes $[m_1,\ldots,m_K]$ such that $f_1(n)$, $f_2(n)$, $f_3(n)$ are all $o(\log(n))$,
then $(\sqrt{a}-\sqrt{b})^{2} > 2$ must hold.

\item If there exists any sequence (over $n$) of side information outcomes $[m_1,\ldots,m_K]$ such that $f_1(n)= o(\log(n))$ and $f_2(n), f_3(n)$ evolve according to $-\beta\log(n)+ o(\log(n))$ with $\beta > 0$, then $(\sqrt{a}-\sqrt{b})^{2} + 2\beta > 2$  must hold.

\item
\label{Th3.item3}  If there exists any sequence (over $n$) of side information outcomes $[m_1,\ldots,m_K]$ such that $f_1(n) = \beta_{1}\log(n)+ o(\log(n))$ with $|\beta_{1}| < T\frac{(a-b)}{2}$ and furthermore $f_2(n)=o(\log(n))$ if $\beta_1>0$ and  $f_3(n)=o(\log(n))$ if $\beta_1<0$, then $\eta(a,b,|\beta_{1}|) > 2$ must hold.

\item
\label{Th3.item4} If  there exists any sequence (over $n$) of side information outcomes $[m_1,\ldots,m_K]$ such that $f_1(n) = \beta_{2}\log(n)+ o(\log(n)),  |\beta_{2}| < T\frac{(a-b)}{2}$ and furthermore $f_2(n)= -\beta_{2}^{'}\log(n)+ o(\log(n))$ if $\beta_2>0$ and $f_3(n)= -\beta_{2}^{'}\log(n)+ o(\log(n))$ if $\beta_2<0$, then $\eta(a,b,|\beta_{2}|) + 2\beta_{2}^{'}>2$ must hold.
\end{enumerate}
\end{theorem}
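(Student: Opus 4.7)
The plan is to extend the arguments of Theorems~\ref{Th.1} and~\ref{Th.4} by replacing the scalar side-information LLR $c y_i$ with the general LLR $\hbar_i$ and decomposing the probability calculations over the side-information outcome sequences $[m_1,\ldots,m_K]$. Necessary conditions follow from a lower bound on the ML failure probability via Lemma~\ref{Le.s3.1}, while sufficient conditions come from a natural generalization of the two-stage algorithm of Section~\ref{Eff.1}.

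For the necessary conditions, I would redefine the event $F_i^H$ of Definition~\ref{Def.1} as
\[
F_{i}^{H} = \bigl\{ T E[i,A\backslash H] + \hbar_i + T + T\tfrac{\log(n)}{\log\log(n)} \leq T E[i,B] \bigr\},
\]
after which Lemmas~\ref{Le.4},~\ref{Le.5},~\ref{Le.6} carry through essentially verbatim, with the noisy-label LLR $c y_i$ replaced throughout by $\hbar_i$. To lower bound $\mathbb{P}(F_i^H)$, I would condition on the realized outcome tuple $[m_1,\ldots,m_K]$ for node $i$: given this outcome, $\hbar_i$ takes the deterministic value $f_1(n)$ and (for $i\in A$) the outcome itself occurs with probability $e^{f_2(n)}$. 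The residual event depends only on graph edges and admits a Chernoff/Cram\'er bound of exactly the form used in Appendix~\ref{sec:app1-1}. Multiplying by the outcome probability and selecting the worst (most favorable to ML failure) sequence then yields, in case~\ref{Th3.item4}, a bound of the form $\mathbb{P}(F_i^H) \gtrsim n^{-\frac{1}{2}\eta(a,b,|\beta_2|) - \beta'_2 + o(1)}$, with analogous reductions in cases 1--3 where either the shift or the probability becomes sub-polynomial. Requiring $\mathbb{P}(F_i^H) > \frac{\log^3(n)}{n}\log(1/\delta)$ then gives the four stated inequalities.

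For the sufficient conditions, I would apply the two-stage algorithm of Table~\ref{Alg.1} with the local improvement test replaced by: flip $i \in \tilde{A}$ when $E_{G_2}[i,\tilde{B}] \geq E_{G_2}[i,\tilde{A}] + \hbar_i/T$, and symmetrically for $B$. The per-node misclassification probability then decomposes as
\[
P_e \;=\; \sum_{[m_1,\ldots,m_K]} \mathbb{P}(\text{outcome}) \cdot \mathbb{P}(\text{fail} \mid \text{outcome}),
\]
where the conditional failure probability is bounded by Lemma~\ref{Le.13} applied with shift $f_1(n)/T$, and the outcome probability contributes a factor $e^{f_2(n)}$ or $e^{f_3(n)}$. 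Since $K$ and each $M_k$ are fixed, the sum contains $O(1)$ terms; a union bound over nodes then yields $\mathbb{P}(\text{fail}) \to 0$ under each of the four stated conditions.

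The main technical obstacle is case~\ref{Th3.item4}, in which both the shift $f_1(n)/T$ and the outcome probability $e^{f_{2}(n)}$ (or $e^{f_3(n)}$) are polynomial in $n$ and jointly determine the combined exponent $\tfrac{1}{2}\eta(a,b,|\beta_2|) + \beta'_2$. Obtaining this sharp joint rate requires rerunning the Cram\'er tilt of Appendix~\ref{sec:app1-1} with a mean that reflects $f_1(n)$, and then multiplying by the outcome probability, rather than treating the two factors in isolation. A secondary subtlety is that the monotonicity assumption on $\alpha_{+,m_k}^k$ and $\alpha_{-,m_k}^k$ is required precisely so that $f_1,f_2,f_3$ each admit a single leading-order scale in $n$; without it, $\beta_1,\beta_2,\beta'_2$ are not well-defined limits and the stated case split would not be exhaustive.
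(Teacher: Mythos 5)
Your overall route matches the paper's: a converse by lower-bounding $\mathbb{P}(F_i^H)$ via a Cram\'er-type tilt conditioned on the realized outcome tuple, and achievability by the same two-stage algorithm with the local test shifted by $\hbar_i/T$, summing over the $O(1)$ outcome tuples. The exponent bookkeeping in case~\ref{Th3.item4} (combined rate $\tfrac{1}{2}\eta(a,b,|\beta_2|)+\beta'_2$) and the role of the monotonicity assumption are both correctly identified.

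There is one genuine gap. You assert that after redefining $F_i^H$ for a subset $H\subset A$, ``Lemmas~\ref{Le.4},~\ref{Le.5},~\ref{Le.6} carry through essentially verbatim.'' That is not quite true for Lemma~\ref{Le.4}: its proof derives $\mathbb{P}(F_A)\geq 1-2\delta$ from $F^H$ and $\Delta$, and then obtains $\mathbb{P}(F_B)\geq 1-2\delta$ \emph{by symmetry of the graph and the side information}. In the noisy-label and erasure models this symmetry holds, but for general features the conditional law of $\hbar_i$ given $x_i=1$ can differ from that given $x_i=-1$ (indeed the distinction between $f_2$ and $f_3$ in the theorem statement exists precisely because of this). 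So you must define two families of events, one over a subset $H_1\subset A$ and one over $H_2\subset B$ (as the paper does in Definition~\ref{Def.3}), and lower-bound both $\mathbb{P}(F_i^{H_1})$ and $\mathbb{P}(F_j^{H_2})$ separately. The non-trivial content of the converse is then showing that the two bounds produce the \emph{same} exponent in each of the four cases — e.g.\ in case~\ref{Th3.item4} the $H_1$ side contributes exponent $\tfrac{1}{2}\eta+\beta'_2$ via $f_2$ while the $H_2$ side contributes $\tfrac{1}{2}\eta-\beta_2+\beta''_2$ via $f_3$, and these coincide because $\beta_2=\beta''_2-\beta'_2$. Without establishing this balance, the conclusion $\mathbb{P}(F_B)\geq 1-2\delta$, and hence $\mathbb{P}(F)\geq 1-4\delta$, does not follow. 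Adding this two-sided treatment closes the gap; the remainder of your argument is sound.
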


\begin{remark}
The four parts in Theorem~\ref{Th.3} are concurrent. For example, if some side information outcome sequences fall under Item~\ref{Th3.item3} and some fall under Item~\ref{Th3.item4}, then the necessary and sufficient condition for exact recovery is $\min (\eta(a,b,|\beta_{1}|),\eta(a,b,|\beta_{2}|) + 2\beta_{2}^{'}) >2$.
\end{remark}

\begin{remark}\label{Remark10}
When there is any sequence of side information outcomes that satisfies $f_1(n) = \beta\log(n) +o(\log(n))$ with $T\frac{(a-b)}{2} < |\beta|$, a sufficient condition easily follows other achievability proofs for Theorem~\ref{Th.3}, but a matching converse for this case remains unavailable. 
\end{remark}

\begin{proof}
{\bf Converse:} Unlike previous sections, the side information might not be symmetric. Hence, we need to define the events of Section~\ref{LB.1} for both communities $A$ and $B$. Let $H_{1}$ and $H_{2}$ be subsets of the true communities $A$ and $B$, respectively, with $|H_{1}| = |H_{2}| = \frac{n}{\log^{3}(n)}$. 

\begin{definition}
\label{Def.3}
Define the following events for nodes $i\in H_1$:
\begin{align*}
\Delta_{i}^1 &= \big\{ E[i,H_1] \leq \frac{\log(n)}{\log\log(n)}\big\} \nonumber \\
F_{i}^{H_{1}} &= \bigg\{ T E[i,A\backslash H_{1}] + \hbar_i + T + T \frac{\log(n)}{\log\log(n)} \leq T E[i,B] \bigg\} 
\end{align*}
and the following events for nodes $j\in H_2$:
\begin{align*}
\Delta_j^2 &= \big\{ E[j,H_2] \leq \frac{\log(n)}{\log\log(n)}\big\} \nonumber \\
F_{j}^{H_2} &= \bigg\{ T E[j,B\backslash H_2] - \hbar_j + T + T \frac{\log(n)}{\log\log(n)} \leq T E[j,A] \bigg\} 
\end{align*}
and the following overall events:
\begin{align*}
\Delta^1 &= \cap_{i\in H_1} \Delta_{i}^1  \quad&\quad \Delta^2 &= \cap_{j\in H_2} \Delta_{j}^2 \nonumber \\ 
F^{H_1} &= \cup_{i \in H_1} F_i^{H_1} \quad&\quad F^{H_2} &= \cup_{j \in H_2} F_j^{H_2} 
\end{align*}
\end{definition}

Lemmas~\ref{Le.4},~\ref{Le.5},~\ref{Le.6} remain valid according to Definition~\ref{Def.3}. It remains to show under which conditions $\mathbb{P}(F_{i}^{H_{1}}) > \frac{\log^{3}(n)}{n} \log(\frac{1}{\delta})$ and $\mathbb{P}(F_{j}^{H_{2}}) > \frac{\log^{3}(n)}{n} \log(\frac{1}{\delta})$, asymptotically for all $\delta \in(0,1)$.

\begin{lemma}\label{Le.9}
Both $\mathbb{P}(F_{i}^{H_{1}})$ and $\mathbb{P}(F_{j}^{H_{2}})$ are greater than $\frac{\log^{3}(n)}{n} \log(\frac{1}{\delta})$, $\delta \in(0,1)$ for sufficiently large $n$ if at least one of the following conditions holds:
\begin{itemize}
\item If there exists a sequence (over $n$) of  side information outcomes $[m_1,\ldots,m_K]$ such that $f_1(n)$, $f_2(n)$, $f_3(n)$  are all $o(\log(n))$ and concurrently $(\sqrt{a}-\sqrt{b})^{2}  < 2$.
\item If there exists a sequence (over $n$) of  side information outcomes $[m_1,\ldots,m_K]$ such that $f_1(n) =  o(\log(n))$, and $f_2(n),f_3(n)$ evolve according to $-\beta\log(n)+ o(\log(n)), \beta > 0$, and concurrently $(\sqrt{a}-\sqrt{b})^{2} + 2\beta  < 2$.
\item If there exists a sequence (over $n$) of  side information outcomes $[m_1,\ldots,m_K]$ such that $f_1(n) = \beta_1\log(n)+ o(\log(n)),  \quad |\beta_1| < T\frac{(a-b)}{2}$, and furthermore $f_2(n)=o(\log(n))$ if $\beta_1>0$ and  $f_3(n)=o(\log(n))$ if $\beta_1<0$, and concurrently  $\eta(a,b,|\beta_1|)  < 2$ .
\item If  there exists a sequence (over $n$) of  side information outcomes $[m_1,\ldots,m_K]$ such that $f_1(n) = \beta_2\log(n)+ o(\log(n)),  \quad |\beta_{2}| < T\frac{(a-b)}{2}$ and furthermore $f_2(n)= -\beta_{2}'\log(n)+ o(\log(n))$ if $\beta_2>0$ and $f_3(n)= -\beta_{2}'\log(n)+ o(\log(n))$ if $\beta_2<0$, and concurrently $\eta(a,b,|\beta_2|) + \beta_2'  < 2$.
\end{itemize}
\end{lemma}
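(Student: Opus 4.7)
The plan is to adapt the converse argument of Lemma~\ref{Le.7} to the general side-information setting. Since the graph and the side information are independent conditioned on the labels, I would lower bound $\mathbb{P}(F_i^{H_1})$ by restricting to any single favorable outcome sequence $[m_1^\star,\ldots,m_K^\star]$ for node $i \in A$:
\begin{equation*}
\mathbb{P}(F_i^{H_1}) \;\geq\; \Bigl(\prod_{k=1}^{K} \alpha_{+,m_k^\star}^k\Bigr)\, \mathbb{P}\Bigl( T(E[i,B] - E[i,A\setminus H_1]) \geq \hbar^\star + T + T\tfrac{\log n}{\log\log n}\Bigr),
\end{equation*}
where $\hbar^\star = \sum_k h_{m_k^\star}^k$. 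The side-information factor equals $\exp(f_2(n))$ and the graph-threshold shift is $f_1(n)$, so all four cases are driven entirely by the evolution of $f_1$ and $f_2$ (or $f_3$ by $A/B$ symmetry).

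The core technical step is a sharp two-sided large-deviation estimate for the shifted Binomial-difference in the regime $p = a\log n/n$, $q = b\log n/n$. A tilted-measure Chernoff optimization, matched by a Bahadur--Rao prefactor lower bound, yields
\begin{equation*}
\mathbb{P}\bigl(T(E[i,B]-E[i,A\setminus H_1]) \geq \beta\log n + o(\log n)\bigr) \;=\; n^{-\frac{1}{2}\eta(a,b,\beta) + o(1)},
\end{equation*}
valid for any signed $\beta$ with $|\beta| < T(a-b)/2$, under the convention $\eta(a,b,-\beta)=\eta(a,b,\beta)-2\beta$ (so the formula reduces to $(\sqrt{a}-\sqrt{b})^2$ at $\beta=0$). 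The additive slack $T\log n/\log\log n$ and the $|H_1| = n/\log^3 n$ truncation of the summation are $o(\log n)$ and $o(n)$ respectively, so both perturb only the $o(1)$ terms.

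Combining, the overall exponent of $\mathbb{P}(F_i^{H_1})$ equals $-\tfrac{1}{2}\eta(a,b,\beta_1) + f_2(n)/\log n + o(1)$ with $\beta_1 = f_1(n)/\log n$. The four sub-cases of the Lemma are arranged so that this exponent collapses to the desired form. In item~1 ($\beta_1=0$, $f_2 = o(\log n)$) it is $-\tfrac{1}{2}(\sqrt{a}-\sqrt{b})^2+o(1)$. In item~2, since $f_2 = -\beta\log n + o(\log n)$, it is $-\tfrac{1}{2}(\sqrt{a}-\sqrt{b})^2 - \beta + o(1)$. In item~3 with $\beta_1 > 0$ and $f_2 = o(\log n)$ it is $-\tfrac{1}{2}\eta(a,b,\beta_1) + o(1)$, while with $\beta_1 < 0$ and $f_3 = o(\log n)$ the identity $f_2 = f_1 + f_3$ forces $f_2 = \beta_1\log n + o(\log n)$ and the cancellation $-\tfrac{1}{2}(\eta(a,b,|\beta_1|)-2|\beta_1|) + \beta_1 = -\tfrac{1}{2}\eta(a,b,|\beta_1|)$ produces the same value; item~4 adds the extra $-\beta_2'$ by the analogous argument. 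In every case the stated hypothesis is precisely that the final exponent is strictly less than $1$, so $\mathbb{P}(F_i^{H_1}) \geq n^{-1+\Omega(1)} > \log^3(n)/n \cdot \log(1/\delta)$ eventually. The bound on $\mathbb{P}(F_j^{H_2})$ follows by swapping $(+, -)$ and the roles of $f_2, f_3$.

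The main obstacle is proving the sharp large-deviation estimate with a matching lower bound. The upper direction is a routine multiplicative Chernoff computation, but the lower direction requires a Bahadur--Rao type prefactor refinement on the exponentially tilted distribution to exclude a polynomial-in-$n$ gap between the Chernoff bound and the true probability. The optimal tilt $\lambda_\star$ depends explicitly on $\beta$ (and one must verify $\lambda_\star > 0$, which is precisely the restriction $|\beta| < T(a-b)/2$), and the $o(\log n)$ slack in the threshold together with the $n/\log^3 n$ truncation must be absorbed uniformly in $\beta$ so that all four case analyses produce truly $o(1)$ corrections in the final exponent.
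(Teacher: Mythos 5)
Your proposal follows the same overall strategy as the paper: lower bound $\mathbb{P}(F_i^{H_1})$ by restricting to a single favorable outcome sequence (the paper does this implicitly by retaining one term of a nested sum over outcomes), then lower bound the resulting graph-only large-deviation probability by an exponential tilt, and finally do the case-by-case bookkeeping in terms of $f_1, f_2, f_3$ with the identity $f_1 = f_2 - f_3$ and the reflection $\eta(a,b,-\beta) = \eta(a,b,\beta) - 2\beta$. That bookkeeping is carried out correctly in all four items.

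The one place where you diverge from the paper --- and where your account of the technical difficulty is a bit off --- is the large-deviation lower bound. You flag a ``Bahadur--Rao type prefactor refinement'' as the needed tool and as the main obstacle. This is heavier machinery than required, and it is also not quite the right fit: classical Bahadur--Rao is stated for i.i.d.\ random variables with a fixed law, whereas here the summands are Bernoulli with $n$-dependent parameters $a\log n/n$ and $b\log n/n$, so one would have to establish a triangular-array local CLT to use it as stated. The paper instead proves a self-contained bound (Lemma~\ref{Le.10}): exponentially tilt the measure to the optimal parameter $t^\ast$, then apply Chebyshev under the tilted law to get
\begin{equation*}
\mathbb{P}\Bigl(\tfrac{1}{n}\textstyle\sum_i X_i \geq a - \epsilon\Bigr) \;\geq\; e^{-n(t^\ast a - \Gamma(t^\ast) + |t^\ast|\epsilon)}\bigl(1 - \sigma_{\hat X}^2/(n\epsilon^2)\bigr),
\end{equation*}
with $\epsilon = \log^{2/3}(n)/n$ chosen so that the slack $|t^\ast|\epsilon$ contributes only $o(\log n)$ to the exponent and the Chebyshev factor is $1-o(1)$. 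Because only exponent sharpness up to $n^{o(1)}$ is needed (not a $\Theta(1/\sqrt{n})$ prefactor), this elementary Cram\'er-type lower bound is exactly enough, and it applies uniformly over the $n$-dependent Bernoulli parameters without any further work. Your proof would go through once you replace the Bahadur--Rao appeal with this tilt-plus-Chebyshev estimate; as written it leaves an unresolved claim about the applicability of Bahadur--Rao in the triangular-array regime.
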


\begin{proof}
Please see Appendix~\ref{sec:app1-5}
\end{proof}
Combining Lemma~\ref{Le.9} with Lemmas~\ref{Le.4},~\ref{Le.5}, and~\ref{Le.6} modified according to Definition~\ref{Def.3}, concludes the proof of converse.

%

{\bf Achievability:} Achievability of Theorem~\ref{Th.3} is proven via an algorithm whose exact recovery conditions are identical to the necessary conditions provided in Lemma~\ref{Le.9}. The first stage of the algorithm is the same as Section~\ref{Eff.1}. After the first stage, we have $G_{2}$, the side information $\boldsymbol{y}_{1},\cdots,\boldsymbol{y}_{K}$, $A^{'}$ and $B^{'}$. Locally modify the community assignment as follows: for a node $i \in A^{'}$, flip its membership if $E[i,B'] \ge E[i,A']+\frac{\hbar_i}{T}$ and for node $j \in B^{'}$, flip its membership if  $E[j,A'] \ge E[j,B']-\frac{\hbar_j}{T}$. If the the number of flips in each cluster is not the same, keep the clusters unchanged.

\begin{lemma}\label{Le.Eff.3}
The algorithm described above successfully recovers the communities with high probability if the following are satisfied simultaneously:
\begin{itemize}
\item If there exists a sequence (over $n$) of  side information outcomes $[m_1,\ldots,m_K]$ such that $f_1(n), f_2(n), f_3(n)$ are all $o(\log(n))$ and concurrently $(\sqrt{a}-\sqrt{b})^{2}  > 2$.
\item If there exists a sequence (over $n$) of  side information outcomes $[m_1,\ldots,m_K]$ such that $f_1(n) =  o(\log(n))$,  and
$f_2(n), f_3(n)$ evolve according to $-\beta\log(n)+ o(\log(n)), \beta > 0$ and concurrently $(\sqrt{a}-\sqrt{b})^{2} + 2\beta  > 2$.
\item If  there exists a sequence (over $n$) of  side information outcomes $[m_1,\ldots,m_K]$ such that $f_1(n) = \beta_1\log(n)+ o(\log(n)),  \quad |\beta_1| < T\frac{(a-b)}{2}$, and furthermore $f_2(n)=o(\log(n))$ if $\beta_1>0$ and  $f_3(n)=o(\log(n))$ if $\beta_1<0$, and concurrently $\eta(a,b,|\beta_1|)  > 2$.
\item If there exists a sequence (over $n$) of  side information outcomes $[m_1,\ldots,m_K]$ such that $f_1(n) = \beta_2\log(n)+ o(\log(n)),  \quad |\beta_{2}| < T\frac{(a-b)}{2}$, and furthermore $f_2(n)= -\beta_{2}'\log(n)+ o(\log(n))$ if $\beta_2>0$ and $f_3(n)= -\beta_{2}'\log(n)+ o(\log(n))$ if $\beta_2<0$, and concurrently $\eta(a,b,|\beta_2|) + \beta_2'  > 2$.
\end{itemize}
\end{lemma}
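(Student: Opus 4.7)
The plan is to adapt the achievability proof of Theorem~\ref{Le.Eff.1} by replacing the scalar LLR $cy_i$ with the aggregate side information LLR $\hbar_i=\sum_{k=1}^K \hbar_{i,k}$. First I would bound the per-node misclassification probability $P_e$ under the idealized assumption that $H_2$ is a complete graph. Following the argument leading to~\eqref{eq.Eff.1}, a node $i\in A$ is mislabeled by the local improvement step when $E_{G_2}[i,B']-E_{G_2}[i,A']\geq \hbar_i/T$ after the first stage places $(1-\delta)n/2$ same-community neighbors correctly and $\delta n/2$ incorrectly. I would then invoke Lemma~\ref{Le.7.1} to discard at most $2Dn/\log(n)$ edges and reduce to the incomplete $H_2$, as in~\eqref{eq.Eff.2}.

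Next, I would condition on the joint side information outcome $[m_1,\dots,m_K]$, which occurs with probability $\prod_k \alpha_{+,m_k}^k$ for a node in $A$, and contributes an additive LLR $\sum_k \hfeat$ to the mislabeling inequality. For each fixed outcome I would apply the Chernoff/Cramer bound already developed in Lemma~\ref{Le.13} (Appendix~\ref{app.3}), which handles exactly this mixed binomial comparison shifted by a deterministic log-likelihood offset: that bound yields $n^{-(a+b)/2 + \sqrt{ab}\cosh(\text{shift}/2)/\ldots}$-type factors which, when the shift is $o(\log n)$, reduces to $n^{-(\sqrt{a}-\sqrt{b})^2/2+o(1)}$, and when the shift is $\beta\log n$, reduces to $n^{-\eta(a,b,|\beta|)/2+o(1)}$. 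Then I would sum over outcomes, weighting each term by $\prod_k \alpha_{+,m_k}^k$, which equals $n^{f_2(n)/\log(n)}$ in the relevant asymptotic regime.

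The four cases of the lemma then fall out by inspecting which outcome sequence dominates. In Case~1 both the LLR offset and the outcome probability weight are $n^{o(1)}$, giving $P_e \leq n^{-(\sqrt{a}-\sqrt{b})^2/2+o(1)}$; in Case~2 the LLR is negligible but the outcome is rare, so $P_e\leq n^{-(\sqrt{a}-\sqrt{b})^2/2-\beta+o(1)}$; in Case~3 the outcome probability is $n^{o(1)}$ but the LLR offset shifts the Chernoff exponent to $\eta(a,b,|\beta_1|)/2$; in Case~4 both effects compound, yielding $n^{-\eta(a,b,|\beta_2|)/2-\beta_2'+o(1)}$. Since there are only finitely many outcomes with $K$ fixed, the sum over outcomes only costs a constant factor. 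A union bound over the $n$ nodes, as in~\eqref{eq.232}, and the size-balance check in step~6 of the algorithm (which is handled by the symmetry argument using $F_A$ and $F_B$), then completes the proof.

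The main obstacle will be verifying that, within the summation over outcomes $[m_1,\dots,m_K]$, no outcome that is not explicitly covered by one of the four hypotheses can contribute a term larger than $n^{-1}$ asymptotically; equivalently, that the stated conditions exhaust the asymptotically dominant outcomes. This requires a careful case analysis of the combinations of $f_1,f_2,f_3$ scalings, using the monotonicity assumption on $\alpha_{\pm,m_k}^k$ to ensure each component of the sequence has a well-defined asymptotic order, and Remark~\ref{Remark10} to justify deferring the borderline regime $|\beta|\geq T(a-b)/2$. Once this bookkeeping is in place, the Chernoff estimate of Lemma~\ref{Le.13} supplies each individual bound and the rest is a routine union bound.
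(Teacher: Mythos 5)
Your proposal is correct and follows essentially the same route as the paper: condition the per-node misclassification probability on the joint side-information outcome $[m_1,\ldots,m_K]$, weight each term by $\prod_k \alpha_{+,m_k}^k = e^{f_2(n)}$ (resp.\ $\prod_k \alpha_{-,m_k}^k$), apply the Chernoff/Cramer bound of Lemma~\ref{Le.13} with the shift $\sum_k \hfeat/T$, and then sum over the finite set of outcomes and union-bound over the $n$ nodes, noting that $K$ and the $M_k$ are fixed so the outer sum costs only a constant. Your concern about exhaustiveness of the four cases is well-placed but is shared by the paper's own presentation; it is addressed there by the monotonicity assumption on $\alpha_{\pm,m_k}^k$ and by Remark~\ref{Remark10}, which defers the regime $|\beta|\geq T(a-b)/2$, exactly as you anticipate.
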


\begin{proof}
Define $P_{e} = \mathbb{P}(\text{node $i$ to be misclassified})$. Following similar analysis as in the proof of Lemma~\ref{Le.Eff.1} leads to:
\begin{align}
& P_{e} \leq  \frac{1}{2} \bigg( n^{-1-\Omega(1)} + \sum_{m_{1}=1}^{M_{1}}\sum_{m_{2}=1}^{M_{2}}\cdots \sum_{m_{K}=1}^{M_{K}} \prod_{k=1}^{K} (\alpha_{+,m_{k}}^{k}) \twocolbreak \times \mathbb{P}\bigg(\sum_{l=1}^{\frac{n}{2}} (Z_{l} - W_{l}) \geq \sum_{k=1}^{K}\frac{\hfeat}{T} + \psi_{n}\log(n) \bigg) \bigg)  +   \nonumber\\
& \frac{1}{2} \bigg( n^{-1-\Omega(1)} + \sum_{m_{1}=1}^{M_{1}}\sum_{m_{2}=1}^{M_{2}}\cdots \sum_{m_{K}=1}^{M_{K}} \prod_{k=1}^{K} (\alpha_{-,m_{k}}^{k}) \twocolbreak \times \mathbb{P}\bigg(\sum_{l=1}^{\frac{n}{2}} (Z_{l} - W_{l}) \geq -\sum_{k=1}^{K}\frac{\hfeat}{T} + \psi_{n}\log(n) \bigg) \bigg) \label{eq.Effs3.2}
\end{align}
where $\psi_{n} = o(1)$.

Similar to Lemma~\ref{Le.13}, it can be shown that any term inside the nested sum in~\eqref{eq.Effs3.2} is upper bounded by:
\begin{itemize}
\item $n^{-\frac{1}{2} (\sqrt{a}-\sqrt{b})^{2} + o(1)}$ if there exists a sequence (over $n$) of  side information outcomes $[m_1,\ldots,m_K]$ such that $f_1(n), f_2(n), f_3(n)$ are all $o(\log(n))$.

\item $n^{-\frac{1}{2} (\sqrt{a}-\sqrt{b})^{2} - \beta + o(1)}$ if there exists a sequence (over $n$) of  side information outcomes $[m_1,\ldots,m_K]$ such that $f_1(n) =  o(\log(n))$ and $f_2(n), f_3(n)$ evolve according to $-\beta\log(n)+ o(\log(n)), \quad \beta > 0$

\item $n^{-\frac{1}{2}\eta(a,b,|\beta_1|) + o(1)}$ if there exists a sequence (over $n$) of  side information outcomes $[m_1,\ldots,m_K]$ such that $f_1(n) = \beta_1\log(n)+ o(\log(n)),  \quad |\beta_1| < T\frac{(a-b)}{2}$ and furthermore $f_2(n)=o(\log(n))$ if $\beta_1>0$ and  $f_3(n)=o(\log(n))$ if $\beta_1<0$.

\item $n^{-\frac{1}{2}\eta(a,b,|\beta_2|)-\beta_2' + o(1)}$ if there exists a sequence (over $n$) of  side information outcomes $[m_1,\ldots,m_K]$ such that $f_1(n) = \beta_2\log(n)+ o(\log(n)),  \quad |\beta_2| < T\frac{(a-b)}{2}$ and furthermore $f_2(n)= -\beta_{2}'\log(n)+ o(\log(n))$ if $\beta_2>0$ and $f_3(n)= -\beta_{2}'\log(n)+ o(\log(n))$ if $\beta_2<0$.
\end{itemize}
Since $K$ and $\{M_{k}, k=1,\ldots,\K\}$ are fixed, a union bound over the nodes concludes the proof of Lemma~\ref{Le.Eff.3}. 
\end{proof}
This concludes the proof of achievability.
\end{proof}

We now give an example of side information with $K=1$ and fixed cardinality and analyze the effect of the evolution of the distribution of side information with growing $n$. 

Consider the weakly symmetric side information whose transition probability matrix $\mathbb{P}(y|x)$ is defined as follows: every row of the transition matrix $\mathbb{P}(\cdot|x)$ is a permutation of every other row, and all the column sums $\sum_{x} \mathbb{P}(y|x)$ are equal. Since the labels are either $1$ or $-1$, all the column sums are $\frac{2}{M}$. Without loss of generality, assume the first row $\mathbb{P}(y|x=+1)$ is arranged in descending order, i.e. $\mathbb{P}(y_{l+1}|x=+1)\geq \mathbb{P}(y_{l}|x=+1)$, $1\leq l \leq M-1$. Thus, for even $M$ (odd $M$ follow similarly), by the weakly symmetry property of $\mathbb{P}(y|x)$:  $\alpha_{\pm,l} + \alpha_{\pm,M-l+1} = \frac{2}{M}$ and $h_{l} = -h_{M-l+1}$, $1 \leq l \leq \frac{M}{2}$. Thus, if $h_{\frac{M}{2}} = \beta \log(n) + o(\log(n))$, i.e., $h_{\frac{M}{2}} = O(\log(n))$, this implies that $h_{l} = O(\log(n))$ for all $1 \leq l \leq M$, and hence, this maps to the third case of Theorem~\ref{Th.3}. In other words, $\eta(a,b,|\beta|) > 2$ is necessary and sufficient for exact recovery (assuming $|\beta| < \frac{T(a-b)}{2}$). On the other hand, if $h_{\frac{M}{2}}$ is in the order of $o(\log(n))$, this maps to the first case of Theorem~\ref{Th.3}, and hence, side information does not change the exact recovery phase transition.

\subsection{Varying Number of Fixed-Quality Features}
\label{sec:varying}
In this section, $\alpha_{+,m_{k}}^{k}$ and $\alpha_{-,m_{k}}^{k}$ are independent of $n$. We study how many features $K$ are needed so that side information can improve the phase transition threshold of exact recovery. We show that when $K = o(\log(n))$, side information will not improve the phase transition of exact recovery. A direct extension of our result shows that with $K = O(\log(n))$, side information can improve the phase transition, but this result is omitted here both in the interest of brevity and in part because it can be considered a straight forward extension of~\cite[Theorem 4]{Abbe_1} which showed the result in the special case of $K = \log(n)$. 

\begin{theorem}\label{Th.vec.1}
Assume that $M_{k} = M$  and all features are i.i.d. conditioned on the labels. Let $\alpha_{+,m_{k}}^{k}$ and $\alpha_{-,m_{k}}^{k}$ be non-zero and independent of $n$. Then, if $K = o(\log(n))$,  $(\sqrt{a}-\sqrt{b})^{2} > 2$ is necessary and sufficient for exact recovery.
\end{theorem}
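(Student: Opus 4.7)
The plan is to reduce Theorem~\ref{Th.vec.1} to the first case of Theorem~\ref{Th.3}, namely $(\sqrt{a}-\sqrt{b})^2 > 2$ under $f_1(n), f_2(n), f_3(n) = o(\log(n))$, by showing that every realizable outcome sequence satisfies these three conditions. The key arithmetic observation is that since $\alpha_{+,m}, \alpha_{-,m}$ are fixed positive constants (independent of $n$) for each $m\in\{1,\ldots,M\}$, the per-feature LLR $|h_m|=|\log(\alpha_{+,m}/\alpha_{-,m})|$ is bounded by a constant $C_1$, and $|\log \alpha_{\pm,m}|$ is bounded by a constant $C_2$. Consequently, for every outcome sequence $[m_1,\ldots,m_K]$,
\[
|f_1(n)| \le C_1 K, \qquad |f_2(n)|, |f_3(n)| \le C_2 K.
\]
Under the hypothesis $K = o(\log(n))$, all three quantities are $o(\log(n))$, matching the scaling regime of the first case of Theorem~\ref{Th.3}.

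For sufficiency, I would run the same two-stage algorithm as in Lemma~\ref{Le.Eff.3}: partial recovery on $G_1$ followed by local modification on $G_2$, with the flipping threshold for node $i$ shifted by $\hbar_i/T = \sum_k \hfeat/T$. The per-node misclassification probability takes the form of~\eqref{eq.Effs3.2}, a nested sum over $M^K$ outcome sequences. Following the bounding steps of Lemma~\ref{Le.13} with the side-information correction $\hbar_i/T = o(\log(n))$ absorbed into the $o(1)$ slack, each term of the nested sum is at most $n^{-\frac{1}{2}(\sqrt{a}-\sqrt{b})^2 + o(1)}$. The new wrinkle compared to Lemma~\ref{Le.Eff.3} is that the number of outcome sequences now depends on $n$; however, $M^K = e^{K \log M} = n^{o(1)}$, so this factor is dominated by the $n^{o(1)}$ slack already present. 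A final union bound over the $n$ nodes (and both communities) gives ${\mathbb P}(\text{failure}) \le n^{1-\frac{1}{2}(\sqrt{a}-\sqrt{b})^2 + o(1)}$, which vanishes when $(\sqrt{a}-\sqrt{b})^2 > 2$.

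For necessity, I would re-use the event-based framework of Definition~\ref{Def.3} and the argument underlying the first case of Lemma~\ref{Le.9}. The event $F_i^{H_1}$ involves the threshold $T E[i,A\setminus H_1] + \hbar_i + T + T \tfrac{\log(n)}{\log\log(n)} \le T E[i,B]$, and since $|\hbar_i| = O(K) = o(\log(n))$ uniformly over all outcome sequences, the $\hbar_i$ term is dominated by the existing $o(\log(n))$ correction and therefore does not affect the leading exponent in the Cramer/Chernoff estimate of $\mathbb{P}(F_i^{H_1})$. The same argument used for the no-side-information case (or equivalently the $f_1 = o(\log(n))$ case of Lemma~\ref{Le.9}) then yields $\mathbb{P}(F_i^{H_1}) > \tfrac{\log^3(n)}{n} \log(\tfrac{1}{\delta})$ whenever $(\sqrt{a}-\sqrt{b})^2 < 2$. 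Combining with the appropriately adapted versions of Lemmas~\ref{Le.4},~\ref{Le.5},~\ref{Le.6} shows that ML fails with probability bounded away from zero.

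The main obstacle is technical rather than conceptual: verifying that the proofs of Lemma~\ref{Le.13} and Lemma~\ref{Le.9}, which were stated for fixed $K$ and varying quality, carry over to growing $K$ of fixed quality. The two ingredients that must be checked carefully are (i) the uniform bound $|\hbar_i| = o(\log(n))$ over all $M^K$ outcome sequences, which follows from the per-feature LLR being bounded, and (ii) the harmlessness of the union bound over outcomes, which follows from $M^K = n^{o(1)}$ when $K = o(\log(n))$. Once these two points are established, the entire machinery of Section~\ref{sec:general} applies and collapses onto the classical $(\sqrt{a}-\sqrt{b})^2 \gtrless 2$ threshold of~\cite{exact_sbm}.
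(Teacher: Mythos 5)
Your proof is correct, and it diverges from the paper's in two places that are worth noting.

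For the converse, the paper's Lemma~\ref{Le.vec.1} keeps $\hbar_i$ as a random variable and exploits the i.i.d.\ structure across features: the moment generating function factorizes, contributing a term $-\frac{K\log(\mathbb{E}_+[e^{-t\hbar_{ik}}])}{\log n}$ inside the Chernoff exponent, which vanishes because $K = o(\log n)$ while the per-feature MGF is a constant. You instead bound $\hbar_i$ deterministically by $C_1 K = o(\log n)$ and absorb it into the existing $o(\log n)$ threshold correction (which already contains $\frac{\log n}{\log\log n}$). Your route is more elementary and requires nothing beyond what is already in Definition~\ref{Def.3}; the paper's route would be needed in a sharper regime ($K = \Theta(\log n)$) where the MGF contribution survives, but for $K = o(\log n)$ the two are equivalent. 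For the achievability, the paper simply observes that the known side-information-free algorithm already achieves recovery when $(\sqrt{a}-\sqrt{b})^2 > 2$, so one can ignore the side information entirely; this is a one-line argument. You redo the analysis of the side-information-aware two-stage algorithm, which is more work but yields the stronger conclusion that the practical algorithm that actually uses $\boldsymbol{y}$ also succeeds at the same threshold. One small inefficiency in your achievability: you control the nested sum by $M^K \cdot \max(\text{term}) = n^{o(1)} \cdot n^{-\frac{1}{2}(\sqrt{a}-\sqrt{b})^2+o(1)}$, but since $\sum_{m_1,\ldots,m_K}\prod_k\alpha^k_{+,m_k} = 1$, the sum is directly bounded by the uniform $n^{-\frac{1}{2}(\sqrt{a}-\sqrt{b})^2+o(1)}$ estimate without the $M^K$ factor; your version is still correct because the extra factor is $n^{o(1)}$, just slightly looser than necessary.
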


\begin{proof}
{\bf Converse:} Using Definition~\ref{Def.3}, it remains to show under what conditions $\mathbb{P}(F_{i}^{H_{1}}) > \frac{\log^{3}(n)}{n} \log(\frac{1}{\delta})$ and $\mathbb{P}(F_{j}^{H_{2}}) > \frac{\log^{3}(n)}{n} \log(\frac{1}{\delta})$ asymptotically  for all $\delta \in(0,1)$.

\begin{lemma}\label{Le.vec.1}
For $K = o(\log(n))$, both $\mathbb{P}(F_{i}^{H_{1}})$ and $\mathbb{P}(F_{j}^{H_{2}})$ are greater than $\frac{\log^{3}(n)}{n} \log(\frac{1}{\delta})$, $\delta \in(0,1)$ for sufficiently large $n$ if $(\sqrt{a}-\sqrt{b})^{2} < 2$.
\end{lemma}

\begin{proof}
Let $W_{i} \sim Bern(p)$, $Z_{i} \sim Bern(q)$. Then, in a manner similar to Lemmas~\ref{Le.9},~\ref{Le.10}:
\begin{align}
& \mathbb{P}(F_{i}^{H_1}) \nonumber \\ 
\geq &  \mathbb{P}\bigg( \sum_{l=1}^{\frac{n}{2}} [Z_{l} - W_{l}] \geq \frac{\hbar_i}{T} + 1 + \frac{\log(n)}{\log\log(n)}\bigg) \nonumber\\
\geq & e^{-\log(n) (1+o(1)) (\sup_t \frac{a+b}{2} - \frac{b}{2} (\frac{a}{b})^{t} - \frac{a}{2} (\frac{a}{b})^{-t} -\frac{K \log(\mathbb{E}_{+}[e^{-t \hbar_{ik}}]) }{\log(n)} ) }  \label{vec.eq.1}
\end{align}
where $\mathbb{E}_{+}[e^{-t \hbar_{ik}}]$ is the moment generating function of the side information LLR, for feature $k$ of node $i$, conditioned on $x_i=1$. Since $K = o(\log(n))$, substituting in~\eqref{vec.eq.1} leads to:
\begin{align}
\mathbb{P}(F_{i}^{H_{1}}) & \geq e^{-\log(n) (1+o(1)) (\sup_{t \in \mathbb{R}} \frac{a+b}{2} - \frac{b}{2} (\frac{a}{b})^{t} - \frac{a}{2} (\frac{a}{b})^{-t} ) } \nonumber \\
& \geq n^{-\frac{1}{2}(\sqrt{a}-\sqrt{b})^{2} + o(1)}
\end{align}
where the last inequality holds by evaluating the supremum. Thus, if $\frac{1}{2}(\sqrt{a}-\sqrt{b})^{2} < 1$, $\mathbb{P}(F_{i}^{H_{1}}) > \frac{\log^{3}(n)}{n} \log(\frac{1}{\delta})$, $\delta \in(0,1)$ for sufficiently large $n$.

Similarly,
\begin{align}
&\mathbb{P}(F_{j}^{H_{2}}) \nonumber \\
\geq &  e^{-\log(n) (1+o(1)) (\sup_{t \in \mathbb{R}} \frac{a+b}{2} - \frac{b}{2} (\frac{a}{b})^{t} - \frac{a}{2} (\frac{a}{b})^{-t} -\frac{K \log(\mathbb{E}_{-}[e^{t \hbar_{jk}}]) }{\log(n)} ) }   \nonumber \\
\geq & n^{-\frac{1}{2}(\sqrt{a}-\sqrt{b})^{2} + o(1)}
\end{align}
Thus, $\frac{1}{2}(\sqrt{a}-\sqrt{b})^{2} < 1$ implies $\mathbb{P}(F_{j}^{H_{2}}) > \frac{\log^{3}(n)}{n} \log(\frac{1}{\delta})$ for all $\delta \in(0,1)$ for sufficiently large $n$.
\end{proof}
Combining Lemmas~\ref{Le.4},~\ref{Le.5},~\ref{Le.6},~\ref{Le.vec.1} concludes the proof of converse.

{\bf Achievability:} It is known that $\frac{1}{2}(\sqrt{a}-\sqrt{b})^{2} > 1$ is sufficient if the only observation was the graph. Combining this with the converse completes the proof.
\end{proof}


%

\appendices

\section{Proof of Lemma~\ref{Le.7}}
\label{sec:app1-1}

Define $l = \frac{n}{2}$ and $\Gamma(t) \triangleq \log(\mathbb{E}_{X}[e^{tx}])$ for a random variable $X$. Then,
\begin{align}
& \mathbb{P}(F_{i}^{H})  = \mathbb{P}\bigg( \sum_{k=1}^{\frac{n}{2}} Z_{k} - \!\!\!\!\!\!\!\! \sum_{k=1}^{\frac{n}{2} - \frac{n}{\log^{3}(n)}} \!\!\!\!\!\!\!\! W_{k} - cy_{i} \geq T + T\frac{\log(n)}{\log\log(n)}\bigg) \nonumber \\ 
\geq & \mathbb{P}\bigg( \sum_{k=1}^{\frac{n}{2}} [Z_{k} - W_{k}] \geq cy_{i} + T + T\frac{\log(n)}{\log\log(n)}\bigg) \nonumber \\ 
= & (1-\alpha) \mathbb{P}\bigg(  \frac{1}{l} \sum_{k=1}^{\frac{n}{2}} [Z_{k} - W_{k}] \geq \frac{1}{l} (c + T + T\frac{\log(n)}{\log\log(n)}) \bigg) \nonumber \\ 
& + \alpha \mathbb{P}\bigg(  \frac{1}{l} \sum_{k=1}^{\frac{n}{2}} [Z_{k} - W_{k}] \geq \frac{1}{l} (-c + T + T\frac{\log(n)}{\log\log(n)}) \bigg) \nonumber \\ 
\overset{(a)}{\geq} & (1-\alpha) e^{-l\big(t_{1}^{*}a_{1} - \Gamma(t_{1}^{*}) + |t_{1}^{*}|\delta\big) } (1-o(1))  \twocolbreak + \alpha e^{-l\big(t_{2}^{*}a_{2} - \Gamma(t_{2}^{*}) + |t_{2}^{*}|\delta\big) } (1-o(1))\label{eq.9}
\end{align}
where $(a)$ uses Lemma~\ref{Le.10} in Appendix~\ref{sec:app1-3} and the following definitions: $\delta \triangleq \frac{\log^{\frac{2}{3}}(n)}{l}$, $a_{1} \triangleq  \frac{1}{l} (c + T + T\frac{\log(n)}{\log\log(n)}) + \delta$,  $a_{2} \triangleq \frac{1}{l} (-c + T + T\frac{\log(n)}{\log\log(n)}) + \delta$, and:
\[
t_{1}^{*} = \arg\sup (ta_{1} - \Gamma(t)), \qquad t_{2}^{*} = \arg\sup ( ta_{2} - \Gamma(t))
\]
The supremum at $t_1^*$ is calculated as follows; $t_2^*$ is obtained similarly.
\begin{equation}
 ta_{1} - \Gamma(t) = ta_{1} - \log\big(1 - q(1-(\frac{a}{b})^{t}) \big) - \log\big(1 - p(1-(\frac{a}{b})^{-t}) \big)
\label{main.sup}
\end{equation}
The right hand side is concave in $t$, so we set the derivative to zero:
\begin{align}\label{main.diff}
&a_{1} - \frac{Tq(\frac{a}{b})^{t}}{1-q(1-(\frac{a}{b})^{t})} + \frac{Tp(\frac{a}{b})^{-t}}{1-p(1-(\frac{a}{b})^{-t})} \nonumber \\ 
& = \frac{\log(n)}{n} \bigg(\frac{2c}{\log(n)} + \frac{2T}{\log(n)} + \frac{2T}{\log\log(n)} + \frac{2}{\log^{\frac{1}{3}}(n)} - \twocolbreak \frac{Tb(\frac{a}{b})^{t}}{1-q(1-(\frac{a}{b})^{t})} + \frac{Ta(\frac{a}{b})^{-t}}{1-p(1-(\frac{a}{b})^{-t})}\bigg) \nonumber\\
&= 0
\end{align}

We consider two asymptotic regimes for $\alpha$:
\begin{itemize}
\item $c = o(\log(n))$. Then, the first four terms on the right hand side of~\eqref{main.diff} are $o(1)$. This suggests that $t^{*} = \frac{1}{2}$. Hence, substituting back in~\eqref{main.sup} leads to: 
\begin{align}
\twocolAlignMarker ta_{1} - \Gamma(t) \twocolnewline
&= \frac{1}{2} a_{1} \! - \! \log\big(1 \!-\! q(1\!-\! (\sqrt{\frac{a}{b}})) \big) \! - \! \log\big(1 \! - \! p(1\!-\! (\sqrt{\frac{b}{a}}) \big) \nonumber\\ 
& \overset{(a)}{\leq} \frac{1}{2} a_{1} + \frac{q(1-(\sqrt{\frac{a}{b}}))}{1 - q(1-(\sqrt{\frac{a}{b}}))} + \frac{p(1-(\sqrt{\frac{b}{a}}))}{1 - p(1-(\sqrt{\frac{b}{a}}))}\nonumber \\ 
& \overset{(b)}{=} \frac{\log(n)}{n} \bigg( (\sqrt{a} - \sqrt{b})^{2} + o(1) \bigg)\label{main.sup1}
\end{align}
where $(a)$ holds because $\log(1-x) \geq \frac{-x}{1-x}$ and $(b)$ holds because both $(1 - q(1-(\sqrt{\frac{a}{b}})))$ and $(1 - q(1-(\sqrt{\frac{a}{b}})))$ $\to 1$ as $n \to \infty$. Thus, we can bound the term involving $t_1^*$ as follows:
\begin{align}\nonumber
 & e^{-l\big(t_{1}^{*}a_{1} - \Gamma(t_{1}^{*}) + |t_{1}^{*}|\delta\big) } \geq e^{-\log(n)\big(\frac{1}{2} (\sqrt{a} - \sqrt{b})^{2} + o(1)\big)}
 \end{align}
We can similarly bound the term involving $t_2^*$ and substitute both in~\eqref{eq.9} to get:
\begin{align}\nonumber
\mathbb{P}(F_{i}^{H}) &\geq n^{-0.5(\sqrt{a}-\sqrt{b})^2 + o(1)} 
\end{align}
Thus, if $(\sqrt{a}-\sqrt{b})^2 \leq 2 - \varepsilon$ for some $0 < \varepsilon < 2$, then $\mathbb{P}(F_{i}^{H}) \geq n^{-1+\frac{\varepsilon}{2}} > \frac{\log^{3}(n)}{n} \log(\frac{1}{\delta})$ for $\delta\in (0,1)$ for sufficiently large $n$. This proves the first case of Lemma~\ref{Le.7}.

\item $c = \beta\log(n) + o(\log(n))$, $\beta >0$. Substituting in~\eqref{main.diff}, this suggests that $t_{1}^{*} = \frac{1}{T}\log(\frac{\gamma + \beta}{bT})$ and $t_{2}^{*} = \frac{1}{T}\log(\frac{\gamma - \beta}{bT})$, where $\gamma = \sqrt{\beta^{2} + abT^{2}}$. Hence, by substituting back in~\eqref{main.sup} and following the same ideas as in~\eqref{main.sup1}:
\begin{align}
\twocolAlignMarker ta_{1} - \Gamma(t) \twocolnewline
& \leq \frac{\log(n)}{n} \big ( 2\beta t^{*} \! + \! b(1\!-\! (\frac{a}{b})^{t^{*}})  + a(1\! -\! (\frac{a}{b})^{-t^{*}}) + o(1) \big) \nonumber\\
& = \frac{\log(n)}{n} \big( a+b+\beta -\frac{2\gamma}{T} + \frac{\beta}{T}\log(\frac{\gamma+\beta}{\gamma-\beta}) + o(1) \big) \nonumber \\ 
& = \frac{\log(n)}{n} (\eta(a,b,\beta) + o(1))\label{main.sup2}
\end{align}
We can then bound the term involving $t_1^*$ as follows:
\begin{align}\nonumber
& e^{-l\big(t_{1}^{*}a_{1} - \Gamma(t_{1}^{*}) + |t_{1}^{*}|\delta\big) } \geq e^{-\frac{\log(n)}{2}\big(\eta(a,b,\beta) + o(1)\big)}
\end{align}
We can similarly bound the term involving $t_2^*$ and substitute both in~\eqref{eq.9} to get:
\begin{align}\nonumber
\mathbb{P}(F_{i}^{H})& \geq  n^{-0.5\eta(a,b,\beta) + o(1)} + \alpha n^{-0.5\eta(a,b,\beta) + \beta + o(1)}\\ \nonumber
& = n^{-0.5\eta(a,b,\beta) + o(1)}
\end{align}
Thus, if $\eta(a,b,\beta) \leq 2 - \varepsilon$ for some $0 < \varepsilon < 2$, then $\mathbb{P}(F_{i}^{H}) \geq n^{-1+\frac{\epsilon}{2}} > \frac{\log^{3}(n)}{n} \log(\frac{1}{\delta})$ for $\delta\in (0,1)$ for sufficiently large $n$. This proves the second case of Lemma~\ref{Le.7}.

For the last case of Lemma~\ref{Le.7}, we begin as in~\eqref{eq.9} but take a different approach:
\begin{align}
&\mathbb{P}(F_{i}^{H}) \geq  \mathbb{P}\bigg( \sum_{k=1}^{\frac{n}{2}} [Z_{k} \!-\! W_{k}] \geq cy_{i} \!+\! T \!+\! T\frac{\log(n)}{\log\log(n)}\bigg) \nonumber \\ 
= & (1\!-\! \alpha) \Bigg(1\!-\! \mathbb{P}\bigg(\sum_{k=1}^{\frac{n}{2}} [Z_{k} \!-\! W_{k}] \leq c \!+\! T +\! T\frac{\log(n)}{\log\log(n)} \bigg)\Bigg) \nonumber\\
& + \alpha \Bigg(1\!-\! \mathbb{P}\bigg( \sum_{k=1}^{\frac{n}{2}} [Z_{k}\! - \!W_{k}] \leq -c \!+\! T \! + \! T\frac{\log(n)}{\log\log(n)} \bigg)\Bigg) \nonumber\\ 
\overset{(a)}{\geq} & 1- (1-\alpha)  \twocolbreak \times e^{-n \sup_{t>0} \frac{-t}{n}\big(c + T + T\frac{\log(n)}{\log\log(n)}\big) - \frac{1}{2} \log\big(\mathbb{E}(e^{-t[Z-W]})\big) } \nonumber \\
& \!\!-\alpha e^{-n \sup_{s>0} \frac{-s}{n}\big(-c + T + T\frac{\log(n)}{\log\log(n)}\big) - \frac{1}{2} \log\big(\mathbb{E}(e^{-s[Z- W]})\big) } \label{eq.9_2}
\end{align}
where $(a)$ is a Chernoff bound. 
A direct computation of the logarithmic term leads to:
\begin{align}
\twocolAlignMarker \log\big( \mathbb{E}\big[e^{-t[Z - W]}\big]\big) \twocolnewline 
&\overset{(a)}{=}  \log\big(1 \!- \!q(1-\big(\frac{p}{q}\big)^{-t}) \!+\! \log\big(1 \!-\! p(1-\big(\frac{p}{q}\big)^{t})\big) \nonumber\\ 
&  \overset{(b)}{\leq}-q(1-\big(\frac{p}{q}\big)^{-t}) - p(1-\big(\frac{p}{q}\big)^{t})\label{eq.9_2_2}
\end{align}
where $(a)$ follows from the fact that $W_{i}$, $Z_{i}$ are independent random variables $\forall i$, and $(b)$ holds because $\log(1-x) \leq -x$. Substituting~\eqref{eq.9_2_2} into~\eqref{eq.9_2} yields:
\begin{align}
&\mathbb{P}(F_{i}^{H}) \geq 1 \twocolbreak -(1-\alpha)   e^{- \frac{\log(n)}{n} \sup_{t>0} -t(\beta+o(1)) + \frac{1}{2} \big( a+b-a(\frac{a}{b})^{t} - b(\frac{a}{b})^{-t}\big) } \nonumber \\
& -\alpha e^{- \frac{\log(n)}{n} \sup_{s>0} [s(\beta+o(1)) + \frac{1}{2} ( a+b-a(\frac{a}{b})^{s} - b(\frac{a}{b})^{-s}) ]}  \label{eq.9_2_2_2}
\end{align}
Recall that $\beta>0$. Since  $-t(\beta+o(1)) + \frac{1}{2} \big( a+b-a(\frac{a}{b})^{t} - b(\frac{a}{b})^{-t}\big)$ is concave in $t$. We find its equilibrium by taking the derivative:
\begin{align}\label{main.diff2}
-\beta - \frac{aT}{2}(\frac{a}{b})^{t} +  \frac{bT}{2}(\frac{a}{b})^{-t} = 0
\end{align}
The derivative has a zero at $\frac{1}{T}\log(\frac{\gamma - \beta}{aT})$ which is negative due to positivity of $\beta$, therefore by continuity, the supremum over $t>0$ is achieved at $t^*=0$. Similarly the supremum over $s$ can be calculated via a derivative, finding $s^*=(\frac{1}{T}\log(\frac{\gamma + \beta}{aT}))^+$, which is positive as long as $\beta > \frac{T(a-b)}{2}$. Since $s^*=0$ leads to a trivial bound, consider $\beta>\frac{T(a-b)}{2}$ and substitute in~\eqref{eq.9_2}.
\begin{align}\nonumber
\mathbb{P}(F_{i}^{H}) & \geq 1 - (1-\alpha) e^{0} - \alpha n^{-\frac{1}{2}\eta(a,b,\beta) + \beta} \\ \nonumber
& = n^{-\beta} - n^{-\frac{1}{2}\eta(a,b,\beta)}
\end{align}
using $\alpha = n^{-\beta}$. Hence, if $\beta \leq 1 - \varepsilon_{1}$ and $\frac{1}{2}\eta \geq 1+\varepsilon_{2}$, then $\mathbb{P}(F_{i}^{H}) \geq n^{-1} (n^{\varepsilon_{1}} - n^{-\varepsilon_{2}}) > \frac{\log^{3}(n)}{n} \log(\frac{1}{\delta})$ for $\delta\in (0,1)$ for sufficiently large $n$. This proves the third and last case of Lemma~\ref{Le.7}.

\end{itemize}



\section{Proof of Lemma~\ref{Le.13}}
\label{app.3}
By upper bounding $P_{e}$, we get:
\begin{align}
P_{e} \leq & \mathbb{P}\bigg(\sum_{k=1}^{(1-\delta)\frac{n}{2}} Z_{k} + \sum_{k=1}^{\delta\frac{n}{2}} W_{k} \geq \sum_{j=1}^{(1-\delta)\frac{n}{2}-\frac{2D}{\log(n)}n} W_{j}  + \frac{c}{T}y_{i}\bigg) \nonumber\\ 
\leq & \mathbb{P}\bigg(\sum_{k=1}^{\frac{n}{2}} Z_{k} + \sum_{k=1}^{\delta\frac{n}{2}} W_{k} \geq \sum_{j=1}^{(1-\delta)\frac{n}{2}-\frac{2D}{\log(n)}n} W_{j}  + \frac{c}{T}y_{i}\bigg) \nonumber\\ 
\leq & \mathbb{P}\bigg(\sum_{k=1}^{\frac{n}{2}} Z_{k} - \sum_{k=1}^{\frac{n}{2}} W_{k} + \sum_{j=1}^{\delta n + \frac{2D}{\log(n)}n} W_{j}  \geq  \frac{c}{T}y_{i}\bigg) \nonumber
\end{align}
Defining  $\psi \triangleq \frac{1}{\sqrt{-\log(\delta)}}$
\begin{align}
P_e\leq & \mathbb{P}\bigg(\sum_{k=1}^{\frac{n}{2}} (Z_{k} - W_{k}) \geq \frac{c}{T}y_{i} - \psi\log(n)\bigg) + \twocolbreak
\mathbb{P}\bigg(\sum_{j=1}^{\delta n + \frac{2D}{\log(n)}n} W_{j} \geq  \psi\log(n)\bigg) \nonumber\\ 
= & (1-\alpha)\mathbb{P}\bigg(\sum_{k=1}^{\frac{n}{2}} (Z_{k} - W_{k}) \geq \frac{c}{T} - \psi\log(n)\bigg) + \nonumber\\ 
&\alpha \mathbb{P}\bigg(\sum_{k=1}^{\frac{n}{2}} (Z_{k} - W_{k}) \geq -\frac{c}{T} - \psi\log(n)\bigg) +  \twocolbreak \mathbb{P}\bigg(\sum_{j=1}^{\delta n + \frac{2D}{\log(n)}n} W_{j} \geq  \psi\log(n)\bigg)
\label{eq.ap.9}
\end{align}
Similar to Lemma~\ref{Le.5}, we use a multiplicative Chernoff bound on the sum of i.i.d.\ random variables $W_j$:
\[
\mathbb{P}\big(\sum_{j=1}^{m} W_j \geq t\mu\big) \leq \big(\frac{t}{e}\big)^{-t\mu} \quad,\quad \mu = m\, \mathbb{E}[W_j].
\]
where $m=\delta n + \frac{2D}{\log(n)}n$,  $\mu = a(\delta\log(n) + 2D)$ and we set $t = \frac{\psi\log(n)}{a(\delta\log(n) + 2D)}$ to get:
\begin{align}
&\mathbb{P}\bigg(\!\!\!\!\sum_{j=1}^{\delta n + \frac{2D}{\log(n)}n} \!\!\!\!\!\!\! W_{j} \geq  \psi\log(n)\bigg) \leq \bigg(\frac{\psi\log(n)}{ae(\delta\log(n) + 2D)}\bigg)^{-\psi\log(n)} \nonumber\\ 
& = \bigg(\frac{\psi}{a\delta e(1 + \frac{2D}{\delta\log(n)} )}\bigg)^{-\frac{\log(n)}{\sqrt{\log(\frac{1}{\delta})}}} \nonumber\\ 
& = e^{\log(n) \bigg( \frac{1+\log(a)}{\sqrt{\log(\frac{1}{\delta})}} + \frac{\log(1 + \frac{2D}{\delta\log(n)})}{\sqrt{\log(\frac{1}{\delta})}} + \frac{\log(\delta) + \frac{1}{2}\log\log(\frac{1}{\delta})}{\sqrt{\log(\frac{1}{\delta})}} \bigg)} \nonumber\\ 
& = n^{-\sqrt{\log(\frac{1}{\delta})} \bigg( 1 - \frac{\log(1 + \frac{2D}{\delta\log(n)})}{\log(\frac{1}{\delta})}  + o(1)\bigg)  }\label{eq.ap.10}
\end{align}
where we used $\lim_{D \to \infty}\delta=0$. Since there exists $D$ sufficiently large such that $\frac{\log(1 + \frac{2D}{\delta\log(n)})}{\log(\frac{1}{\delta})} < 1$, 
\begin{equation}\label{eq.ap.11}
\mathbb{P}\bigg(\sum_{j=1}^{\delta n + \frac{2D}{\log(n)}n} W_{j} \geq  \psi\log(n)\bigg) \leq n^{-(1+\Omega(1))}\quad .
\end{equation}

Chernoff bound can be applied to the first term in~\eqref{eq.ap.9}:
\begin{align}\label{eq.ap.12}
&  (1-\alpha)\mathbb{P}\bigg(\sum_{k=1}^{\frac{n}{2}} (Z_{k} - W_{k}) \geq \frac{c}{T} - \psi\log(n)\bigg) + \twocolbreak \alpha \mathbb{P}\bigg(\sum_{k=1}^{\frac{n}{2}} (Z_{k} - W_{k}) \geq -\frac{c}{T} - \psi\log(n)\bigg) \nonumber \\ 
& \overset{(a)}{\leq} (1-\alpha) e^{-\frac{\log(n)}{2} \sup_{t_{1}>0} 2t_1 (\frac{c}{T\log(n)} - \psi) + a + b - be^{t_{1}} - a e^{-t_{1}}} + \twocolbreak
  \alpha e^{-\frac{\log(n)}{2} \sup_{t_{2}>0} 2t_{2} (-\frac{c}{T\log(n)} - \psi) + a + b - be^{t_{2}} - a e^{-t_{2}}}
\end{align}
where $(a)$ holds because $\log(1-x) \leq -x $. Since $\psi \to 0$ as $D \to \infty$, $\psi$ can be replaced by $o(1)$ for sufficiently large $D$. We consider the following asymptotic regimes for $\alpha$.
\begin{itemize}
\item If $c = o(\log(n))$, this suggests that $t_{1}^{*} = t_{2}^{*} = \frac{1}{2}T$. Hence, ~\eqref{eq.ap.12} can be upper bounded by: 
    \begin{align}\label{eq.ap.13}
    & n^{-\frac{1}{2}(\sqrt{a}-\sqrt{b})^{2} + o(1)}
    \end{align}

\item If $c = \beta\log(n)+o(\log(n))$, for $ 0 < \beta < \frac{T(a-b)}{2}$, then it can be shown that $t_{1}^{*} = \log(\frac{\gamma + \beta}{bT})$ and $t_{2}^{*} = \log(\frac{\gamma - \beta}{bT})$, where $\gamma = \sqrt{\beta^{2} + abT^{2}}$. Hence, ~\eqref{eq.ap.12} can be upper bounded by: 
    \begin{align}\label{eq.ap.14} 
    & (2-\alpha) n^{-\frac{1}{2}\eta(a,b,\beta) + o(1)}
    \end{align}

\item If $c = \beta\log(n)$, for $ \beta > \frac{T(a-b)}{2}$, then it can be shown that $t_{1}^{*} = \log(\frac{\gamma + \beta}{bT})$ and $t_{2}^{*} = 0$. Hence, ~\eqref{eq.ap.12} can be upper bounded by: 
    \begin{align}\label{eq.ap.15} 
    & (1-\alpha) n^{-\frac{1}{2}\eta(a,b,\beta) + o(1)} + n^{-\beta}
    \end{align}

\end{itemize}
The last three equations and~\eqref{eq.ap.11}, substituting in~\eqref{eq.ap.9}, concludes the proof of the lemma.


\section{Proof of Lemma~\ref{Le.8}}
\label{sec:app1-2}

Define $l = \frac{n}{2}$ and let $\Gamma(t) \triangleq \log(\mathbb{E}_{X}[e^{tx}])$ for a random variable $X$. Then,
\begin{align}
\mathbb{P}(F_{i}^{H}) &  \nonumber = \epsilon \mathbb{P}\bigg( \sum_{k=1}^{\frac{n}{2}} (Z_{k}) - \sum_{k=1}^{\frac{n}{2} - \frac{n}{\log^{3}(n)}} (W_{k}) \geq 1 + \frac{\log(n)}{\log\log(n)}\bigg) \\ \nonumber
& \geq \epsilon \mathbb{P}\bigg( \sum_{k=1}^{\frac{n}{2}} [Z_{k} - W_{k}] \geq  1 + \frac{\log(n)}{\log\log(n)}\bigg) \\ \nonumber
& \overset{(a)}{\geq} \epsilon e^{-l\big(t^{*}a -\Gamma(t_{1}^{*}) + |t_{1}^{*}|\delta\big) } (1-o(1)) \\ 
& =  e^{-l\big(t^{*}a -\Gamma(t_{1}^{*}) + |t_{1}^{*}|\delta\big) + \log(\epsilon) } (1-o(1))\label{eq.s2.9}
\end{align}
where $(a)$ holds by defining $\delta = \frac{\log^{\frac{2}{3}}(n)}{l}$, $a = \frac{1}{l} (1 + \frac{\log(n)}{\log\log(n)}) + \delta$, $t_{1}^{*} = \arg\sup_{t \in \mathbb{R}} at - \Gamma(t)$ and by using Lemma~\ref{Le.10} in Appendix~\ref{sec:app1-3}.

In a manner similar to~\eqref{main.sup} and~\eqref{main.diff}, it can be shown that $t^{*} = \frac{1}{2}T$. Substituting in~\eqref{eq.s2.9} and using $\log(1-x) \geq \frac{-x}{1-x}$:
\begin{equation}
\mathbb{P}(F_{i}^{H}) \geq \epsilon n^{-0.5(\sqrt{a}-\sqrt{b})^2 + o(1)} 
\end{equation}

When $\log(\epsilon) = o(\log(n))$ and $(\sqrt{a}-\sqrt{b})^2 \leq 2 - \varepsilon$ for some $0 < \varepsilon < 2$, for sufficiently large $n$ and all $\delta\in (0,1)$ we have:
\[
\mathbb{P}(F_{i}^{H}) \geq n^{-1+\frac{\varepsilon}{2}} > \frac{\log^{3}(n)}{n} \log(\frac{1}{\delta}).
\]
This proves the first case of Lemma~\ref{Le.8}. 

When $\log(\epsilon) = -\beta\log(n)+o(\log(n))$ for positive $\beta$, and $(\sqrt{a}-\sqrt{b})^2 + 2\beta \leq 2 - \varepsilon$ for some $0 < \varepsilon < 2$, then for sufficiently large $n$ and all $\delta\in (0,1)$ we have:
\[
\mathbb{P}(F_{i}^{H}) \geq n^{-1+\frac{\varepsilon}{2}} > \frac{\log^{3}(n)}{n} \log(\frac{1}{\delta})
\]
This proves the second and last case of Lemma~\ref{Le.8}.


\section{Proof of Lemma~\ref{Le.9}}
\label{sec:app1-5}

Let $W_{i} \sim Bern(p)$, $Z_{i} \sim Bern(q)$ and define $l = \frac{n}{2}$ and $\Gamma(t) \triangleq \log(\mathbb{E}_{X}[e^{tx}])$ for a random variable $X$. Then, we have the following:
\begin{align}
\twocolAlignMarker \mathbb{P}(F_{i}^{H_{1}}) \twocolnewline
&=  \mathbb{P}\bigg( \sum_{j=1}^{\frac{n}{2}} (Z_{j}) - \sum_{j=1}^{\frac{n}{2} - \frac{n}{\log^{3}(n)}} (W_{j}) \geq \frac{\hbar_i}{T} + 1 + \frac{\log(n)}{\log\log(n)}\bigg)  \nonumber \\ 
&\geq  \sum_{m_{1}=1}^{M_{1}}\sum_{m_{2}=1}^{M_{2}}\cdots\sum_{m_{K}=1}^{M_{K}} (\prod_{k=1}^{K}\alpha_{+,m_{k}}^{k}) \twocolbreak \mathbb{P}\bigg( \sum_{j=1}^{\frac{n}{2}} [Z_{j} - W_{j}] \geq \sum_{k=1}^{K}\frac{\hfeat}{T} + 1 + \frac{\log(n)}{\log\log(n)}\bigg) \nonumber \\
& \overset{(a)}{\geq} \sum_{m_{1}=1}^{M_{1}}\sum_{m_{2}=1}^{M_{2}}\cdots\sum_{m_{K}=1}^{M_{K}} (\prod_{k=1}^{K}\alpha_{+,m_{k}}^{k}) e^{-l\big(t^{*}a -\Gamma(t^{*}) + |t^{*}|\delta\big) }  \label{eq.s3.9}
\end{align}
where $(a)$ holds by defining $\delta = \frac{\log^{\frac{2}{3}}(n)}{l}$, $a = \frac{1}{l} (\sum_{k=1}^{K}\frac{\hfeat}{T} + 1 + \frac{\log(n)}{\log\log(n)}) + \delta$, $t^{*} = \arg\sup_{t \in \mathbb{R}} a t - \Gamma(t)$ and by using Lemma~\ref{Le.10}. For convenience $a$ and $t^{*}$ have no subscripts even though both depend on feature outcomes. 
Similarly,
\begin{align}
\mathbb{P}(F_{j}^{H_{2}}) &   
\geq \sum_{m_{1}=1}^{M_{1}}\sum_{m_{2}=1}^{M_{2}}\cdots\sum_{m_{K}=1}^{M_{K}} (\prod_{k=1}^{K}\alpha_{+,m_{k}}) e^{-l\big(t^{*}a -\Gamma(t^{*}) + |t^{*}|\delta\big) }   \label{eq.s3.9.1}
\end{align}
where $a = \frac{1}{l} (-\sum_{k=1}^{K}\frac{\hfeat}{T} + 1 + \frac{\log(n)}{\log\log(n)}) + \delta$.

Without loss of generality, we focus on one term of the nested sum in~\eqref{eq.s3.9} and~\eqref{eq.s3.9.1}. Then,
\begin{itemize}
\item If $\sum_{k=1}^{K} \hfeat = o(\log(n))$ and both $\sum_{k=1}^{K} \log(\alpha_{+,m_{k}}^{k})$ and $\sum_{k=1}^{K} \log(\alpha_{-,m_{k}}^{k})$ are $ o(\log(n))$, then the optimal $t$ for that term is $t^{*} = \frac{1}{2}T$ for both~\eqref{eq.s3.9},~\eqref{eq.s3.9.1}. Hence, substituting in~\eqref{eq.s3.9}, ~\eqref{eq.s3.9.1} leads to:
\begin{align}
\mathbb{P}(F_{i}^{H_{1}}) &\geq  n^{-0.5(\sqrt{a}-\sqrt{b})^2 + o(1)} \\
\mathbb{P}(F_{j}^{H_{2}}) &\geq  n^{-0.5(\sqrt{a}-\sqrt{b})^2 + o(1)}
\end{align}
Thus, if $(\sqrt{a}-\sqrt{b})^2 \leq 2 - \varepsilon$ for some $0 < \varepsilon < 2$, then $\mathbb{P}(F_{i}^{H_{1}})$ and $\mathbb{P}(F_{j}^{H_{2}})$ are both greater than $n^{-1+\frac{\varepsilon}{2}} > \frac{\log^{3}(n)}{n} \log(\frac{1}{\delta})$ for $\delta \in (0,1)$ for sufficiently large $n$.

\item If $\sum_{k=1}^{K} \hfeat =  o(\log(n)), \sum_{k=1}^{K}\log(\alpha_{+,m_{k}}^{k}) = \sum_{k=1}^{K} \log(\alpha_{-,m_{k}}^{k})=-\beta\log(n)+ o(\log(n)), \beta > 0$, then $t^{*} = \frac{1}{2}T$ for both~\eqref{eq.s3.9},~\eqref{eq.s3.9.1}. Hence, by substituting in~\eqref{eq.s3.9}, ~\eqref{eq.s3.9.1}:
\begin{align}
\mathbb{P}(F_{i}^{H_{1}}) &\geq  n^{-0.5(\sqrt{a}-\sqrt{b})^2 - \beta + o(1)}\\
\mathbb{P}(F_{j}^{H_{2}}) &\geq  n^{-0.5(\sqrt{a}-\sqrt{b})^2 - \beta + o(1)}
\end{align}
Thus, if $(\sqrt{a}-\sqrt{b})^2 + 2\beta \leq 2 - \varepsilon$ for some $0 < \varepsilon < 2$, then $\mathbb{P}(F_{i}^{H_{1}})$ and $\mathbb{P}(F_{j}^{H_{2}})$ are both greater than $n^{-1+\frac{\varepsilon}{2}} > \frac{\log^{3}(n)}{n} \log(\frac{1}{\delta})$ for $\delta \in (0,1)$ for sufficiently large $n$.

 \item If $\sum_{k=1}^{K} \hfeat = \beta\log(n)+ o(\log(n)),  0<\beta < T\frac{(a-b)}{2}$, then $t^{*} = \log(\frac{\gamma + \beta}{bT})$ for~\eqref{eq.s3.9} and $t^{*} = \frac{1}{T}\log(\frac{\gamma - \beta}{bT})$ for~\eqref{eq.s3.9.1}. Hence, by substituting in~\eqref{eq.s3.9}, ~\eqref{eq.s3.9.1}:
\begin{align*}
\mathbb{P}(F_{i}^{H_{1}}) &\geq  e^{-\log(n)\big( 0.5\eta(a,b,\beta) - \sum_{k=1}^{K}\frac{\log(\alpha_{+,m_{k}}^{k})}{\log(n)} + o(1) \big) } \\
\mathbb{P}(F_{j}^{H_{2}}) &\geq  e^{-\log(n)\big( 0.5\eta(a,b,\beta) - \beta - \sum_{k=1}^{K}\frac{\log(\alpha_{-,m_{k}}^{k})}{\log(n)} + o(1) \big) } 
\end{align*}
Then, if $\sum_{k=1}^{K} \log(\alpha_{+,m_{k}}^{k}) = o(\log(n))$, this implies that $\sum_{k=1}^{K} \frac{\log(\alpha_{-,m_{k}}^{k})}{\log(n)} = -\beta + o(1)$. Hence,
\begin{equation}
\mathbb{P}(F_{i}^{H_{1}}) \geq  n^{- 0.5\eta(a,b,\beta) + o(1)} 
\end{equation}
\begin{equation}
\mathbb{P}(F_{j}^{H_{2}}) \geq  n^{- 0.5\eta(a,b,\beta) + o(1)}
\end{equation}
Thus, if $\eta(a,b,\beta) \leq 2 - \varepsilon$ for some $0 < \varepsilon < 2$, then $\mathbb{P}(F_{i}^{H_{1}})$ and $\mathbb{P}(F_{j}^{H_{2}})$ are both greater than $n^{-1+\frac{\varepsilon}{2}} > \frac{\log^{3}(n)}{n} \log(\frac{1}{\delta})$ for $\delta \in (0,1)$ for sufficiently large $n$.

If $\sum_{k=1}^{K} \log(\alpha_{+,m_{k}}^{k}) = -\beta^{'}\log(n) + o(\log(n))$, this implies that $\sum_{k=1}^{K}\frac{\log(\alpha_{-,m_{k}}^{k})}{\log(n)} = -\beta^{''}$, for some $\beta^{''} > 0$ and $\beta = \beta^{''} - \beta^{'}$. Hence,
\begin{align}
\mathbb{P}(F_{i}^{H_{1}}) &\geq  n^{- 0.5\eta(a,b,\beta) - \beta^{'} + o(1)} \\
\mathbb{P}(F_{j}^{H_{2}}) & \geq  n^{- 0.5\eta(a,b,\beta) + \beta - \beta^{''} + o(1)} \nonumber\\ 
&=  n^{- 0.5\eta(a,b,\beta) - \beta^{'} + o(1)} 
\end{align}
Thus, it is clear that if $\eta(a,b,\beta) + 2\beta^{'} \leq 2 - \varepsilon$ for some $0 < \varepsilon < 2$, then $\mathbb{P}(F_{i}^{H_{1}})$ and $\mathbb{P}(F_{j}^{H_{2}})$ are both greater than $n^{-1+\frac{\varepsilon}{2}} > \frac{\log^{3}(n)}{n} \log(\frac{1}{\delta})$ $\delta \in (0,1)$ for sufficiently large $n$. The case when $-T\frac{(a-b)}{2}<\beta < 0$ holds similarly.
\end{itemize}

\section{Proof of Lemma~\ref{Le.10}}
\label{sec:app1-3}
\begin{lemma}\label{Le.10}
Let $X_{1},\cdots,X_{n}$ be a sequence of i.i.d random variables. Define $\Gamma(t) = \log(\mathbb{E}[e^{tX}])$. Then, for any $a ,\epsilon \in \mathbb{R}$:
\begin{align}\nonumber
&\mathbb{P}\big(\frac{1}{n} \sum_{i=1}^{n} X_{i} \geq a-\epsilon\big) \geq e^{-n\big(t^{*}a-\Gamma(t^{*})+|t^{*}|\epsilon\big)} \bigg( 1 - \frac{\sigma^{2}_{\hat{X}}}{n\epsilon^{2}} \bigg)
\end{align}
where $t^{*} = \arg\sup_{t\in \mathbb{R}} (ta - \Gamma(t))$, $\hat{X}$ is a random variable with the same alphabet as $X$ but distributed according to $\frac{e^{t^{*}x}\mathbb{P}(x)}{\mathbb{E}_{X}[e^{t^{*}X}]}$ and $\mu_{\hat{X}}, \sigma^{2}_{\hat{X}}$ are the mean and variance of $\hat{X}$, respectively.
\end{lemma}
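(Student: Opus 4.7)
The plan is to prove this lower bound via the standard exponential tilting (change-of-measure) technique, combined with a Chebyshev-inequality argument applied to the tilted measure. This is the natural refinement of Cramer's theorem that yields a quantitative rather than asymptotic bound.

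First I would introduce the tilted distribution $\hat{X}$ with density $\frac{e^{t^*x}\mathbb{P}(x)}{\mathbb{E}[e^{t^*X}]}$ and rewrite the probability of interest as an expectation under $\hat{X}^{\otimes n}$ by inserting the Radon--Nikodym derivative $\prod_{i=1}^n e^{-t^* X_i + \Gamma(t^*)}$. Concretely,
\begin{equation*}
\mathbb{P}\Bigl(\tfrac{1}{n}\sum_i X_i \geq a - \epsilon\Bigr) = \mathbb{E}_{\hat{X}}\Bigl[e^{-t^*\sum_i X_i + n\Gamma(t^*)}\,\mathbf{1}\{\tfrac{1}{n}\sum_i X_i \geq a-\epsilon\}\Bigr].
\end{equation*}

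Next, I would shrink the indicator event from $\{\tfrac{1}{n}\sum X_i \geq a-\epsilon\}$ down to the symmetric window $\mathcal{W} \triangleq \{a - \epsilon \leq \tfrac{1}{n}\sum X_i \leq a + \epsilon\}$. On this event, $|\sum X_i - na| \leq n\epsilon$, so regardless of the sign of $t^*$ we get the pointwise bound $e^{-t^*\sum X_i} \geq e^{-t^* n a - n|t^*|\epsilon}$. Plugging this in extracts the exponential factor cleanly:
\begin{equation*}
\mathbb{P}\Bigl(\tfrac{1}{n}\sum_i X_i \geq a - \epsilon\Bigr) \geq e^{-n(t^* a - \Gamma(t^*) + |t^*|\epsilon)}\,\mathbb{P}_{\hat{X}}(\mathcal{W}).
\end{equation*}

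The last step is to show $\mathbb{P}_{\hat{X}}(\mathcal{W}) \geq 1 - \sigma_{\hat{X}}^2/(n\epsilon^2)$. Here I would use the first-order optimality of $t^*$, namely $\Gamma'(t^*) = a$, which by a direct calculation gives $\mathbb{E}[\hat{X}] = a$. Since the $X_i$'s are i.i.d.\ under $\hat{\mathbb{P}}$ with variance $\sigma_{\hat{X}}^2$, Chebyshev's inequality yields
\begin{equation*}
\mathbb{P}_{\hat{X}}\Bigl(\bigl|\tfrac{1}{n}\sum_i X_i - a\bigr| > \epsilon\Bigr) \leq \frac{\sigma_{\hat{X}}^2}{n\epsilon^2},
\end{equation*}
and the claim follows by combining the two bounds.

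The main technical obstacle, such as it is, will be verifying the identity $\mathbb{E}[\hat{X}] = a$ at the optimizer $t^*$: this is immediate when $t^*$ lies in the interior of the domain of $\Gamma$ and $\Gamma$ is smooth there (so differentiation under the expectation is justified), but the lemma is stated without explicit regularity hypotheses. In the applications used in this paper (Bernoulli sums arising from the SBM edge variables), $\Gamma$ is analytic on all of $\mathbb{R}$, so no subtlety arises. A brief remark confirming this standard regularity---or equivalently, noting that $t^*$ solves $\Gamma'(t^*) = a$ and that this is the mean of the tilted law---should be enough to complete the argument rigorously.
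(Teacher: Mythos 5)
Your proposal is correct and follows essentially the same route as the paper's proof in Appendix~\ref{sec:app1-3}: exponential tilting to the measure $\hat{\mathbb{P}}$, shrinking the one-sided event to the symmetric window $|\frac{1}{n}\sum X_i - a|\leq\epsilon$, extracting the pointwise bound $e^{-t^*\sum x_i}\geq e^{-nt^*a-n|t^*|\epsilon}$, and closing with Chebyshev using $\mu_{\hat{X}}=a$ from the first-order condition on $t^*$. The paper phrases the change of measure as an integral manipulation and applies Chebyshev with an explicit bias term $(n\mu_{\hat X}-na)^2$ that it then shows vanishes, but this is cosmetic; your observation about regularity of $\Gamma$ is a reasonable caveat that the paper leaves implicit.
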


\begin{proof}
\begin{align}\nonumber
&\mathbb{P}\big(\frac{1}{n} \sum_{i=1}^{n} X_{i} \geq a-\epsilon\big) \geq \mathbb{P}\big(a-\epsilon \leq \frac{1}{n} \sum_{i=1}^{n} X_{i} \leq a+\epsilon\big) \\ \nonumber
= &  \int_{|\frac{1}{n} \sum x_{i}-a| \leq \epsilon}  \mathbb{P}(x_{1}) \cdots \mathbb{P}(x_{n}) dx_{1}\cdots dx_{n}\\ \nonumber
\overset{(a)}{\geq} &  e^{-n(ta-\Gamma(t)+|t|\epsilon)}  \int_{| \frac{1}{n} \sum x_{i} -a|\leq \epsilon} \, \prod_{i=1}^{n} \bigg( \frac{e^{tx_{i}}\mathbb{P}(x_{i})}{\mathbb{E}_{X}[e^{tX}]} dx_{i}\bigg)\\ \nonumber
\overset{(b)}{=} &  e^{-n(ta-\Gamma(t)+|t|\epsilon)} \, \mathbb{P}\bigg( a-\epsilon \leq \frac{1}{n} \sum_{i=1}^{n} \hat{X}_{i} \leq a+\epsilon \bigg) \\
\overset{(c)}{\geq} &  e^{-n(ta-\Gamma(t)+|t|\epsilon)} \bigg( 1 - \frac{ n\sigma^{2}_{\hat{X}} + (n\mu_{\hat{X}} - na)^{2} }{n^{2}\epsilon^{2}} \bigg)
\label{lem.ap}
\end{align}
where for all finite $\mathbb{E}[e^{tX}]$, $(a)$ is true becuase $e^{t\sum x_i} \le e^{n(ta+|t|\epsilon)}$ over the range of integration, $(b)$ holds because $\frac{e^{t x}\mathbb{P}_X(x)}{\mathbb{E}_{X}[e^{tX}]}$ is a valid distribution~\cite{large_dev}, and $(c)$ holds by Chebyshev inequality.

Since $(ta - \Gamma(t))$ is concave in $t$~\cite{large_dev}, to find $t^*=\arg \sup_t (ta - \Gamma(t))$ we set the derivative to zero, finding $a = \frac{\mathbb{E}_{X}[Xe^{t^{*}X}]}{\mathbb{E}[e^{t^{*}X}]}$. Also, by direct computation $\mu_{\hat{X}} = \frac{\mathbb{E}_{X}[Xe^{tX}]}{\mathbb{E}[e^{tX}]}$. This means that at $t = t^{*}$, we have $\mu_{\hat{X}} = a$. Thus, substituting back in~\eqref{lem.ap} leads to:
\begin{align*}
&\mathbb{P}\bigg(\frac{1}{n} {\sum}_{i=1}^{n} X_{i} \geq a-\epsilon\bigg) \geq e^{-n(t^{*}a-\Gamma(t^{*})+|t^{*}|\epsilon)} \big( 1 - \frac{\sigma^{2}_{\hat{X}}}{n\epsilon^{2}} \big)
\end{align*}
In our model $\epsilon = \frac{\log^{\frac{2}{3}}(n)}{n}$ and $X = T(Z - W)$, where $Z \sim \text{Bern}(q)$ and $W \sim \text{Bern}(p)$, where $T=\log(\frac{a}{b})$. Hence, $\sigma^{2}_{\hat{X}}= O(\frac{\log(n)}{n})$, therefore
\begin{align}\nonumber
&\mathbb{P}\bigg(\frac{1}{n} {\sum_{i=1}^{n}} X_{i} \geq a-\epsilon\bigg) \geq e^{-n(t^{*}a-\Gamma(t^{*})+|t^{*}|\epsilon)} \big( 1 - o(1) \big)
\end{align}
which concludes the proof.
\end{proof}
\bibliographystyle{IEEEtran}
\bibliography{IEEEabrv,ISIT2016_updated}

\begin{thebibliography}{10}
\providecommand{\url}[1]{#1}
\csname url@rmstyle\endcsname
\providecommand{\newblock}{\relax}
\providecommand{\bibinfo}[2]{#2}
\providecommand\BIBentrySTDinterwordspacing{\spaceskip=0pt\relax}
\providecommand\BIBentryALTinterwordstretchfactor{4}
\providecommand\BIBentryALTinterwordspacing{\spaceskip=\fontdimen2\font plus
\BIBentryALTinterwordstretchfactor\fontdimen3\font minus
  \fontdimen4\font\relax}
\providecommand\BIBforeignlanguage[2]{{%
\expandafter\ifx\csname l@#1\endcsname\relax
\typeout{** WARNING: IEEEtran.bst: No hyphenation pattern has been}%
\typeout{** loaded for the language `#1'. Using the pattern for}%
\typeout{** the default language instead.}%
\else
\language=\csname l@#1\endcsname
\fi
#2}}

\bibitem{SBM_first}
P.~Holland, K.~Laskey, and S.~Leinhardt, ``Stochastic blockmodels: First
  steps,'' \emph{Social Networks}, vol.~5, no.~2, pp. 109--137, June 1983.

\bibitem{Min_max_SBM}
A.~Zhang and H.~Zhou, ``Minimax rates of community detection in stochastic
  block models,'' \emph{The Annals of Statistics}, vol.~44, no.~5, pp.
  2252--2280, Oct. 2016.

\bibitem{stat1}
P.~J. Bickel and A.~Chen, ``A nonparametric view of network models and
  {Newman-Girvan} and other modularities,'' \emph{National Academy of
  Sciences}, vol. 106, no.~50, pp. 21\,068--21\,073, 2009.

\bibitem{stat2}
T.~T. Cai and X.~Li, ``Robust and computationally feasible community detection
  in the presence of arbitrary outlier nodes,'' \emph{The Annals of
  Statistics}, vol.~43, no.~3, pp. 1027--1059, June 2015.

\bibitem{stat3}
T.~A.~B. Snijders and K.~Nowicki, ``Estimation and prediction for stochastic
  blockmodels for graphs with latent block structure,'' \emph{Journal of
  Classification}, vol.~14, pp. 75--100, 1997.

\bibitem{SCT}
Y.~Chen and J.~Xu, ``Statistical-computational tradeoffs in planted problems
  and submatrix localization with a growing number of clusters and
  submatrices,'' \emph{J. Mach. Learn. Res.}, vol.~17, no.~1, pp. 882--938,
  Jan. 2016.

\bibitem{CS1}
A.~Coja-oghlan, ``Graph partitioning via adaptive spectral techniques,''
  \emph{Comb. Probab. Comput.}, vol.~19, no.~2, pp. 227--284, Mar. 2010.

\bibitem{CS2}
A.~Coja-Oghlan, ``A spectral heuristic for bisecting random graphs,'' in
  \emph{ACM-SIAM Symposium on Discrete Algorithms}, 2005, pp. 850--859.

\bibitem{CS3}
A.~Anandkumar, R.~Ge, D.~Hsu, and S.~M. Kakade, ``A tensor approach to learning
  mixed membership community models,'' \emph{J. Mach. Learn. Res.}, vol.~15,
  no.~1, pp. 2239--2312, Jan. 2014.

\bibitem{CS4}
Y.~Chen, S.~Sanghavi, and H.~Xu, ``Improved graph clustering,'' \emph{{IEEE}
  Trans. Inf. Theory}, vol.~60, no.~10, pp. 6440--6455, Oct. 2014.

\bibitem{Phys}
A.~Decelle, F.~Krzakala, C.~Moore, and L.~Zdeborov\'a, ``Asymptotic analysis of
  the stochastic block model for modular networks and its algorithmic
  applications,'' \emph{Phys. Rev. E}, vol.~84, p. 066106, Dec. 2011.

\bibitem{Phys2}
P.~Zhang, F.~Krzakala, J.~Reichardt, and L.~Zdeborov\'{a}, ``Comparative study
  for inference of hidden classes in stochastic block models,'' \emph{Journal
  of Statistical Mechanics: Theory and Experiment}, vol. 2012, no.~12, p.
  P12021, 2012.

\bibitem{social}
M.~Girvan and M.~E.~J. Newman, ``Community structure in social and biological
  networks,'' \emph{National Academy of Sciences}, vol.~99, no.~12, pp.
  7821--7826, 2002.

\bibitem{recommendation}
J.~Xu, R.~Wu, K.~Zhu, B.~Hajek, R.~Srikant, and L.~Ying, ``Jointly clustering
  rows and columns of binary matrices: Algorithms and trade-offs,''
  \emph{SIGMETRICS Perform. Eval. Rev.}, vol.~42, no.~1, pp. 29--41, June 2014.

\bibitem{protein}
J.~Chen and B.~Yuan, ``Detecting functional modules in the yeast
  protein-protein interaction network,'' \emph{Bioinformatics}, vol.~22,
  no.~18, pp. 2283--2290, Sept. 2006.

\bibitem{survey}
A.~Lancichinetti and S.~Fortunato, ``Community detection algorithms: A
  comparative analysis,'' \emph{Phys. Rev. E}, vol.~80, p. 056117, Nov. 2009.

\bibitem{survey1}
S.~Fortunato, ``Community detection in graphs,'' \emph{Physics Reports}, vol.
  486, no.~3, pp. 75 -- 174, Jan. 2010.

\bibitem{exact_general_sbm}
E.~Abbe and C.~Sandon, ``Community detection in general stochastic block
  models: Fundamental limits and efficient algorithms for recovery,'' in
  \emph{Symposium on Foundations of Computer Science (FOCS)}, ser. FOCS '15,
  2015, pp. 670--688.

\bibitem{weak_sbm1}
A.~Decelle, F.~Krzakala, C.~Moore, and L.~Zdeborov\'a, ``Inference and phase
  transitions in the detection of modules in sparse networks,'' \emph{Phys.
  Rev. Lett.}, vol. 107, p. 065701, Aug 2011.

\bibitem{weak_sbm}
E.~Mossel, J.~Neeman, and A.~Sly, ``Reconstruction and estimation in the
  planted partition model,'' \emph{Probability Theory and Related Fields}, vol.
  162, no.~3, pp. 431--461, 2014.

\bibitem{partial_alg}
L.~Massouli{\'e}, ``Community detection thresholds and the weak ramanujan
  property,'' in \emph{ACM Symposium on Theory of Computing}, ser. STOC '14,
  2014, pp. 694--703.

\bibitem{weak_sbm2}
E.~Mossel, J.~Neeman, and A.~Sly, ``A proof of the block model threshold
  conjecture,'' \emph{Combinatorica}, Aug. 2017.

\bibitem{correlated}
E.~Abbe and C.~Sandon, ``Proof of the achievability conjectures for the general
  stochastic block model,'' \emph{To appear in Communications on Pure and
  Applied Mathematics}, vol.~0, no.~0, Oct. 2017.

\bibitem{partial_sbm1}
S.~Yun and A.~Proutiere, ``Community detection via random and adaptive
  sampling,'' in \emph{Conference on Learning Theory}, 2014, pp. 138--175.

\bibitem{density}
E.~Mossel and J.~Xu, ``Density evolution in the degree-correlated stochastic
  block model,'' in \emph{Conference on Learning Theory}, June 2016, pp.
  1319--1356.

\bibitem{our}
H.~Saad, A.~Abotabl, and A.~Nosratinia, ``Exit analysis for belief propagation
  in degree-correlated stochastic block models,'' in \emph{IEEE International
  Symposium on Information Theory}, July 2016, pp. 775--779.

\bibitem{exact_sbm}
E.~Abbe, A.~Bandeira, and G.~Hall, ``Exact recovery in the stochastic block
  model,'' \emph{{IEEE} Trans. Inf. Theory}, vol.~62, no.~1, pp. 471--487, Jan.
  2016.

\bibitem{Consistency}
M.~Elchanan, J.~Neeman, and S.~Allan, ``Consistency thresholds for the planted
  bisection model,'' in \emph{ACM Symposium on Theory of Computing}, 2015, pp.
  69--75.

\bibitem{MoselXu:ACM16}
E.~Mossel and J.~Xu, ``Local algorithms for block models with side
  information,'' in \emph{ACM Conference on Innovations in Theoretical Computer
  Science}, 2016, pp. 71--80.

\bibitem{Cai:BP-BEC}
T.~T. Cai, T.~Liang, and A.~Rakhlin, ``Inference via message passing on
  partially labeled stochastic block models,'' \emph{arXiv:1603.06923v1}, 2016.

\bibitem{Kadavankandy:SingleCommunity}
A.~Kadavankandy, K.~Avrachenkov, L.~Cottatellucci, and R.~Sundaresan, ``The
  power of side-information in subgraph detection,'' \emph{IEEE Transactions on
  Signal Processing}, vol.~66, no.~7, pp. 1905--1919, April 2018.

\bibitem{Kanade:GlobalLocal}
V.~Kanade, E.~Mossel, and T.~Schramm, ``Global and local information in
  clustering labeled block models,'' \emph{{IEEE} Trans. Inf. Theory}, vol.~62,
  no.~10, pp. 5906--5917, Oct. 2016.

\bibitem{Abbe_1}
A.~R. Asadi, E.~Abbe, and S.~Verd\'{u}, ``Compressing data on graphs with
  clusters,'' in \emph{IEEE International Symposium on Information Theory},
  June 2017, pp. 1583--1587.

\bibitem{Newman:annotated}
M.~Newman and A.~Clauset, ``Structure and inference in annotated networks,'' in
  \emph{Nature communications}, 2016.

\bibitem{Yang:attribute}
J.~Yang, J.~McAuley, and J.~Leskovec, ``Community detection in networks with
  node attributes,'' in \emph{IEEE International Conference on Data Mining},
  Dec. 2013, pp. 1151--1156.

\bibitem{Zhang:features}
Y.~Zhang, E.~Levina, and J.~Zhu, ``Community detection in networks with node
  features,'' \emph{Electron. J. Statist.}, vol.~10, no.~2, pp. 3153--3178,
  2016.

\bibitem{our2}
H.~Saad, A.~Abotabl, and A.~Nosratinia, ``Exact recovery in the binary
  stochastic block model with binary side information,'' in \emph{Allerton
  Conference on Communication, Control, and Computing}, Oct. 2017, pp.
  822--829.

\bibitem{NIPS2016_6196}
S.~Yun and A.~Proutiere, ``Optimal cluster recovery in the labeled stochastic
  block model,'' in \emph{Advances in Neural Information Processing Systems},
  2016, pp. 965--973.

\bibitem{large_dev}
A.~Dembo and O.~Zeitouni, \emph{Large deviations techniques and
  applications}.\hskip 1em plus 0.5em minus 0.4em\relax Berlin; New York:
  Springer-Verlag Inc, 2010.

\end{thebibliography}

\end{document}